\documentclass[11pt, draftclsnofoot, onecolumn]{IEEEtran}
\usepackage[utf8]{inputenc}  
\usepackage{url}
\usepackage{xcolor}
\usepackage{ifthen}
\usepackage{forest}
\usepackage{cite}
\usepackage[cmex10]{amsmath} 

\usepackage{amsthm,amssymb,amsfonts}
\usepackage{mathrsfs}
\usepackage{algpseudocodex}
\usepackage{algorithm}
\usepackage{multirow} 
\usepackage{multicol}
\usepackage{makecell}
\usepackage{tabularx}
\usepackage{booktabs}

\usepackage[colorlinks, linkcolor=blue, anchorcolor=blue, citecolor=blue]{hyperref}
\usepackage{enumitem}
\usepackage{braket} 
\usepackage{blkarray}

\usepackage{pgfplots}
\pgfplotsset{compat=1.12}
\usepackage{tikz}
\usetikzlibrary{arrows.meta} 
\usetikzlibrary{positioning} 
\usepackage{tikz-cd}
\usepackage{verbatim} 
\usetikzlibrary{shapes,arrows, arrows.meta,positioning, decorations.pathreplacing,decorations.markings,decorations.pathmorphing,calc,shapes.geometric,fit,math}  

\usepackage{mathtools}

\theoremstyle{definition}
\newtheorem{theorem}{Theorem}
\newtheorem{assumption}{Assumption}

\newtheorem{lemma}{Lemma}

\newtheorem{proposition}{Proposition}

\newtheorem{example}{Example}
\newtheorem{remark}{Remark}
\newtheorem{condition}{Condition}

\newtheorem{definition}{Definition}
\newtheorem{problem}{Problem}

\newcommand{\calE}{\mathcal{E}}
\newcommand{\calF}{\mathcal{F}}

\newcommand{\calH}{\mathcal{H}}
\newcommand{\calI}{\mathcal{I}}

\newcommand{\calK}{\mathcal{K}}

\newcommand{\calN}{\mathcal{N}}

\newcommand{\calQ}{\mathcal{Q}}

\newcommand{\calX}{\mathcal{X}}

\newcommand{\calZ}{\mathcal{Z}}

\newcommand{\bfy}{\mathbf{y}}

\newcommand{\bfV}{\mathbf{V}}

\newcommand{\bfx}{\mathbf{x}}

\newcommand{\bfz}{\mathbf{z}}

\newcommand{\bfv}{\mathbf{v}}
\newcommand{\bfP}{\mathbf{P}} 
\newcommand{\bfQ}{\mathbf{Q}}

\newcommand{\bbF}{\mathbb{F}}
\newcommand{\bbN}{\mathbb{N}}

\newcommand{\Alice}[1]{\text{Alice}_{#1}}

\newcommand{\Tr}{\textrm{Tr}}

\begin{document}

\title{Precoding-based protocols for entanglement assisted linear computation over a quantum many-to-one network}   

\author{ 
  Ruoyu Meng \IEEEmembership{Graduate Student Member, IEEE}, and Aditya Ramamoorthy \IEEEmembership{Senior Member, IEEE} 

    \thanks{The material in this work has appeared in part at the 2026 IEEE International Symposium on Information Theory, Guangzhou, China. The authors are with the Department of Electrical and Computer Engineering, Iowa State University, Ames, IA, U.S.A. (Email:\{rmeng, adityar\}@iastate.edu).}
}

\maketitle 

\begin{abstract} 

In this work, we consider the problem of computing a linear combination over a noiseless quantum many-to-one network. There are $k$ senders, Alice$_1$, $\dots$, Alice$_k$, and a single receiver, Bob. Each Alice$_i$ has a data vector $W_i \in \bbF^{m_i}$ ($\bbF$ is a finite field). Bob wants to compute a linear combination (LC) $Y = \bfV_1 W_1 + \bfV_2 W_2 + \dots +\bfV_k W_k\in\bbF^m$, where $\bfV_i$ is a $m \times m_i$ matrix over $\bbF$.  The senders transmit quantum states to Bob through a noiseless many-to-one quantum network, but they are not allowed to communicate with each other. The senders share entanglement amongst themselves (Bob does not share the entanglement). They encode their classical information $W_i, i = 1, \dots, k$ into their local subsystems and transmit it to Bob such that Bob can recover $Y$, upon subsequent quantum measurement and post-processing. The N-Sum Box protocol proposed in Allaix et al. '25, considers this problem under certain constraints on the linear combination and the distribution of data vectors among the senders.

Our work presents protocols that support the computation of a more general class of linear transformations by giving the senders access to more qudits and also allowing them to judiciously precode their input symbols. The communication cost of our schemes is at most the cost of the best-known prior results in this area, and strictly lower in certain cases. Finally, we demonstrate that the communication cost is subadditive with respect to the instances. Specifically, we find two different linear functions such that the total cost of computing them individually is strictly larger than the cost of computing them jointly.

\end{abstract}

\section{Introduction} 

Distributed computation is a critical component of various technologies, e.g., scientific computing, training of deep neural networks, MapReduce \cite{DeanG08}, and sensor networks, among others. It also plays a key role within communication tasks over wireless channels (exploiting the inherent superposition of the medium) \cite{wilson_etal10, NazerG07}. 

Any form of distributed computation typically requires both computation resources and communication resources. For instance, within MapReduce \cite{DeanG08}, it is well recognized that the overall job computation time includes the computation time of the Map and the Reduce steps and the communication time of the intermediate Shuffle phase which involves network communication between the different workers. Likewise, the distributed training of neural networks also involves computations at the workers and communication either between the workers or between the workers and a designated parameter server \cite{langer2020distributed}. Broadly speaking, techniques for improving the resource-efficiency of distributed computation are of great interest. 

Ideas from coding theory have proven successful in addressing problems of these types, e.g., within MapReduce, these ideas \cite{LiMYA18, kostasR20} allow for a reduction in the induced network traffic in the Shuffle phase. Over a multiple-access channel, the work of \cite{wilson_etal10, NazerG07} shows that structured coding can exploit signal superposition to decode a desired function (e.g., a sum) directly, often outperforming decode-then-compute. Within distributed computation, the class of linear (or bilinear) functions is a large and important class that is often of interest, e.g., in gradient coding \cite{tandon_gradient}, the goal is to recover the sum of partial gradients. Distributed matrix-vector and matrix-matrix multiplication are examples of problems which can be posed as linear (or bilinear) problems \cite{RamDT20}. Likewise, computation of linear functions over finite fields using network coding has been considered in \cite{ramamoorthyL13, tripathyR18}. 



Recent progress in network quantum information theory has been pushing quantum communications beyond point-to-point links toward networked operation. Quantum teleportation and entanglement serve as central resources for connectivity and coordination \cite{CacciapuotiCVH20}, and networking challenges for distributed quantum computing have been articulated in the surveys \cite{CacciapuotiCTCGB20, MarcelloMDJAA24}. Moreover, quantum network coding has been developed and experimentally validated as a mechanism to improve network efficiency \cite{HayashiINRY07, Lu2019}.

A natural question in this domain is how quantum resources can improve the efficiency of distributed computation. The well-known superdense coding technique \cite{wilde_17} already demonstrates that for point-to-point transmission (Alice to Bob), quantum entanglement allows us to double the number of classical bits from Alice to Bob. In the same point-to-point setting, our prior work \cite{mengR25} shows that quantum protocols can be exponentially better than classical protocols when Bob has side information correlated with Alice and wants to compute a function of Alice's message and his own side information.
The work of \cite{LiuLTL02, ShadmanKBM12} shows that using shared entanglement, distributed superdense coding can reduce the communication cost within certain information processing tasks. Shared quantum entanglement was shown to improve the performance of private information retrieval (PIR) problems \cite{SongH_collude_21}. More recently, the work of \cite{N_Sum_Box_JafarYao25} considers a problem of distributed linear computation over quantum many-to-one networks, and demonstrates the advantages of protocols that exploit quantum entanglement between the distributed senders.


\subsection{Motivation and Background}

The motivation of our work originates from the $N$-Sum Box protocol \cite{N_Sum_Box_JafarYao25}, which is designed for solving a distributed superdense coding problem. This problem is formulated as follows. 
Let $\bbF_q$ be the finite field of order $q$. Suppose there are $N$ senders (Alice$_1$, $\dots$, Alice$_N$) and a receiver (Bob). There is a noiseless $N$-to-$1$ quantum  channel from the senders to the receiver. For $i\in[N]$ ($[N]$ denotes the set $\{1, \dots, N\}$), Alice$_i$ has two inputs $x_{i},z_{i}\in\bbF_q$. Denote $\bfx=[x_1,\dots,x_N]^T,\bfz=[z_1,\dots,z_N]^T$. Bob wants to compute $\bfy = M_x \bfx + M_z \bfz$ where $M_x,M_z$ are $\kappa \times N$ matrices over $\bbF_q$. The senders share a joint quantum system $\calQ=Q_1\dots Q_N$ where Alice$_i$ has the $q$-dimensional subsystem $Q_i$. For each $i\in [N]$, Alice$_i$ encodes $[x_i,z_i]$ using her subsystem $Q_i$ and then transmits $Q_i$ to Bob through the noiseless quantum channel. 

The original $N$-Sum Box protocol of~\cite{N_Sum_Box_JafarYao25} gives a stabilizer-based construction (under certain conditions on $[M_x ~|~ M_z]$) when $\kappa = N$
%
whereby Bob obtains a length-$N$ vector $\bfy$ over $\bbF_q$ from $N$ transmitted qudits. The same work also gives two extensions that are important for us.  First, the $N$-Sum Box allows for a locally invertible transform (LIT) freedom: one may locally rescale one side of the input pair by an invertible diagonal matrix before applying the Sum Box construction.  Second, a $\kappa$-output computation with $\kappa\le N$ can be reduced to an $N$-output Sum Box by appending appropriately chosen $N-\kappa$ auxiliary output coordinates. Bob discards the auxiliary values after computing the $N$ outputs.

In the operational form used in this paper, the $(\kappa,N)$-Sum Box primitive computes
\[
    \bfy=M_x\bfx+M_z\bfz,\qquad
    M_x,M_z\in\bbF_q^{\kappa\times N},
\]
whenever matrices $M_x,M_z$ satisfy the rank condition
\[
    \operatorname{rank}([M_x|M_z])=\kappa,\qquad \kappa\le N,
\]
and the strong self-orthogonality condition
\[
    \Omega(M_x,M_z):=M_xM_z^T-M_zM_x^T=0.
\]
We call this a $(\kappa,N)$-Sum Box protocol.  If $\kappa=N$, then this reduces to the $N$-output Sum Box setting. 

The SSO condition restricts the class of linear transformations that Bob can compute by the $N$-Sum Box protocol. Thus, a natural question of interest is how one can compute the linear transformations when the SSO condition does not hold. This issue was investigated in part in the work of \cite{HuUlukus25}, which also considered a scenario where multiple instances of the linear transformation need to be computed in a block. In this work, we investigate several facets of this problem and present our findings. 
\subsection{Illustrative Examples}
\begin{example}\label{example:eg1}
Consider two senders Alice$_i$ with inputs $x_i,z_i \in \bbF_2$ for $i=1,2$ and suppose that Bob wishes to recover
\begin{align*}
    \begin{bmatrix}
        y_1\\y_2
    \end{bmatrix} =  
        \underbrace{\begin{bmatrix}
            1&0 \\
            0&1
        \end{bmatrix}}_{M_x}\begin{bmatrix}
        x_1\\x_2 
    \end{bmatrix} +\underbrace{\begin{bmatrix}
            0&1\\
            1&0
        \end{bmatrix}}_{M_z}\begin{bmatrix}
        z_1\\z_2
    \end{bmatrix}.
\end{align*}
As $M_x = I_2$ and $M_z = M_z^T$ we can see that $M_xM_z^T-M_zM_x^T = 0$ and it is easy to see that $\text{rank}([M_x|M_z]) = 2$.
It follows that this computation can be performed by using a 2-Sum Box protocol.
\end{example}

\begin{example}
Now suppose there are four senders such that Alice$_i$ (for $i = 1, \dots, 4$) possesses $(x_i, z_i) \in \bbF_2$. Bob wants to compute 
\begin{align}
    \begin{bmatrix}
        y_1\\y_2\\y_3\\y_4
    \end{bmatrix} =  
        \underbrace{\begin{bmatrix}
            1&0&0&0 \\
            0&1&0&0\\
            0&0&1&0\\
            0&0&0&1
        \end{bmatrix}}_{M_x}\begin{bmatrix}
        x_1\\x_2\\x_3\\x_4 
    \end{bmatrix} +\underbrace{\begin{bmatrix}
            0&1&1&1\\
            0&1&0&1\\
            1&1&0&1\\
            1&1&0&1
        \end{bmatrix}}_{M_z}\begin{bmatrix}
        z_1\\z_2\\z_3\\z_4 
    \end{bmatrix}. \label{eq:eg_2}
\end{align}
Here $M_x = I_4$ and $M_z \neq M_z^T$, so that $M_xM_z^T - M_zM_x^T \neq 0$, i.e., the SSO condition is not satisfied and hence the 4-Sum Box protocol does not apply.
However, we now show that if one sender is given additional qubits, then we can in fact compute the required linear transformation. Towards this end,  we note that \eqref{eq:eg_2} can be written
\begin{align*} 
    \begin{bmatrix}
        y_1\\y_2\\y_3\\y_4
    \end{bmatrix}
    &=\underbrace{\begin{bmatrix}
            1&0&0&0&1 \\
            0&1&0&0&0\\
            0&0&1&0&0\\
            0&0&0&1&0
        \end{bmatrix}}_{M_x'}\underbrace{\begin{bmatrix}
        x_1\\x_2\\x_3\\x_4 \\\textcolor{red}{z_3}
    \end{bmatrix}}_{\bfx'}+\underbrace{\begin{bmatrix}
            0&1&1&1&0\\
            0&1&1&1&1\\
            1&1&0&1&0\\
            1&1&1&1&0
        \end{bmatrix}}_{M_z'}\underbrace{\begin{bmatrix}
        z_1\\z_2\\ \textcolor{red}{0} \\z_4 \\\textcolor{red}{0}
    \end{bmatrix}}_{\bfz'}.
\end{align*}
It can be verified that $\text{rank}([M_x'| M_z']) = 4 \le 5$ (setting $\kappa=4,N=5$) and
\begin{align*}
    &M_x'(M_z')^T - M_z'(M_x')^T\\
    =&\begin{bmatrix}
            1&0&0&0&1 \\
            0&1&0&0&0\\
            0&0&1&0&0\\
            0&0&0&1&0
        \end{bmatrix}\begin{bmatrix}
            0&0&1&1\\
            1&1&1&1\\
            1&1&0&1\\
            1&1&1&1\\
            0&1&0&0
        \end{bmatrix} -\begin{bmatrix}
            0&1&1&1&0\\
            0&1&1&1&1\\
            1&1&0&1&0\\
            1&1&1&1&0 
        \end{bmatrix}\begin{bmatrix}
            1&0&0&0 \\
            0&1&0&0\\
            0&0&1&0\\
            0&0&0&1 \\
            1&0&0&0
        \end{bmatrix}\\
        =& \begin{bmatrix}
            0&1&1&1\\
            1&1&1&1\\
            1&1&0&1\\
            1&1&1&1\\
        \end{bmatrix} -  \begin{bmatrix}
            0&1&1&1\\
            1&1&1&1\\
            1&1&0&1\\
            1&1&1&1\\
        \end{bmatrix}=0.
\end{align*}
In this way a $(4,5)$-Sum Box protocol can be employed for $(M_x',M_z')$ which in turn can be used to compute $y=M_x\bfx+M_z\bfz$. Consider five entangled qubits $Q_i, i = 1,\dots, 5$, where Alice$_1$, Alice$_2$ and Alice$_4$ still have $Q_1,Q_2$ and $Q_4$ respectively, but Alice$_3$ has $Q_3$ and $Q_5$. Alice$_i$ encodes $[x_i,z_i]$ to $Q_i$ for $i\in\{1,2,4\}$. For Alice$_3$, she encodes $[x_3,0]$ to $Q_3$ and $[z_3,0]$ to $Q_5$. Then, they run the $(4,5)$-Sum Box protocol and Bob can obtain the desired result.

\end{example}
\begin{example}\label{example:motivating_example3}
This example is inspired by the work of \cite{Summation_JafarYao25}. We now consider two senders, Alice$_i$, each with a single bit $x_i \in \bbF_2$, $i=1,2$ and suppose that Bob wants to compute
\begin{align*}
    y = x_1+x_2.
\end{align*}
For a single instance, the trivial one-shot protocol has Alice$_1$ send $x_1$ and Alice$_2$ send $x_2$, using two qudits in total. A 2-Sum Box protocol, with $\kappa=1,N=2$, is also possible, but it has the same one-shot communication cost. The point of the next construction is that coding across two instances reduces the normalized cost to one qudit per instance.
\begin{align*}
    \begin{bmatrix}
        y^{(1)}\\y^{(2)}
    \end{bmatrix} = \begin{bmatrix}
        x_1^{(1)}+x_2^{(1)}\\x_1^{(2)}+x_2^{(2)}
    \end{bmatrix} = \begin{bmatrix}
        1 & 0 \\ 0 & 1
    \end{bmatrix} \begin{bmatrix}
        x_1^{(1)}\\x_2^{(2)}
    \end{bmatrix} +  \begin{bmatrix}
        0 & 1 \\ 1 & 0
    \end{bmatrix} \begin{bmatrix}
      x_1^{(2)} \\  x_2^{(1)}  
    \end{bmatrix},
\end{align*}
where the superscript denotes the instance index, and Alice$_i$ has $\begin{bmatrix}
     x_i^{(1)}\\ x_i^{(2)}
\end{bmatrix}$ for $i=1,2$. 
Then, we are back in the setting of Example \ref{example:eg1}, and can thus use a 2-Sum Box for the desired computation. Therefore, the total number of qubits transmitted is two, i.e. one qubit per instance. This example shows that considering multiple instances may reduce the communication cost per instance. 
\end{example}

\subsection{Related Work}

Our model is related to, but distinct from, the standard quantum multiple-access channel literature. In a standard quantum multiple-access channel, several transmitters communicate through a common quantum channel, and the main objective is typically to characterize achievable rate regions for message transmission. In contrast, we consider a noiseless quantum many-to-one network with entanglement shared among the senders, and the objective is exact linear function computation at the receiver with minimum qudit communication. Capacity regions for classical-quantum and quantum multiple-access channels were studied in \cite{Winter01,YardHD08}. Entanglement-assisted variants were considered in \cite{ShiHGZZ21,HsiehDW08}, where single-letter characterizations were obtained in certain cases. 

A related line of work on entanglement-assisted many-to-one quantum communication uses the stabilizer formalism~\cite{CalderbankRSS97, AshikhminK01, KetkarKKS06, Gottesman97}, which originally was used for  quantum error correction and fault-tolerant quantum computing~\cite{Gottesman97, Calderbank96, BennettDSJW96}. Private Information Retrieval (PIR)\cite{ChorKGS98} enables a user to retrieve a desired file from distributed storage without revealing its identity to any individual server. In Quantum PIR (QPIR), the answers are quantum systems and the servers may share prior entanglement. The work of \cite{SongH_multiple_21} showed that, with entanglement across databases, the QPIR capacity for replicated storage collapses to 1, independent of the number of servers and files, and a rate-one protocol is achievable already with two servers, with a strong converse bound. Other variants of QPIR were considered in \cite{SongH_QPIR_SYMMETRIC21, SongH_collude_21,  AllaixSHPHH22}. 

Entanglement-assisted many-to-one quantum communication naturally enables linear computation at the receiver. This capability appears implicitly in the QPIR literature \cite{SongH_multiple_21, SongH_collude_21} via the 2-Sum Box. The study of linear computation over a noiseless quantum many-to-one network with entangled transmitters has recently led to several capacity characterizations and coding constructions\cite{N_Sum_Box_JafarYao25, HuUlukus25, Inverted_3_Sum_Box_JafarYao25}, which are the works closest to ours. The  $N$-Sum Box protocol \cite{N_Sum_Box_JafarYao25} generalizes the 2-Sum Box instance in \cite{SongH_multiple_21, SongH_collude_21}; it provides a black-box linear-computation primitive over many-to-one quantum networks. Specifically, the $N$-Sum Box  can be viewed as a coding scheme for the set of functions restricted by the SSO condition.  The summation problem over entanglement-assisted many-to-one quantum networks, referred to as $\Sigma$-QMAC in \cite{Summation_JafarYao25}, has been characterized for arbitrary replication and entanglement patterns; this is a specific class of functions in our setting. 

In this work, we consider general linear combination problems.  The vector linear computation problem has been fully solved for  $k=3$ senders\cite{Inverted_3_Sum_Box_JafarYao25}. However, the computational complexity of their method appears to grow exponentially fast in $k$. 
In \cite{HuUlukus25}, an entanglement-assisted stabilizer-based scheme has been proposed for arbitrary linear computations, achieving capacity in certain cases. However, their coding scheme contains an optimal precoding problem, which closely resembles MinRank (NP-hard \cite{FaugereJEMSP10}), but no algorithm is known for the optimal precoding problem. 


\subsection{Contribution and Organization}
We study the entanglement-assisted linear combination  problem over a $k$-to-$1$ noiseless quantum network, where the senders may share arbitrary entanglement and the communication cost is the total number of transmitted $q$-dimensional quantum systems (qudits). Each Alice$_i$ has a data vector $W_i\in\bbF_q^{m_i}$ and Bob wants to compute $\bfy = \sum_{i\in[k]} \bfV_iW_i\in\bbF_q^m$ for some matrices $\bfV_i\in\bbF_q^{m\times m_i}$. Our first contribution is the design of three new achievable schemes, each operating under a distinct set of structural conditions on $(\bfV_1,\dots,\bfV_k)$, thereby relaxing restrictions required by \cite{N_Sum_Box_JafarYao25}. Second, we provide provable comparisons with the best-known prior result of \cite{HuUlukus25}: one of our schemes has no larger total cost than the scheme of \cite{HuUlukus25} (and is strictly better for some instances), another yields strict improvements on a specific subclass, and a third scheme is shown to be information-theoretically optimal in its stated regime via a matching converse. Third, we demonstrate a genuinely quantum subadditivity (``bundling'') phenomenon, showing that joint coding across multiple linear combination problems  can strictly reduce the total qudit cost, while proving that the classical counterpart is additive. 

This paper is organized as follows. The problem formulation appears in Section \ref{sec:prob_formulation}. In Section \ref{sec:statement_main_contributions}, we formally state our main results and discuss their consequences. Section \ref{sec:preliminaries} overviews preliminary ideas and Section \ref{sec:main_contribution_formal_statement_and_proof} discusses the proofs of main results. Section \ref{sec:comparison_with_prior_works} provides quantitative comparisons with prior work, and Section \ref{sec:conclusions} concludes the paper with a discussion of future work.


\section{Problem Formulation}\label{sec:prob_formulation}

A linear combination (LC) problem can be represented by a tuple $(\bbF_q, k, \bfV_1,\dots,\bfV_k)$, where there are $k$ senders  (denoted as Alice$_i$, $i\in [k]$). For each $i\in   [k]$, $\bfV_i$ is an $m\times m_i$ matrix with elements in $\bbF_q$. Alice$_i$ has data vector $W_i$, which is an arbitrary vector of length $m_i$ over $\bbF_q$. The receiver, Bob, wants to recover the following linear function of these data vectors.
\begin{align*}
    Y = \bfV_1 W_1+\dots + \bfV_k W_k.
\end{align*}
The computation is assumed to occur over multiple instances. Suppose that $L$ instances of this function need to be computed by Bob, i.e., he wants to recover 
\begin{equation}
    \begin{split}
        Y^{(1)} =& \bfV_1 W_1^{(1)}+\dots + \bfV_k W_k^{(1)},\\
        \vdots&\\
        Y^{(L)} =& \bfV_1 W_1^{(L)}+\dots + \bfV_k W_k^{(L)}.
    \end{split}
\end{equation}
We represent this compactly as
\begin{equation}
    Y^{[L]} = \bfV_1^{[L]} W_1^{[L]}+\dots + \bfV_k^{[L]} W_k^{[L]}
\end{equation}
where
\begin{align*}
   Y^{[L]} = \begin{bmatrix}
        Y^{(1)}\\
        \vdots\\
        Y^{(L)}
    \end{bmatrix} \text{ and }\forall i\in[k],\,\bfV^{[L]}_i =\underbrace{\begin{bmatrix}
        \bfV_i&&\\
        &\ddots &\\
        &&\bfV_i
    \end{bmatrix} }_{L\text{ times}},\,W_i^{[L]}=
    \begin{bmatrix}
        W_i^{(1)}\\
        \vdots\\
        W_i^{(L)}
    \end{bmatrix}.
\end{align*} 
In the above expression (and throughout the manuscript), unspecified entries in matrices are assumed to be zero. We now state an assumption on the $\bfV_i$ matrices and show that it holds without loss of generality.
\begin{assumption}\label{assumption:lin_ind_assumption}
\begin{enumerate}[label = (\arabic*)]
    \item For each $i\in[k]$, the columns  of $\bfV_i$ are all linearly independent, i.e., $\bfV_i$ has full column rank.
    \item $\text{rank}([\bfV_1|\bfV_2|\dots|\bfV_k]) = m$, i.e., $[\bfV_1|\bfV_2|\dots|\bfV_k]$ has full row rank.   
\end{enumerate}
\end{assumption}

\begin{remark}[Justification of Assumption~\ref{assumption:lin_ind_assumption}]\label{remark:assumption_proof} 
We may always reduce an LC instance to satisfy Assumption~\ref{assumption:lin_ind_assumption} without changing the optimal communication cost.

\smallskip
\noindent\textbf{(1) Full column rank of each $\bfV_i$.}
If $\bfV_i$ is not full column rank, let $r_i=\text{rank}(\bfV_i)$.
Choose a submatrix $\bfV_i'\in\bbF_q^{m\times r_i}$ whose columns form a basis of the columns of $\bfV_i$.
Then there exists a matrix $P_i\in\bbF_q^{r_i\times m_i}$ such that $\bfV_i=\bfV_i'P_i$.
Hence,
\[
\bfV_i W_i=\bfV_i'(P_i W_i).
\]
Thus, Alice$_i$ can locally precode $W_i' := P_i W_i$ and we obtain an equivalent LC instance with matrix $\bfV_i'$ and data vector $W_i'$ at Alice$_i$. Conversely, since $P_i$ has full row rank, Alice$_i$ can lift any reduced input W$_i'$ to a preimage under $P_i$ and then run the original coding scheme.

\smallskip
\noindent\textbf{(2) Full row rank of $[\bfV_1|\cdots|\bfV_k]$.}
Suppose $\text{rank}([\bfV_1|\cdots|\bfV_k]) = r < m$. Then, there exist matrices
$D\in\mathbb{F}_q^{m\times r}$ with full column rank  and $\bfV_i'\in\mathbb{F}_q^{r\times m_i}$
such that $[\bfV_1|\cdots|\bfV_k] = D[\bfV_1'|\cdots|\bfV_k']$.
Let $Y' := \sum_{i\in[k]}\bfV_i' W_i \in \mathbb{F}_q^{r}$.
Any coding scheme that enables Bob to recover $Y'$ also enables him to recover
$Y = \sum_{i\in[k]}\bfV_i W_i$ via post-processing by Bob to obtain $Y = DY'$.
Hence,  we may assume $\mathrm{rank}([\bfV_1|\cdots|\bfV_k]) = m$. Conversely, since $D$ has full column rank, Bob can recover $Y '$ from $Y=DY '$ by applying a left inverse of $D$.

Hence, both reductions preserve the optimal communication cost.
\end{remark}


\subsection{Communication Model and Coding Schemes}\label{subsec:coding_scheme}

 

We next describe the communication model for an LC problem. The senders are not allowed to communicate with each other. Communication takes place over a noiseless many-to-one network: each Alice$_i$ has a one-way link to Bob and
transmits a system only to Bob.

We consider two variants of this model. In the quantum variant, the senders may share an arbitrary joint quantum state before their inputs are realized. 
Bob does not share entanglement with the senders. After observing her data vector, each Alice$_i$ applies a local input-dependent unitary to her subsystem and transmits the resulting quantum system to Bob through her noiseless quantum link. We refer to this as the \emph{noiseless quantum many-to-one network model}, or simply the \emph{quantum many-to-one model} when the noiseless nature of the links is clear.

In the classical variant, there is no entanglement among the senders, and each Alice$_i$ transmits a string of $q$-ary symbols to Bob through a noiseless classical link. We refer to this as the \emph{classical many-to-one model}.  

Given an LC $(\bbF_q, k, \bfV_1,\dots,\bfV_k)$ with the noiseless quantum many-to-one model, the coding scheme is specified as follows. 
\begin{itemize}
    \item The number of instances supported by the scheme, denoted by an integer
    $L\ge 1$.

    \item A joint quantum system $\calQ=\calQ_1\cdots\calQ_k$ with initial state
    $\rho_{\mathrm{init}}$. Each subsystem $\calQ_i$ is held by Alice$_i$ and may
    consist of multiple $q$-dimensional qudits. We denote by $\delta_i$ the
    number of qudits in $\calQ_i$.

    \item A collection of local unitary encoding maps
    \[
        \left\{\calE_i^{(w_i^{[L]})}: i\in[k],\,
        w_i^{[L]}\in\bbF_q^{m_iL}\right\},
    \]
    where
    \[
        \calE_i^{(w_i^{[L]})}(\rho)
        =
        U_i^{(w_i^{[L]})}\rho
        \left(U_i^{(w_i^{[L]})}\right)^\dagger
    \]
    for some unitary $U_i^{(w_i^{[L]})}$ acting on $\calQ_i$. Note that $U_i^{(w_i^{[L]})}$ depends on the data vector $w_i^{[L]}$.

    \item A POVM \cite{wilde_17} $\{\Lambda_u:u\in\calI\}$ for some finite index set $\calI$,
    representing Bob's quantum measurement after receiving all transmitted
    systems.

    \item A deterministic post-processing map
    $g:\calI\to\bbF_q^{mL}$ used by Bob to produce the final output.
\end{itemize} 

The coding scheme consists of the following three stages.

\begin{enumerate}
    \item \textbf{Entanglement distribution:}
    The joint quantum system $\calQ=\calQ_1\cdots\calQ_k$ with initial state
    $\rho_{\mathrm{init}}$ is distributed so that Alice$_i$ holds subsystem
    $\calQ_i$ for each $i\in[k]$.

    \item \textbf{Encoding and transmission:}
    Suppose the realization of the $L$-instance data vectors is
    \[
        (W_1^{[L]},\dots,W_k^{[L]})
        =
        (w_1^{[L]},\dots,w_k^{[L]}).
    \]
    For each $i\in[k]$, Alice$_i$ applies the local unitary encoding map
    $\calE_i^{(w_i^{[L]})}$ to her subsystem $\calQ_i$. The joint state after
    encoding is
    \begin{equation}
        \rho^{(\mathbf{w})}
        =
        \calE_1^{(w_1^{[L]})}\otimes\cdots\otimes
        \calE_k^{(w_k^{[L]})}
        \left(\rho_{\mathrm{init}}\right).
        \label{eq:encoded_state}
    \end{equation}
    Alice$_i$ then sends her subsystem to Bob through her noiseless quantum link.
    The communication cost of Alice$_i$ is $\delta_i$ qudits.

    \item \textbf{Decoding and post-processing:}
    After receiving all transmitted systems, Bob performs the POVM
    $\{\Lambda_u:u\in\calI\}$ on $\rho^{(\mathbf{w})}$ and obtains an outcome
    $u$ with probability
    $\Tr(\Lambda_u\rho^{(\mathbf{w})})$. He then applies the deterministic
    post-processing map $g:\calI\to\bbF_q^{mL}$ and outputs
    $\widehat{Y}^{[L]}:=g(u)$.
\end{enumerate}
We require zero-error computation, i.e., for every
$\mathbf{w}=(w_1^{[L]},\dots,w_k^{[L]})$,
\begin{equation}
    \Pr\left[
        \widehat{Y}^{[L]}
        =
        \sum_{i\in[k]}\bfV_i^{[L]}w_i^{[L]}
        \,\middle|\,
        (W_1^{[L]},\dots,W_k^{[L]})=\mathbf{w}
    \right]
    =1.
\end{equation}

\begin{figure}[t!]
    \centering
    \begin{tikzpicture}[
    font=\small,
    >=Latex,
    dots/.style = {draw=none, fill=none, inner sep = 0pt, outer sep = 0pt, font=\Large},
    server/.style={draw, rounded corners=2pt, fill=gray!15,
                   minimum width=1.6cm, minimum height=1.1cm, align=center},
    ent/.style={draw, ellipse, dashed, inner sep=6pt, align=center},
    qsnake/.style={blue, line width=0.9pt, decorate,
                   decoration={snake, amplitude=1.2mm, segment length=4mm}},
    qsnakeStealth/.style={
    qsnake,
    postaction={
      decorate,
      decoration={
        markings,
        mark=at position 1 with {\arrow{Stealth[length=2.2mm]}}
      }}},
    qarrow/.style={blue, line width=0.9pt, -{Latex[length=2.2mm]}},
    dashedbox/.style={draw, dashed, rounded corners=2pt, inner sep=7pt, fill=purple!6}
]

\node[ent] (E) at (0,3) {Shared-Entanglement};

\node[server] (S1) at (-2.5,1.2) {$W_1^{[L]}$\\Alice$_1$};
\node[server] (S2) at (-0.84,1.2) {$W_2^{[L]}$\\Alice$_2$};
\node[dots] (S3) at (0.84,1.2) {$\dots$};
\node[server] (S4) at (2.5,1.2) {$W_k^{[L]}$\\Alice$_k$};

\draw[qsnakeStealth] (E.south west) to[out=210,in=90] (S1.north);
\draw[qsnakeStealth] (E.south)      to[out=270,in=90] (S2.north);
\draw[qsnakeStealth] (E.south east) to[out=330,in=90] (S4.north);

\node[dashedbox, minimum width=7cm, minimum height=1.6cm, align=left] (Ubox) at (0,-1) {};
\node[anchor=west] at ($(Ubox.north west)+(0.2,-0.2)$) {\textbf{Bob:}};

\node[draw, minimum width=1cm, minimum height=0.5cm, fill=white] (meas) at (0,-0.8) {Decoding and post-processing}; 

\draw[qsnakeStealth] (S1.south) to[bend left=18]
  node[pos=0.5, left, xshift=-4pt, yshift=-3pt] {$Q_1$} ($(meas.north)+(-0.5,0)$);

\draw[qsnakeStealth] (S2.south) to[bend left=10]
  node[pos=0.5, right, xshift=1pt, yshift=2pt] {$Q_2$} ($(meas.north)+(-0.2,0)$);

\draw[qsnakeStealth] (S4.south) to[bend right=18]
  node[pos=0.5, right, xshift=4pt, yshift=-3pt] {$Q_k$} ($(meas.north)+(0.5,0)$);

\node[align=left] at (0,-1.4) {$
\begin{aligned}
\text{Output: }Y^{[L]} = \bfV_1^{[L]} W_1^{[L]}+\dots + \bfV_k^{[L]} W_k^{[L]}
\end{aligned}
$};


\end{tikzpicture}
    \caption{A schematic diagram for a coding scheme under the noiseless quantum many-to-one model. Each Alice$_i$ encodes $W_i^{[L]}$ into their local system $\calQ_i$, which is transmitted to Bob. Bob applies decoding and post-processing to recover $Y^{[L]} = \bfV_1^{[L]} W_1^{[L]}+\dots + \bfV_k^{[L]} W_k^{[L]}$.}
    \label{fig:placeholder}
\end{figure}


For comparison, we also define the classical many-to-one model. In this model, there is no entanglement among the senders. For each $i\in[k]$, Alice$_i$ uses an encoding function $f_i:\bbF_q^{m_iL}\to\bbF_q^{\delta_i}$  and sends the resulting length-$\delta_i$ string of $q$-ary symbols to Bob through her noiseless classical link. Bob applies a decoding map $D:\bbF_q^{\delta_1}\times\cdots\times\bbF_q^{\delta_k}     \to \bbF_q^{mL}$  to produce $\widehat{Y}^{[L]}$. We again require zero-error computation, i.e., for every input realization, $\widehat{Y}^{[L]}
    =
    \sum_{i\in[k]}\bfV_i^{[L]}w_i^{[L]}.$


\subsection{Cost Tuple and (Optimal) Total Cost} 
Given a feasible coding scheme under a fixed model, we define the normalized cost tuple $\Delta$ and the total cost $\Gamma=\Gamma(\Delta)$ as
\begin{align*}
\Delta := \left(\frac{\delta_1}{L},\dots,\frac{\delta_k}{L}\right), \text{~and~} \Gamma(\Delta) := \sum_{i\in[k]}\frac{\delta_i}{L}.
\end{align*}
For a model $\mathsf{M}\in\{\mathsf{Q},\mathsf{C}\}$, where $\mathsf{Q}$ denotes
the quantum many-to-one model and $\mathsf{C}$ denotes the classical many-to-one
model, let $\mathbb{S}^{\mathsf{M}}_L$ denote the set of all feasible
zero-error $L$-instance coding schemes under model $\mathsf{M}$. The optimal
$L$-instance total cost under model $\mathsf{M}$ is
\[
    \Gamma^{*,\mathsf{M}}_L
    :=
    \inf_{\mathcal{S}\in\mathbb{S}^{\mathsf{M}}_L}
    \sum_{i\in[k]}\frac{\delta_i(\mathcal{S})}{L},
\]
where $\delta_i(\mathcal{S})$ is the number of qudits transmitted by Alice$_i$ in the quantum model or the number of  \(q\)-ary symbols transmitted by Alice$_i$ in the classical model. 
Finally, the optimal
asymptotic total cost is
\[
    \Gamma^{*,\mathsf{M}}
    :=
    \inf_{L\in\mathbb{N}}\Gamma^{*,\mathsf{M}}_L.
\]
When the model is clear from context, we omit the superscript $\mathsf{M}$.

\subsection{Baseline Bounds}
\label{subsec:baseline_bounds}
We record two basic bounds that will be used as benchmarks throughout the paper.
The first is a quantum converse bound from \cite{Inverted_3_Sum_Box_JafarYao25},
restated here in the terminology of the noiseless quantum many-to-one model. The
second gives the exact optimal cost under the classical many-to-one model. 
\begin{lemma} (Theorem 1, \cite{Inverted_3_Sum_Box_JafarYao25}) \label{lemma:quantum_many_to_one_lower_bound}
Let $\mathrm{LC}=(\bbF_q,k,\bfV_1,\dots,\bfV_k)$ be an LC problem satisfying Assumption \ref{assumption:lin_ind_assumption}. Under the
noiseless quantum many-to-one model, suppose there is an $L$-instance zero-error
coding scheme with normalized cost tuple
$\Delta=(\Delta_1,\dots,\Delta_k)$. Then
    \begin{align}
        \Gamma(\Delta) &\ge \text{rank}([\bfV_1|\dots|\bfV_k])=m, \text{and}\label{eq:quantum_many_to_one_lower_bound} \\
        2\sum_{i\in\calK}\Delta_i &\ge \text{rank}\left([\bfV_i]_{i\in\calK}\right) \text{ for every subset $\calK\subseteq[k]$}.
    \end{align}
\end{lemma}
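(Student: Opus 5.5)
We will prove both inequalities with one quantum information argument. The data are chosen uniformly and independently, and we bound the classical information that reaches Bob using the Holevo bound, in the form that accounts for the senders' shared entanglement. Fix an $L$-instance zero-error scheme. Let every $W_i^{[L]}$ be independent and uniform on $\bbF_q^{m_iL}$. Write $Y^{[L]}=\sum_i \bfV_i^{[L]}W_i^{[L]}$. Assumption~\ref{assumption:lin_ind_assumption}(2) says $[\bfV_1|\dots|\bfV_k]$ has full row rank, so $Y^{[L]}$ is uniform on $\bbF_q^{mL}$ and $H(Y^{[L]})=mL\log q$. Bob decodes with zero error, so data processing (Holevo) gives
\[
mL\log q = I(Y^{[L]};\widehat Y^{[L]}) \le \chi\big(\{p(\mathbf w),\rho^{(\mathbf w)}\}\big)\le \log\dim(\calQ_1\cdots\calQ_k)=\Big(\sum_i\delta_i\Big)\log q .
\]
The last inequality holds because $\chi\le S(\text{average state})\le$ the log of the dimension. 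Dividing by $L\log q$ yields \eqref{eq:quantum_many_to_one_lower_bound}. Bob holds no entanglement, so nothing doubles here.

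For the subset bound, fix $\calK\subseteq[k]$ and give Bob, as a genie, all data $W_j^{[L]}$ with $j\notin\calK$. With these known, Bob can subtract their contribution, so zero-error recovery of $Y^{[L]}$ yields $\sum_{i\in\calK}\bfV_i^{[L]}W_i^{[L]}$. By Assumption~\ref{assumption:lin_ind_assumption}(1) and uniformity, this vector has entropy $L\,\mathrm{rank}([\bfV_i]_{i\in\calK})\log q$. Now condition on a fixed value of $(W_j^{[L]})_{j\notin\calK}$. The unitaries of the senders outside $\calK$ are then fixed. The systems $\calQ_{\calK^c}$ therefore act as a fixed ancilla that was entangled with $\calQ_\calK$ in advance, and only the $\sum_{i\in\calK}\delta_i$ qudits of $\calQ_\calK$ carry the varying encoding. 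This is exactly entanglement-assisted communication. We would then apply the standard bound for that setting: the encoding acts on a system $A$ while $B$ is untouched, so
\[
\chi \le I(X;AB)\le I(X;B)+2\log\dim A = 0+2\log\dim A,
\]
using $I(X;AB)=I(X;B)+I(X;A|B)$ and $I(X;A|B)\le 2\log\dim A$. Hence
\[
L\,\mathrm{rank}([\bfV_i]_{i\in\calK})\log q\le 2\sum_{i\in\calK}\delta_i\log q .
\]
Dividing by $L\log q$ gives the claim.

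The main obstacle is the rigor of the conditioning step. We must make sure that the unitaries of senders outside $\calK$, once applied, do not change the reduced state on $\calQ_{\calK^c}$ in a way that depends on $W_\calK$. They do not, because the operations are local and their inputs are fixed by the conditioning. So $I(X;\calQ_{\calK^c})=0$ holds conditionally on the side data. We will also need to justify $I(X;A|B)\le 2\log\dim A$ carefully, via $|S(A|B)|\le\log\dim A$ applied to both conditional entropy terms. Everything else is routine entropy accounting.
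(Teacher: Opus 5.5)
Your proof is correct. The paper itself gives no proof of this lemma. It imports both inequalities verbatim as Theorem~1 of \cite{Inverted_3_Sum_Box_JafarYao25} and uses them as a black box. The only related derivation in the paper is the dimension-counting argument for $\mathrm{LC}_2$ in Example~\ref{example:subadditivity1}, which is a special case of your first bound.

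So your proposal is a self-contained replacement for the citation rather than a variant of an in-paper argument. It uses the Holevo bound for the total cost, and the standard entanglement-assisted (superdense-coding) bound $I(X;\calQ_\calK\calQ_{\calK^c})\le I(X;\calQ_{\calK^c})+2\log\dim\calQ_\calK$ for subsets. What this buys is transparency about the constants:
\begin{itemize}
\item The total bound carries no factor $2$ because Bob shares no entanglement with the senders.
\item The factor $2$ in the subset bound arises because the genie turns $\calQ_{\calK^c}$ into an input-independent, pre-entangled reference system. Your no-signalling observation is exactly the needed step: for fixed $w_{\calK^c}$, the reduced state on $\calQ_{\calK^c}$ is $U_{\calK^c}\rho_{\calK^c}U_{\calK^c}^\dagger$, independent of the data in $\calK$.
\end{itemize}

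Two small remarks. First, Assumption~\ref{assumption:lin_ind_assumption}(1) is not needed for the entropy of $\sum_{i\in\calK}\bfV_i^{[L]}W_i^{[L]}$. That entropy equals $L\,\text{rank}([\bfV_i]_{i\in\calK})\log q$ by uniformity alone, since $[\bfV_i^{[L]}]_{i\in\calK}$ is, up to row and column permutations, block diagonal with $L$ copies of $[\bfV_i]_{i\in\calK}$. Second, you should state explicitly that conditioning on $W_j^{[L]}=w_j$, $j\notin\calK$, leaves the $W_i^{[L]}$, $i\in\calK$, uniform by independence, so this entropy is unchanged by the conditioning.

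Your justification of $I(X;A|B)\le 2\log\dim A$ is the right one. It goes through $I(X;A|B)=S(A|B)-S(A|BX)$ with $|S(A|\cdot)|\le\log\dim A$, using that $X$ is classical.
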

\begin{proposition}[Classical many-to-one baseline]
\label{prop:classical_many_to_one_baseline}
Let $\mathrm{LC}=(\bbF_q,k,\bfV_1,\dots,\bfV_k)$ be an LC problem satisfying
Assumption~\ref{assumption:lin_ind_assumption}. Under the classical many-to-one model, we have
\begin{equation}\label{eq:prop:classical_many_to_one_baseline}
     \Gamma^{*,\mathsf{C}} = \sum_{i\in[k]} \text{rank}(\bfV_i) =  \sum_{i\in[k]}m_i.
\end{equation}
\end{proposition}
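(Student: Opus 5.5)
Achievability is immediate: for every $L$, Alice$_i$ sends her raw data $W_i^{[L]}$, which is $m_iL$ symbols. Bob then computes $\sum_i \bfV_i^{[L]}W_i^{[L]}$ himself. This gives $\Gamma_L^{*,\mathsf{C}}\le\sum_i m_i$ for every $L$. The equality $\text{rank}(\bfV_i)=m_i$ is exactly part (1) of Assumption~\ref{assumption:lin_ind_assumption}. What remains is the converse: every zero-error $L$-instance classical scheme must have $\delta_i\ge m_iL$ for each $i$. Summing these bounds and dividing by $L$ gives the claim, and since the bound holds for every $L$, it also holds for the infimum over $L$.

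For the converse I will argue one sender at a time with a fooling-set argument. Fix $i$ and fix the other senders' inputs arbitrarily, say $w_j^{[L]}=0$ for $j\ne i$. Then Bob's output is a deterministic function of $f_i(w_i^{[L]})$ alone, namely $D(f_1(0),\dots,f_i(w_i^{[L]}),\dots,f_k(0))$. Zero error forces this output to equal $\bfV_i^{[L]}w_i^{[L]}$. Because $\bfV_i$ has full column rank, the block-diagonal matrix $\bfV_i^{[L]}$ also has full column rank $m_iL$, so the map $w_i^{[L]}\mapsto \bfV_i^{[L]}w_i^{[L]}$ is injective on $\bbF_q^{m_iL}$. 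Two distinct inputs with equal $f_i$-values would therefore produce the same output but require different correct outputs, which is a contradiction. So $f_i$ is injective on a set of size $q^{m_iL}$, and hence $q^{\delta_i}\ge q^{m_iL}$, i.e.\ $\delta_i\ge m_iL$.

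I do not expect a real obstacle. The one point to state carefully is that the bound is genuinely per sender: the lack of cross-sender communication and of shared entanglement is what lets us freeze the other inputs and treat Bob's output as a function of $f_i$ alone. This is exactly where the quantum model differs, as Lemma~\ref{lemma:quantum_many_to_one_lower_bound} only gives the weaker factor-$2$ bound there. I will also note that the bound holds for every $L$, so no amortization over instances helps classically.
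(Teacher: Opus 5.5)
Your proposal is correct and follows the paper's proof. Achievability has each Alice$_i$ send $W_i^{[L]}$ uncoded. The converse is a per-sender counting argument giving $q^{\delta_i}\ge q^{m_iL}$ from the injectivity of $w_i^{[L]}\mapsto\bfV_i^{[L]}w_i^{[L]}$. The only difference is cosmetic: you freeze the other senders' inputs at zero, whereas the paper hands them to Bob as side information; both make Bob's output depend on Alice$_i$'s message alone.
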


\begin{proof}
Fix $i_0\in[k]$ and consider any zero-error $L$-instance classical coding scheme. Suppose that Bob is given all data vectors $\{W_i^{[L]}:i\ne i_0\}$ as side information. This can only make the decoding problem easier. Under this side information, Bob must still recover $\bfV_{i_0}^{[L]}W_{i_0}^{[L]}$  with zero error from Alice$_{i_0}$'s message.

By Assumption~\ref{assumption:lin_ind_assumption}, $\bfV_{i_0}$ has full column rank, and therefore $\bfV_{i_0}^{[L]}W_{i_0}^{[L]}$ can take $q^{m_{i_0}L}$ distinct values as $W_{i_0}^{[L]}$ ranges over $\bbF_q^{m_{i_0}L}$. Since Alice$_{i_0}$ sends a message in $\bbF_q^{\delta_{i_0}}$, zero-error recovery requires $q^{\delta_{i_0}} \ge q^{m_{i_0}L}.$ Thus $\delta_{i_0}/L\ge m_{i_0}$. Since $i_0$ was arbitrary,
\[
    \Gamma(\Delta)
    =
    \sum_{i\in[k]}\frac{\delta_i}{L}
    \ge
    \sum_{i\in[k]}m_i.
\]

For achievability, each Alice$_i$ sends $W_i^{[L]}$ to Bob using $m_iL$
$q$-ary symbols. Bob then computes
\[
    Y^{[L]}=\sum_{i\in[k]}\bfV_i^{[L]}W_i^{[L]}.
\]
This achieves normalized total cost $\sum_{i\in[k]}m_i$ and concludes the proof.
\end{proof}
\section{Statement of Results}
\label{sec:statement_main_contributions}
In this section we provide a formal statement and discussion of our main results; the proofs appear in later sections.

\subsection{Precoding-Based Linear-Combination Protocols}
Our first set of results gives achievable schemes for LC problems in which the $N$-Sum Box protocol cannot be applied directly. A typical obstruction is the failure of the strong self-orthogonality (SSO) condition defined formally below.

\begin{definition}[SSO condition matrix]
\label{def:SSO_matrix}
For matrices $M_x,M_z\in\bbF_q^{\kappa\times N}$, define
\[
    \Omega(M_x,M_z):=M_xM_z^T-M_zM_x^T.
\]
We say that the strong self-orthogonality (SSO) condition holds if
$\Omega(M_x,M_z)=0$.
\end{definition}

To overcome this obstruction, we perform encoding across multiple instances, and utilize distributed precoding at the senders. The goal of the precoding step is to transform the original LC problem into a form for which the $N$-Sum Box protocol can be applied. We begin by defining classes of LC problems. 
\begin{definition}[\text{Interval-$[0,1]$} LC]
    An LC problem $ (\bbF_q, k, \bfV_1,\dots,\bfV_k)$ is said to be Interval-$[0,1]$ if $\frac{2m}{\sum_{i\in[k]} m_i} \in[0,1]$.
\end{definition}
We also define a subclass of LC problems, which we call Restricted Interval-$[0,1]$ LC. As this definition is slightly more involved, we defer it to Definition~\ref{def:Restricted Interval-$[0,1]$} in Section \ref{sec:main_contribution_formal_statement_and_proof}.  We next define the precoding parameter used in the first achievability result.

\begin{problem}[Optimal Precoding Problem]\label{problem:precoding} Let $\text{LC}=(\bbF_q,k,\bfV_1,\dots,\bfV_k)$ be an Interval-$[0,1]$ problem. For each integer $s\ge 0$, let $\calF(s)$ be the set of tuples $(\{P_i\}_{i=2}^k,\tilde{M}_1, \tilde{M}_2,\tilde{X}_1)$ satisfying the following conditions: 
\begin{itemize}
    \item  $P_i, i = 2, \dots, k$ range over invertible matrices in $\bbF_q^{m_i\times m_i}$.
    \item $\tilde{M}_1,\tilde{M}_2$ range over matrices in $\bbF_q^{2m\times (m_1+s)}$ and  $\tilde{X}_1$ ranges over matrices in $\bbF_q^{ 2(m_1+s)\times 2m_1}$.
    \item 
The following constraints need to be satisfied.
\begin{enumerate}[label=(\alph*)]
    \item  \[\begin{bmatrix}
        \bfV_1&\\ & \bfV_1
    \end{bmatrix} = [\tilde{M}_1|\tilde{M}_2]\tilde{X}_1.  \]
    \item \[\Omega(\tilde{M}_1,\tilde{M}_2) + \sum_{i\in[k] \setminus \{1\}}  \Omega\left(\begin{bmatrix}
            \bfV_iP_i\\ 0
        \end{bmatrix}, \begin{bmatrix}
            0 \\ \bfV_i
        \end{bmatrix} \right) =0.\]
\end{enumerate}
\end{itemize}    

We consider the following optimization problem.
\[
\textbf{Prob}_1(\text{LC}) := \inf\{\, s\in \{0,1,2, \dots\}:\ \mathcal{F}(s)\neq\emptyset \,\}.
\]
As we will see, $\textbf{Prob}_1(\text{LC})$ is the smallest value of $s$ for which our
precoding problem is feasible. 
\end{problem}

\begin{theorem}\label{theorem:first_main_result_informal} Suppose that the LC problem  satisfies  Assumption~\ref{assumption:lin_ind_assumption}.
\begin{itemize}
    \item[(a)] If an LC problem $(\bbF_q, k, \bfV_1,\dots,\bfV_k)$ is Interval-$[0,1]$, then there exists a zero-error coding scheme over the noiseless quantum many-to-one model with normalized total cost  
\begin{equation}\label{eq:total_cost_Interval-$[0,1]$}
    \Gamma =  \frac{s}{2}+ \frac{\sum_{i\in [k]}m_i}{2},
\end{equation}
where $s=\textbf{Prob}_1(\text{LC})$.
    \item[(b)]  If an LC problem $(\bbF_q, k, \bfV_1,\dots,\bfV_k)$ is Restricted Interval-$[0,1]$, then there exists a zero-error coding scheme over the noiseless quantum many-to-one model with normalized total cost 
\begin{equation}\label{eq:total_cost_restricted_Interval-$[0,1]$}
    \Gamma = \frac{2}{3}\sum_{i\in[k]}m_i.
\end{equation}
\end{itemize}
\end{theorem}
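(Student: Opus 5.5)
The plan is to prove part (a) by using $L=2$ instances and reducing the two-instance LC problem to a single $(\kappa,N)$-Sum Box with $\kappa=2m$ and $N=\sum_{i\in[k]}m_i+s$. Stack the two instances, so Bob wants $\sum_i \operatorname{diag}(\bfV_i,\bfV_i)[W_i^{(1)};W_i^{(2)}]$. Fix a tuple $(\{P_i\}_{i\ge2},\tilde M_1,\tilde M_2,\tilde X_1)\in\calF(s)$ with $s=\textbf{Prob}_1(\text{LC})$. The senders then act as follows.
\begin{itemize}
\item For $i\ge 2$, write $\bfV_iW_i^{(1)}=\bfV_iP_i(P_i^{-1}W_i^{(1)})$. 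Alice$_i$ holds $m_i$ qudits, and on her $j$-th qudit she encodes the pair $(x,z)=\big((P_i^{-1}W_i^{(1)})_j,(W_i^{(2)})_j\big)$. Her contribution to the global matrices is the column block $\big[\begin{smallmatrix}\bfV_iP_i\\0\end{smallmatrix}\big]$ in $M_x$ and $\big[\begin{smallmatrix}0\\ \bfV_i\end{smallmatrix}\big]$ in $M_z$.
\item Alice$_1$ precodes $[u;v]=\tilde X_1[W_1^{(1)};W_1^{(2)}]$ with $u,v\in\bbF_q^{m_1+s}$. She encodes $(u_j,v_j)$ into each of her $m_1+s$ qudits, contributing the blocks $\tilde M_1$ to $M_x$ and $\tilde M_2$ to $M_z$.
\end{itemize}
By constraint (a), $\tilde M_1u+\tilde M_2v=\operatorname{diag}(\bfV_1,\bfV_1)[W_1^{(1)};W_1^{(2)}]$. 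Hence $M_x\bfx+M_z\bfz$ equals exactly the desired stacked output $Y^{[2]}$.

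It then remains to check the hypotheses of the $(\kappa,N)$-Sum Box primitive.
\begin{itemize}
\item \emph{SSO.} $\Omega$ is additive over column blocks: if $M_x=[A_1|\cdots]$ and $M_z=[B_1|\cdots]$, then $M_xM_z^T-M_zM_x^T=\sum_j(A_jB_j^T-B_jA_j^T)$. So $\Omega(M_x,M_z)$ is precisely the left side of constraint (b), which is zero.
\item \emph{Rank.} The column space of $[M_x|M_z]$ contains that of $\operatorname{diag}(\bfV_1,\bfV_1)$, by (a). It also contains the blocks $\big[\begin{smallmatrix}\bfV_iP_i\\0\end{smallmatrix}\big]$ and $\big[\begin{smallmatrix}0\\ \bfV_i\end{smallmatrix}\big]$, and $P_i$ is invertible. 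Therefore it contains the column space of $\operatorname{diag}([\bfV_1|\cdots|\bfV_k],[\bfV_1|\cdots|\bfV_k])$, which has rank $2m$ by Assumption~\ref{assumption:lin_ind_assumption}(2). Since $[M_x|M_z]$ has exactly $2m$ rows, its rank is exactly $2m$.
\item \emph{Size.} The Interval-$[0,1]$ condition gives $2m\le\sum_i m_i\le N$, so $\kappa\le N$.
\end{itemize}
The Sum Box therefore applies. The cost is $N$ qudits for two instances, giving $\Gamma=(s+\sum_i m_i)/2$.

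One also needs $\calF(s)\neq\emptyset$ for some finite $s$, so that the infimum is attained. Take $P_i=I$ and $\tilde M_1=[\operatorname{diag}(\bfV_1,\bfV_1)|C]$, $\tilde M_2=[0|D]$, with $\tilde X_1=\big[\begin{smallmatrix}I\\0\end{smallmatrix}\big]$ padded by zero rows. The residual matrix that must be cancelled, $-\sum_{i\ge2}\Omega(\cdot,\cdot)$, is alternating (skew-symmetric with zero diagonal). Every alternating matrix is a sum of terms $c(e_ae_b^T-e_be_a^T)$, and each such term equals $\Omega(ce_a,e_b)$ for a single extra column pair. So finitely many extra columns in $C,D$ always suffice. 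These extra columns are fed zero inputs, consistent with constraint (a).

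For part (b), the target cost $\tfrac23\sum_i m_i$ corresponds to $s=\tfrac13\sum_i m_i$ in the formula of part (a), possibly after passing to a larger number of instances. My plan is to unpack Definition~\ref{def:Restricted Interval-$[0,1]$} and exhibit an explicit element of $\calF(s)$ with this $s$. If needed, I would first replace the LC by its $L'$-fold block-diagonal version, which is again Interval-$[0,1]$ and satisfies Assumption~\ref{assumption:lin_ind_assumption}. The natural guess is a three-instance construction. In it, each sender's data over the instances is split into thirds, so that roughly two qudits carry three symbols' worth of useful computation. The precoders $P_i$ are chosen so that the cross terms $\Omega\big(\big[\begin{smallmatrix}\bfV_iP_i\\0\end{smallmatrix}\big],\big[\begin{smallmatrix}0\\ \bfV_i\end{smallmatrix}\big]\big)=\big[\begin{smallmatrix}0&\bfV_iP_i\bfV_i^T\\-\bfV_iP_i^T\bfV_i^T&0\end{smallmatrix}\big]$ cancel pairwise or are absorbed by the extra columns, as in Example~\ref{example:motivating_example3}.

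I expect this explicit construction to be the main obstacle. It requires showing that the structural restriction in the definition makes $\sum_i\bfV_iP_i\bfV_i^T$ compensable by an alternating correction of rank at most about $\tfrac23\sum_i m_i$. Equivalently, it requires realizing that correction with $s$ extra column pairs while keeping constraint (a), so that the counting in part (a) yields exactly $\tfrac23\sum_i m_i$.
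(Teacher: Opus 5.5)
Your part (a) is the paper's argument: two instances, precoded inputs $\tilde X_1[W_1^{(1)};W_1^{(2)}]$ and $[P_i^{-1}W_i^{(1)};W_i^{(2)}]$, SSO from constraint (b) by block additivity of $\Omega$, rank $2m$ from constraint (a) and invertibility of the $P_i$, and $2m\le N$ from the Interval-$[0,1]$ condition. Your feasibility check for $\calF(s)$ plays the role of Proposition~\ref{prop:trivial_precoding_bound}.

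Part (b), however, is not proved. You only conjecture that some $L'$-fold version admits a tuple in $\calF(s)$ with $s=\tfrac{L'}{3}\sum_i m_i$. You give no construction, and the Restricted Interval-$[0,1]$ hypothesis is never actually used. It is also unclear that this route can work at all. In Problem~\ref{problem:precoding}, every sender $i\ge 2$ is frozen into the blocks $\left[\begin{smallmatrix}\bfV_iP_i\\0\end{smallmatrix}\right]$, $\left[\begin{smallmatrix}0\\ \bfV_i\end{smallmatrix}\right]$, with exactly $m_i$ qudits per two instances. All auxiliary cost must therefore be carried by Alice$_1$. The scheme that actually achieves $\tfrac23\sum_i m_i$ has every sender transmitting $2m_i$ qudits per three instances, so it is not of this form.

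The missing idea is the paper's separate construction, which bypasses $\textbf{Prob}_1$ entirely. It has three steps.

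First, the invertible $A=[A_1|\cdots|A_k]$ from the definition, together with $a_i\le m_iL/2$, lets each Alice$_i$ locally permute her data into $(U_i,V_i,Z_i)$ with $|U_i|=|V_i|=a_i$. This gives $A^{-1}Y^{[L]}=U+\bfP V+\bfQ Z$ with $\bfP=A^{-1}B$ and $\bfQ=A^{-1}C$, where every pair $(U_j,V_j)$ is held by a single sender.

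Second, three copies of this are computed by one Sum Box with $\kappa=3mL$ and $N=2L\sum_i m_i$. The output is split as $M_x\calX=[\bfP V^{(1)}+\bfQ Z^{(1)};U^{(2)};U^{(3)}]$ and $M_z\calZ=[U^{(1)};\bfP V^{(2)}+\bfQ Z^{(2)};\bfP V^{(3)}+\bfQ Z^{(3)}]$. Compensating blocks $\bfP\bfP^T+\bfQ\bfQ^T$ (in $M_x$) and $\bfP^T$ (in $M_z$) make $\Omega(M_x,M_z)=0$. They are placed so that they only multiply zero inputs: the $x$-halves of the qudits carrying $U^{(1)}$ and the $z$-halves of those carrying $U^{(3)}$.

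Third, the qudits are assigned so that the only mixed pair, $(U_j^{(2)},V_j^{(3)})$, is locally encodable precisely because of the same-sender pairing. The six qudit types contribute $a_i,c_i,a_i,a_i,c_i,a_i$ with $c_i=m_iL-2a_i$. This gives $2m_iL$ qudits for Alice$_i$ over $3L$ instances, i.e.\ normalized total cost $\tfrac23\sum_i m_i$.

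Without these steps, or a genuine substitute, your proposal gives no bound for part (b).
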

 
\subsubsection*{Discussion of Theorem~\ref{theorem:first_main_result_informal}}

\begin{enumerate}[
    label=\textbf{D\arabic*.},
    ref=D\arabic*,
    leftmargin=3.2em,
    labelsep=0.6em,
    itemsep=0.8em
]

\item \label{disc:thm1-precoding-parameter}
\textbf{Role of the precoding parameter.}
The quantity $s$ measures the minimum auxiliary
dimension needed to transform the two-instance LC problem into a $(\kappa,N)$-Sum Box
instance satisfying the SSO condition. 

\item \label{disc:thm1-prob1-bound} \textbf{Universal bound on $\textbf{Prob}_1$.}
A simple construction gives $\textbf{Prob}_1(\text{LC})\le m$; see Section \ref{sec:preliminaries}.

\item \label{disc:thm1-prior-precoding}
\textbf{Relation to prior precoding approaches.}
 The Optimal Precoding Problem discussed above bears similarity to the optimization problem presented in  Section III of \cite{HuUlukus25}. However, it is different in an important way: the precoding matrix for Alice$_1$ is of different dimension. As demonstrated in Section \ref{sec:comparison_with_prior_works}, this additional freedom in designing the precoding allows us to obtain lower cost schemes than the ones in \cite{HuUlukus25}.  We point out that it is unclear if the above problem can be solved in an efficient manner for arbitrary LC problems; the same issue exists with the work of \cite{HuUlukus25}. 

\item \label{disc:thm1-incomparable-bounds}
\textbf{Incomparability of the two achievable bounds.} It can be observed that depending on the value of $s$ and $\sum_{i \in [k]} m_i$ either of the rates in  \eqref{eq:total_cost_Interval-$[0,1]$} or \eqref{eq:total_cost_restricted_Interval-$[0,1]$} can be lower, i.e., these rates are incomparable in general.

\item \label{disc:thm1-restricted-genericity}
\textbf{Genericity of the Restricted Interval-$[0,1]$ condition.} Suppose that Assumption~\ref{assumption:lin_ind_assumption} holds. Under the uniform sampling model specified in Proposition~\ref{prop:Restricted Interval-$[0,1]$_two_disjoint_submatrices_condition}, an Interval-$[0,1]$ LC problem is Restricted Interval-$[0,1]$ with probability tending to one as $q\to\infty$; see Section~\ref{sec:main_contribution_formal_statement_and_proof}. Thus, the second part of Theorem~\ref{theorem:first_main_result_informal} applies to a generic subclass of Interval-$[0,1]$ LC problems over sufficiently large fields.

\item \label{disc:thm1-classical-comparison}
\textbf{Comparison with the classical many-to-one baseline.}
Combining Theorem~\ref{theorem:first_main_result_informal} with
Proposition~\ref{prop:classical_many_to_one_baseline}, the achievable cost in
Theorem~\ref{theorem:first_main_result_informal}(b) is strictly smaller than
the classical many-to-one optimum whenever $\sum_{i\in[k]}m_i>0$, since
\[
    \frac{2}{3}\sum_{i\in[k]}m_i
    <
    \sum_{i\in[k]}m_i
    =
    \Gamma^{*,\mathsf C}(\mathrm{LC}).
\]
For the bound in part (a), the comparison depends on
$\textbf{Prob}_1(\mathrm{LC})$; using
$\textbf{Prob}_1(\mathrm{LC})\le m$ and the Interval-$[0,1]$ condition
$2m\le\sum_{i\in[k]}m_i$, we obtain
\[
    \frac{1}{2}\textbf{Prob}_1(\mathrm{LC})
    +
    \frac{1}{2}\sum_{i\in[k]}m_i
    \le
    \frac{1}{2}m+\frac{1}{2}\sum_{i\in[k]}m_i
    \le
    \frac{3}{4}\sum_{i\in[k]}m_i
    <
    \sum_{i\in[k]}m_i.
\]
Thus both achievable bounds give a strict improvement over the classical many-to-one optimum for nontrivial instances in their respective regimes.
\item \label{disc:thm1-beyond-interval}
\textbf{Applicability beyond the Interval-$[0,1]$ regime.}
  Theorem \ref{theorem:first_main_result_informal} applies to all LC problems satisfying  $\frac{2m}{\sum_{i\in[k]} m_i}  \in[0,1]$. However, for problems with special structure, it is possible to apply these ideas even when $\frac{2m}{\sum_{i\in[k]} m_i}  \in[1,2]$. In fact, the third coding scheme we propose (Theorem \ref{theorem:direct_sum_coding_formal})  applies when $\frac{2m}{\sum_{i\in[k]} m_i}=\frac{4}{3}$.

\item \label{disc:thm1-limited-room-near-two}
\textbf{Limited room for gain as $\frac{2m}{\sum_i m_i}\to 2$.} As $\frac{2m}{\sum_{i\in[k]}m_i}$ approaches $2$, the potential improvement over the classical many-to-one baseline becomes limited. Indeed, under Assumption~\ref{assumption:lin_ind_assumption}, the desired output can take $q^m$ possible values. Since Bob shares no entanglement with the senders, the quantum converse bound in Lemma~\ref{lemma:quantum_many_to_one_lower_bound} implies that any zero-error quantum many-to-one scheme must satisfy $\Gamma(\Delta)\ge m.$  On the other hand, the classical many-to-one baseline is $\Gamma_{\mathsf C}^*(\mathrm{LC}) = \sum_{i\in[k]}m_i$  by Proposition~\ref{prop:classical_many_to_one_baseline}. Hence any quantum scheme that strictly improves over the classical baseline must satisfy $m \le \Gamma(\Delta) < \sum_{i\in[k]}m_i.$ As $\frac{2m}{\sum_{i\in[k]}m_i}\to 2$, we have $\sum_{i\in[k]}m_i\to m$, so this interval shrinks. Thus there is little room for a distributed-superdense-coding advantage using quantum protocols in this limit.
\end{enumerate}

\subsection{Direct-Sum LC Problems: Subadditivity of $\Gamma^{*,\mathsf Q}$}
\label{subsec:main_contribution_direct-sum}

We next consider joint computation of two distinct LC problems. Throughout this subsection, ``distinct'' means
\[
    (\bfV_1,\dots,\bfV_k)\neq(\bfV_1',\dots,\bfV_k')
\]
as a tuple of  matrices, not merely that the data streams are different. If
$\mathrm{LC}_1$ and $\mathrm{LC}_2$ are computed separately, the total quantum
cost is $\Gamma^{*,\mathsf Q}_{\mathrm{LC}_1} + \Gamma^{*,\mathsf Q}_{\mathrm{LC}_2}.$ However, when the two problems are bundled into a single direct-sum LC problem, joint coding may strictly reduce the total cost. Our second main result shows that this phenomenon indeed occurs under the noiseless quantum many-to-one model.  We first demonstrate this result by means of an example. Next, we generalize this example to a coding scheme that applies to a broad class of $(\text{LC}_1,\text{LC}_2)$.




\begin{definition}[Direct sum of LC problems]
Let $\mathrm{LC}_1=(\bbF_q,k,\bfV_1,\dots,\bfV_k),\,
    \mathrm{LC}_2=(\bbF_q,k,\bfV_1',\dots,\bfV_k')$ be two LC problems over the same field and with the same number of senders. The two problems use disjoint data vectors
$\{W_i\}_{i=1}^k$ and $\{W_i'\}_{i=1}^k$ and compute $Y=\sum_{i\in[k]}\bfV_iW_i,\,Y'=\sum_{i\in[k]}\bfV_i'W_i'.$ Their direct sum is the LC problem
\[
    \mathrm{LC}_1\oplus\mathrm{LC}_2
    :=
    \left(
    \bbF_q,k,
    \begin{bmatrix}\bfV_1&\\&\bfV_1'\end{bmatrix},
    \dots,
    \begin{bmatrix}\bfV_k&\\&\bfV_k'\end{bmatrix}
    \right),
\]
whose output is
\[
    \begin{bmatrix}Y\\Y'\end{bmatrix}
    =
    \sum_{i\in[k]}
    \begin{bmatrix}\bfV_i&\\&\bfV_i'\end{bmatrix}
    \begin{bmatrix}W_i\\W_i'\end{bmatrix}.
\]
\end{definition}
 

\begin{theorem}[Subadditivity]\label{theorem:subadditivity}
    There exist two distinct LC problems $\text{LC}_1=(\bbF_q, k, \bfV_1,\dots,\bfV_k)$ and $\text{LC}_2=(\bbF_q, k, \bfV_1',\dots,\bfV_k')$  over the same field $\bbF_q$ and with the same number of senders $k$, such that
    \begin{align*}
        \Gamma^{*,\mathsf{Q}}_{\text{LC}_1\oplus \text{LC}_2} < \Gamma^{*,\mathsf{Q}}_{\text{LC}_1}+\Gamma^{*,\mathsf{Q}}_{\text{LC}_2}.
    \end{align*} 
\end{theorem}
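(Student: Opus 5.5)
The plan is to prove Theorem~\ref{theorem:subadditivity} with an explicit pair $(\mathrm{LC}_1,\mathrm{LC}_2)$ and three cost computations: the optimal cost of each problem alone, and an upper bound for the direct sum that is strictly below their sum.

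A structural observation fixes where the example can live. Both bounds of Lemma~\ref{lemma:quantum_many_to_one_lower_bound} are additive under $\oplus$. For the direct sum, $\operatorname{rank}$ of the block-diagonal matrices equals $m+m'$, and for every $\calK$ the subset ranks add as well. Also $\Gamma^{*,\mathsf Q}$ is an infimum over $L$, so no integrality effects survive. Hence if both $\mathrm{LC}_1$ and $\mathrm{LC}_2$ met the bounds of Lemma~\ref{lemma:quantum_many_to_one_lower_bound} with equality, subadditivity could not occur. We therefore need at least one problem whose true optimum strictly exceeds these cut-type bounds, and the converse for it must come from somewhere other than Lemma~\ref{lemma:quantum_many_to_one_lower_bound}.

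\textbf{Step 1 (converse for the individual problems).} The natural source is the exact capacity characterization for $k=3$ senders in \cite{Inverted_3_Sum_Box_JafarYao25}, which I will assume as a black box. I will work with $k=3$ over a small field, say $\bbF_2$ or $\bbF_3$. I will choose $\mathrm{LC}_1$ to be an instance whose structure prevents the $N$-Sum Box from working directly. Concretely, the SSO condition should fail for every natural two-instance splitting, so the optimal cost lies strictly between $\max\{m,\tfrac12\sum_i m_i\}$ and $\sum_i m_i$, as in Example~2 where an extra qudit was required. I will take $\mathrm{LC}_2$ to be a ``mirror'' of $\mathrm{LC}_1$, obtained by a nontrivial transformation so that the two problems are distinct, for instance by transposing the roles of the $x$- and $z$-type data or permuting senders. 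From the $k=3$ formula I will read off $\Gamma^{*,\mathsf Q}_{\mathrm{LC}_1}$ and $\Gamma^{*,\mathsf Q}_{\mathrm{LC}_2}$ exactly.

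\textbf{Step 2 (achievability for the direct sum).} For $\mathrm{LC}_1\oplus\mathrm{LC}_2$ I will write the output in the form $M_x\bfx+M_z\bfz$, pairing each sender's $\mathrm{LC}_1$ symbols with her $\mathrm{LC}_2$ symbols as the $x$ and $z$ inputs of the same qudit, in the spirit of Example~\ref{example:motivating_example3}. The point is that the antisymmetric defect $\Omega$ contributed by $\mathrm{LC}_1$ should cancel the defect contributed by the mirrored $\mathrm{LC}_2$. That gives $\Omega(M_x,M_z)=0$, possibly after a local invertible precoding $P_i$ as in Problem~\ref{problem:precoding}. I will then check $\operatorname{rank}([M_x|M_z])=\kappa\le N$ and invoke the $(\kappa,N)$-Sum Box. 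Ideally this reaches the additive lower bound $\max\{m+m',\tfrac12\sum_i(m_i+m_i')\}$, which is already below the sum of the individual optima from Step 1.

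\textbf{Main obstacle.} The hard part is Step 1, the converse, together with designing the pair so that the cancellation in Step 2 actually happens. Choosing the example so that the capacity formula of \cite{Inverted_3_Sum_Box_JafarYao25} gives a clean value strictly above the cut bounds, while its mirror exactly neutralizes $\Omega$, will require some search over small instances. My first attempt is the smallest instances where $M_x=I$ and $M_z$ is non-symmetric, taking $\mathrm{LC}_2$ so that its $M_z$ contributes the negated antisymmetric part. If no $k=3$ instance works cleanly, the fallback is to prove the individual converses directly. That would mean a dimension and entropy argument for the specific instances, bounding the Holevo information available to Bob given the structure of each Alice's encoding.
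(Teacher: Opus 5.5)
\textbf{Overall.} Your skeleton matches the paper's proof. You take $\mathrm{LC}_1$ from the three-sender characterization of \cite{Inverted_3_Sum_Box_JafarYao25} with optimum strictly above the cut bounds. You then run a single Sum Box for $\mathrm{LC}_1\oplus\mathrm{LC}_2$, in which a qudit carries an $\mathrm{LC}_1$ symbol as its $x$-input and an $\mathrm{LC}_2$ symbol as its $z$-input. However, the statement is existential, and its whole content is an explicit, verified pair. You never commit to one.

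\textbf{The proposed search principle cannot work.} In $\mathrm{LC}_1\oplus\mathrm{LC}_2$ the outputs of the two problems occupy disjoint row blocks. So the top-left block of $\Omega(M_x,M_z)$ depends only on the $\mathrm{LC}_1$ rows of $M_x,M_z$, and the bottom-right block depends only on the $\mathrm{LC}_2$ rows. Each block must vanish separately, so no defect of $\mathrm{LC}_1$ can be cancelled by a ``mirrored'' defect of $\mathrm{LC}_2$. (Also, ``transposing $x$- and $z$-type data'' is not a transformation of an LC problem; the $x$/$z$ split exists only inside a scheme.) Under your pairing, with $M_x$ supported on the $\mathrm{LC}_1$ rows and $M_z$ on the $\mathrm{LC}_2$ rows, the diagonal blocks vanish trivially. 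The only obstruction is the off-diagonal block $\sum_j a_jb_j^T$, where $a_j$ is the $\mathrm{LC}_1$ part of the $M_x$-column of qudit $j$ and $b_j$ is the $\mathrm{LC}_2$ part of its $M_z$-column.

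\textbf{Missing idea.} You need a way to kill this cross term. The device is to leave some $\mathrm{LC}_1$ symbol unpaired, so its $z$-input is $0$ and its $M_z$-column becomes a free parameter. This is the same device as the block $-(B^{-1}A(\bfV')^T)^T$ in Theorem~\ref{theorem:direct_sum_coding_formal}.

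\textbf{The paper's construction.} The paper takes $\mathrm{LC}_1$ to be Toy Example~3 over $\bbF_3$, with $\bfV_1=\bfV_2=(1,0)^T$ and $\bfV_3=I_2$. Its cut bounds give only $2$, but $\Gamma^{*,\mathsf Q}_{\mathrm{LC}_1}\ge 5/2$. It takes $\mathrm{LC}_2$ to be plain delivery of one symbol from Alice$_1$ and one from Alice$_2$. The optimum $2$ for $\mathrm{LC}_2$ follows from counting dimensions, so no second capacity result is needed; a mirrored $\mathrm{LC}_2$ would require one. The qudits carry $(x_1,x_5),(x_2,x_6),(x_3,0),(x_4,0)$. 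The free $M_z$-column $(0,0,2,2)^T$ of the third qudit cancels the cross contributions of Alice$_1$ and Alice$_2$ modulo $3$. This gives a $4$-Sum Box with total cost $4<5/2+2$. In effect, $\mathrm{LC}_2$ rides in $z$-slots that Alice$_1$ and Alice$_2$ would otherwise leave idle.

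\textbf{A side remark.} Your claim that subadditivity is impossible when both problems meet the bounds of Lemma~\ref{lemma:quantum_many_to_one_lower_bound} with equality does not follow. Each inequality is additive under $\oplus$. But the best bound obtained by combining them is the optimum of a linear program in $\Delta$, and that optimum is only subadditive in the right-hand sides. This does not hurt your plan, since requiring $\mathrm{LC}_1$ to lie strictly above its cut bound is what the paper does anyway.
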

\begin{remark}
Theorem~\ref{theorem:subadditivity} demonstrates strict subadditivity of the
optimal asymptotic per-instance total cost when two genuinely different LC
problems are combined into a single direct-sum LC problem. This is different
from the multi-letter gain in Example~\ref{example:motivating_example3}, where
the saving comes from jointly coding multiple instances of the same LC problem.
In contrast, under the classical many-to-one model, the optimal total cost is
additive; see Proposition~\ref{prop:classical_many_to_one_additive}.
\end{remark}

Here, we prove Theorem \ref{theorem:subadditivity} by constructing an explicit
pair of LC problems. We leverage an  example from  \cite{Inverted_3_Sum_Box_JafarYao25} (\textbf{Toy Example 3}), which is stated as follows.
\begin{example}\label{example:subadditivity}
Consider the problem $\text{LC}_1=(\bbF_3,3, \bfV_1,\bfV_2,\bfV_3)$ with three senders, where
\begin{align*}
    \bfV_1 = \begin{bmatrix}
        1  \\
        0 
    \end{bmatrix}, \bfV_2 = \begin{bmatrix}
        1 \\
        0 
    \end{bmatrix}, \text{~and~}\bfV_3 =\begin{bmatrix}
         1 &0 \\
         0 &1 
    \end{bmatrix}.
\end{align*}

\end{example} 

\begin{lemma}
(\textbf{Toy Example 3} of \cite{Inverted_3_Sum_Box_JafarYao25}) For any coding scheme of   $\text{LC}_1$, its cost tuple $\Delta=(\Delta_1,\Delta_2,\Delta_3)$ must satisfy
\begin{equation}
    \begin{split}
    &\Delta_1\ge \frac{1}{2},\,\Delta_2\ge \frac{1}{2},\,\Delta_3\ge 1, \\
    &\text{ and } \Delta_1+\Delta_2+\Delta_3\ge \frac{5}{2}.
    \end{split}
\end{equation}
Therefore, the optimal total cost $\Gamma^{*,\mathsf{Q}}_{\text{LC}_1}$ of $\text{LC}_1$ must be such that
\begin{equation}\label{eq:subadditivity_eq1}
    \Gamma^{*,\mathsf{Q}}_{\text{LC}_1}  \ge \frac{5}{2}.
\end{equation}
\end{lemma}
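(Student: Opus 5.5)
Here $\text{LC}_1$ computes $Y=\begin{bmatrix} x_1+x_2+w_{3,1}\\ w_{3,2}\end{bmatrix}$, where $W_1=x_1$, $W_2=x_2$ and $W_3=(w_{3,1},w_{3,2})^T$. The plan is to derive each inequality from Lemma~\ref{lemma:quantum_many_to_one_lower_bound} where possible. The one inequality that needs a new argument is the total bound $5/2$.

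\textbf{Individual bounds.} Apply the second inequality of Lemma~\ref{lemma:quantum_many_to_one_lower_bound} with $\calK=\{1\}$, $\{2\}$, $\{3\}$. The ranks of $\bfV_1,\bfV_2,\bfV_3$ are $1,1,2$, so we get $\Delta_1\ge 1/2$, $\Delta_2\ge 1/2$ and $\Delta_3\ge 1$. The first inequality \eqref{eq:quantum_many_to_one_lower_bound} only gives $\Gamma\ge m=2$, which is too weak.

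\textbf{The bound $\Delta_1+\Delta_2+\Delta_3\ge 5/2$.} I would use a genie/side-information argument of the kind used in the proof of the converse in \cite{Inverted_3_Sum_Box_JafarYao25}, with entropy bounds for entanglement-assisted communication to a receiver who holds no prior entanglement.
\begin{enumerate}[label=(\roman*)]
\item Fix $x_2=0$. Then Bob decodes $x_1+w_{3,1}$ and $w_{3,2}$, and by independence he can also decode $x_1+w_{3,1}$ together with the full vector $W_3$ once $W_3$ is revealed. Let the inputs be uniform and use Holevo-type bounds on the systems $\calQ_1\calQ_2\calQ_3$: $I(\text{inputs};\calQ_1\calQ_2\calQ_3)\le \sum_i \delta_i\log q$, and $I(\text{inputs};\calQ_\calK\mid \calQ_{\calK^c}) \le 2\sum_{i\in\calK}\delta_i\log q$.
\item Run a chain-rule argument. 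First, Bob's output carries $2L$ symbols. Second, conditioned on that output, the residual information about $W_3$ must be recoverable: giving Bob $x_1$ and $x_2$ lets him decode all of $W_3$, which is $2L$ symbols. The idea is to combine a ``superdense'' bound for Alice$_3$ given Alice$_1$ and Alice$_2$, namely $2\delta_3\ge \ldots$, with a bound on the total.
\item Concretely, I would aim to show $\delta_1+\delta_2+\delta_3\ge 2L+\tfrac12 L$. The $2L$ comes from the total entropy of the output. The extra $L/2$ comes from the need to resolve $x_1$ versus $x_2$: the quantity $x_1+x_2$ alone is not computable from Alice$_3$'s contribution. Equivalently, sum two inequalities. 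The first is $\delta_1+\delta_2+\delta_3\ge H(Y)=2L$ from Lemma~\ref{lemma:quantum_many_to_one_lower_bound}. The second, obtained by conditioning on Alice$_3$'s system and giving Bob $W_3$, is $2(\delta_1+\delta_2)+\ldots$. Then average appropriately.
\end{enumerate}

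\textbf{Main obstacle.} The hardest step is the mixed bound in (iii). It needs a careful quantum conditional-entropy argument (strong subadditivity, plus the fact that Bob holds no entanglement, so $S(\calQ_i)\le \delta_i$). This argument is what goes beyond the cut-set bounds of Lemma~\ref{lemma:quantum_many_to_one_lower_bound}. I would first try to mirror the three-sender converse of \cite{Inverted_3_Sum_Box_JafarYao25} directly. Since the lemma is stated as their Toy Example 3, I would ultimately cite that result for the sum bound, and verify only the single-user bounds here.

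\textbf{Conclusion.} Equation \eqref{eq:subadditivity_eq1} then follows by taking the infimum over $L$ and over schemes.
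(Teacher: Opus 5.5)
Your proposal matches the paper, which gives no independent proof and simply imports this lemma from Toy Example~3 of \cite{Inverted_3_Sum_Box_JafarYao25}; your derivation of $\Delta_1,\Delta_2\ge\tfrac12$ and $\Delta_3\ge 1$ from the second inequality of Lemma~\ref{lemma:quantum_many_to_one_lower_bound} (ranks $1,1,2$) is correct, and citing the reference for the sum bound is what the paper does. Note, however, that your heuristic in (iii) cannot work as sketched: $\Delta=(\tfrac12,\tfrac12,1)$ satisfies every inequality of Lemma~\ref{lemma:quantum_many_to_one_lower_bound} with $\Gamma=2$, including the $\{1,2\}$ bound you get by revealing $W_3$, so no averaging of $\Gamma\ge 2$ with cut-set bounds reaches $\tfrac52$, and the reference's non-cut-set converse is indispensable.
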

\begin{example}\label{example:subadditivity1}
We define our $\text{LC}_2$ as follows. Alice$_1$ and Alice$_2$ have classical data vectors $W_1'=[x_5]$ and $W_2'=[x_6]$ respectively, and Bob wants to recover   $\begin{bmatrix}
    x_5\\ x_6
\end{bmatrix}$. 
In our terminology, it can be written as $\text{LC}_2=(\bbF_3,3, \bfV_1',\bfV_2',\bfV_3')$ where 
\begin{align*}
    \bfV_1' = \begin{bmatrix}
        1  \\
        0 
    \end{bmatrix}, \bfV_2' = \begin{bmatrix}
        0 \\
        1 
    \end{bmatrix}, \bfV_3'\in\bbF_3^{2\times 0}.
\end{align*} 
Here Alice$_3$ has no input in $\mathrm{LC}_2$; equivalently, $m_3'=0$ and $\bfV_3'$ is the $2\times 0$ empty matrix. For any \(L\)-instance zero-error scheme, Bob must distinguish \(3^{2L}\) possible outputs. Note that zero-error distinguishability requires that Bob receives $3^{2L}$ orthogonal states.  Since the total received Hilbert-space dimension is \(3^{\sum_i\delta_i}\), this implies that \(\sum_i\delta_i\ge 2L\). Hence \(\Gamma^{*,\mathsf Q}_{\mathrm{LC}_2}\ge 2\). The matching upper bound is obtained by Alice$_1$ and Alice$_2$ sending \(W_1'^{[L]}\) and \(W_2'^{[L]}\), respectively.  Therefore, we have 
\begin{equation}\label{eq:subadditivity_eq2}
    \Gamma^{*,\mathsf{Q}}_{\text{LC}_2} = 2.
\end{equation}
\end{example}
\begin{example}\label{example:subadditivity2}
    Now, we consider the case of $\text{LC}_1\oplus \text{LC}_2  =\left(\bbF_3, 3, \begin{bmatrix}
        \bfV_1 & \\ & \bfV_1'
    \end{bmatrix}, \begin{bmatrix}
        \bfV_2 & \\ & \bfV_2'
    \end{bmatrix}, \begin{bmatrix}
        \bfV_3 & \\ & \bfV_3'
    \end{bmatrix}\right)$ where
    \begin{align*}
        \begin{bmatrix}
        \bfV_1 & \\ & \bfV_1'
    \end{bmatrix} =\begin{bmatrix}
        1&0\\
        0&0\\
        0&1\\
        0&0
    \end{bmatrix},
        \begin{bmatrix}
        \bfV_2 & \\ & \bfV_2'
    \end{bmatrix} =\begin{bmatrix}
        1&0\\
        0&0\\
        0&0\\
        0&1
    \end{bmatrix},
        \begin{bmatrix}
        \bfV_3 & \\ & \bfV_3'
    \end{bmatrix} =\begin{bmatrix}
        1&0\\
        0&1\\
        0&0\\
        0&0
    \end{bmatrix}.
    \end{align*}
Let Alice$_1$, Alice$_2$, Alice$_3$ have $\begin{bmatrix}
    x_1\\x_5
\end{bmatrix}, \begin{bmatrix}
    x_2\\x_6
\end{bmatrix}, \begin{bmatrix}
    x_3\\x_4
\end{bmatrix}$ respectively. Then, Bob wants to compute
\begin{align*}
    \begin{bmatrix}
        x_1+x_2+x_3\\ x_4\\ x_5\\x_6
    \end{bmatrix}&= 
    \begin{bmatrix}
        1&0\\
        0&0\\
        0&1\\
        0&0
    \end{bmatrix}\begin{bmatrix}
        x_1\\x_5
    \end{bmatrix} + \begin{bmatrix}
        1&0\\
        0&0\\
        0&0\\
        0&1
    \end{bmatrix} \begin{bmatrix}
        x_2\\ x_6
    \end{bmatrix} + \begin{bmatrix}
        1&0\\
        0&1\\
        0&0\\
        0&0
    \end{bmatrix}\begin{bmatrix}
        x_3\\x_4
    \end{bmatrix}\\
   &= \begin{bmatrix}
        1 & 1 & 1 & 0 \\
        0 & 0 & 0 & 1 \\
        0 & 0 & 0 & 0 \\
        0 & 0 & 0 & 0
    \end{bmatrix} \begin{bmatrix}
        x_1\\x_2\\x_3\\x_4
    \end{bmatrix} + 
    \begin{bmatrix}
      0 & 0 & 0 & 0 \\
      0 & 0 & 0 & 0 \\
      1 & 0 & 2 & 0 \\
      0 & 1 & 2 & 0 
  \end{bmatrix}\begin{bmatrix}
      x_5\\x_6\\0\\0
  \end{bmatrix}.
\end{align*}
Set
\begin{align*}
  M_x =  \begin{bmatrix}
        1 & 1 & 1 & 0 \\
        0 & 0 & 0 & 1 \\
        0 & 0 & 0 & 0 \\
        0 & 0 & 0 & 0
    \end{bmatrix}, \text{~and~}
  M_z = \begin{bmatrix}
      0 & 0 & 0 & 0 \\
      0 & 0 & 0 & 0 \\
      1 & 0 & 2 & 0 \\
      0 & 1 & 2 & 0 
  \end{bmatrix}.    
\end{align*}
Then, we have
\begin{align*}
    &M_xM_z^T-M_zM_x^T\\
    =& 
    \begin{bmatrix}
        1 & 1 & 1 & 0 \\
        0 & 0 & 0 & 1 \\
        0 & 0 & 0 & 0 \\
        0 & 0 & 0 & 0
    \end{bmatrix}   \begin{bmatrix}
      0 & 0 & 1 & 0 \\
      0 & 0 & 0 & 1 \\
      0 & 0 & 2 & 2 \\
      0 & 0 & 0 & 0 
  \end{bmatrix}  -  \begin{bmatrix}
      0 & 0 & 0 & 0 \\
      0 & 0 & 0 & 0 \\
      1 & 0 & 2 & 0 \\
      0 & 1 & 2 & 0 
  \end{bmatrix}   \begin{bmatrix}
        1 & 0 & 0 & 0 \\
        1 & 0 & 0 & 0 \\
        1 & 0 & 0 & 0 \\
        0 & 1 & 0 & 0
    \end{bmatrix} 
    = 0
\end{align*}
and $\text{rank}([M_x|M_z]) = 4$.  Therefore,  there exists a $4$-Sum Box protocol for  computing   $y=M_x\bfx+M_z\bfz$ above. Let $\calQ = \calQ_1\dots \calQ_4$ be the joint system of this $4$-Sum Box. Alice$_1$, Alice$_2$ have $\calQ_1$ and $\calQ_2$, respectively, and Alice$_3$ has $\calQ_3$ and $\calQ_4$. Then
\begin{itemize}
    \item Alice$_1$ encodes $ [x_1, x_5]$ to $\calQ_1$. 
    \item Alice$_2$ encodes $[x_2,x_6]$ to $\calQ_2$.
    \item Alice$_3$ encodes $[x_3,0] $ to $\calQ_3$.
    \item Alice$_3$ encodes $[x_4,0] $ to $\calQ_4$.
\end{itemize}  Then, they run the $4$-Sum Box protocol and $[x_1+x_2+x_3~ x_4~ x_5~x_6]^T$ is computed.
    In this 4-Sum Box protocol, Alice$_1$ and Alice$_2$ transmit one qudit, and Alice$_3$ transmits two qudits. Therefore, the cost tuple is $\Delta = (1,1,2)$, and the total cost is $\Gamma(\Delta) = 1+1+2=4$. Recall that \eqref{eq:subadditivity_eq1} and \eqref{eq:subadditivity_eq2} give $ \Gamma^{*,\mathsf{Q}}_{\text{LC}_1 }\ge 2.5$ and $\Gamma^{*,\mathsf{Q}}_{  \text{LC}_2}=2$ respectively. Therefore, we have that 
    \begin{align*}
        \Gamma^{*,\mathsf{Q}}_{\text{LC}_1\oplus \text{LC}_2} \le \Gamma(\Delta)= 4 <\frac{9}{2} \le \Gamma^{*,\mathsf{Q}}_{\text{LC}_1 } + \Gamma^{*,\mathsf{Q}}_{  \text{LC}_2}.
    \end{align*}
This proves Theorem \ref{theorem:subadditivity}. A generalization of  this example appears in Section \ref{subsec:direct_sum_coding_scheme}. 

We emphasize that the strict subadditivity of $\Gamma^{*,Q}$ in this example arises from combining two distinct LC problems into one LC problem, whereas Example \ref{example:motivating_example3} demonstrates multi-letter savings for a single LC problem.    
\end{example}
  


\section{Preliminaries}\label{sec:preliminaries}
In this section, we present basic definitions that will be used in the sequel. We also discuss the basic result of the $N$-Sum Box paper \cite{N_Sum_Box_JafarYao25}.
Many of our proofs will involve block matrices that contain all-zero blocks with different dimensions. We use subscripts to specify the dimensions of the zero block matrices under consideration, e.g., $0_{a\times b}$ denotes the $0$ matrix of dimension $a\times b$.  We omit the subscript whenever the dimensions are apparent. 


\subsection{Linear Algebra Facts}

\begin{definition}[Alternating Matrix]\label{def:alternating_matrix}
    Let $M=(m_{i,j})_{i,j=1}^n$ be an $n\times n$ matrix. $M$ is said to be alternating if $m_{i,j} = -m_{j,i}$ and $m_{i,i}=0$ for all $i,j\in[n]$. It can be shown that $\text{rank}(M)$ is always even.\footnote{When $\text{char}(\bbF_q)=2$, the condition $M^{T}=-M$ means $M$ is symmetric, so the extra condition $m_{i,i}=0$ is essential.}
\end{definition} 
\begin{proposition}

\label{prop:standard_form_of_mtx:alternating}  (Equation (5.3) of \cite{ConradBilinearForms}). 
When $M$ is alternating and has dimension $2n\times 2n$, there exists a $2n\times 2n$ invertible matrix $U$ such that\footnote{Here, we use the notation $U M U^T$ instead of $U^T M U$ because it will be more convenient for later use.}
        \begin{align*}
            U M U^T = \begin{bmatrix}
                0_{n\times n} & -D_r\\
                D_r^T& 0_{n\times n}
            \end{bmatrix},
        \end{align*}
        where $2r = \text{rank}(M)$ and $D_r = \begin{bmatrix}
                    I_{r} & 0_{r\times (n-r)}\\
                    0_{(n-r)\times r} & 0_{(n-r)\times (n-r)}
                \end{bmatrix}$. 
                
\end{proposition}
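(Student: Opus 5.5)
The statement to prove is the standard normal form for alternating matrices: any alternating $M\in\bbF_q^{2n\times 2n}$ is congruent, via some invertible $U$, to the block form with $D_r$. The plan is the classical symplectic Gram--Schmidt argument. View $M$ as the Gram matrix of the bilinear form $B(u,v)=u^T M v$ on $\bbF_q^{2n}$. The alternating hypothesis, including $m_{i,i}=0$, gives $B(v,v)=0$ for all $v$; this is immediate since $v^TMv=\sum_i m_{i,i}v_i^2+\sum_{i<j}(m_{i,j}+m_{j,i})v_iv_j=0$, and it is exactly where the diagonal condition is needed in characteristic $2$. Consequently $B(u,v)=-B(v,u)$. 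Finding $U$ with $UMU^T$ in the target form amounts to finding a basis $u_1,\dots,u_{2n}$ (the rows of $U$) whose Gram matrix $(u_i^T M u_j)$ is the target.

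The construction proceeds by induction on $\text{rank}(M)$. If $M=0$, take $U=I$ and $r=0$. Otherwise pick $e,f$ with $B(e,f)\neq 0$ and rescale so that $B(e,f)=-1$; then $B(f,e)=1$. Since $B(e,e)=B(f,f)=0$, the vectors $e,f$ are linearly independent, and the span $H=\langle e,f\rangle$ has nondegenerate restriction. Put $H^{\perp}=\{v: B(e,v)=B(f,v)=0\}$. The decomposition $\bbF_q^{2n}=H\oplus H^{\perp}$ follows from the projection $v\mapsto v - B(v,f)\,e\cdot c_1 - B(v,e)\, f\cdot c_2$ with suitable constants $c_1,c_2$ determined by $B(e,f)=-1$ (a routine check), and $H\cap H^\perp=0$ because the form on $H$ is nondegenerate. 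The restriction of $B$ to $H^\perp$ is again alternating, with rank $\text{rank}(M)-2$. Iterating gives pairs $(e_1,f_1),\dots,(e_r,f_r)$ with $B(e_i,f_j)=-\delta_{ij}$ and all other pairings among them zero, together with a complement $R$ on which $B$ is identically zero. That complement has dimension $2n-2r$, and $2r=\text{rank}(M)$ because the Gram matrix in this basis is a direct sum of $r$ nondegenerate $2\times 2$ blocks and a zero block. This also establishes the evenness of the rank claimed in Definition~\ref{def:alternating_matrix}.

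Finally, order the basis to match the target: $u_i=e_i$ for $i\le r$, then $n-r$ vectors from $R$; next $u_{n+i}=f_i$ for $i\le r$, then the remaining $n-r$ vectors from $R$. This split of $R$ is possible since $\dim R=2n-2r$, so it divides evenly into two groups of $n-r$. The top-right block of the Gram matrix is then $(B(u_i,u_{n+j}))=-D_r$, the bottom-left block is $D_r=D_r^T$, and the diagonal blocks vanish. Taking $U$ to be the matrix with rows $u_i^T$ gives $UMU^T$ in the stated form, and $U$ is invertible because its rows form a basis. The only delicate points are keeping track of sign conventions, which are fixed by choosing $B(e_i,f_i)=-1$, and the characteristic-$2$ case, which is handled uniformly because the argument uses only $B(v,v)=0$ and never divides by $2$. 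Neither step is a real obstacle; the only part needing care is the complement decomposition, which is where nondegeneracy on $H$ is used.
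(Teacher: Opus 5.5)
Your proposal is correct. It is the standard symplectic Gram--Schmidt argument: split off hyperbolic pairs normalized to $B(e_i,f_i)=-1$, collect the remaining radical of dimension $2n-2r$, and order the rows of $U$ as $e_1,\dots,e_r$, half the radical, $f_1,\dots,f_r$, the other half, which yields exactly $UMU^T=\begin{bmatrix}0&-D_r\\D_r^T&0\end{bmatrix}$ without dividing by $2$. The paper gives no proof of its own and only cites Equation (5.3) of Conrad's notes, whose proof is this same argument, so your route is essentially the paper's.
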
 

\begin{lemma} 
\label{lem:alternating_omega_factorization}
Let \(M\in\bbF_q^{2m\times 2m}\) be alternating with \(\text{rank}(M)=2r\). Then, there exist matrices \(Q,R\in\bbF_q^{2m\times r}\) such that $M=\Omega(Q,R)$ (\textit{cf.} Definition \ref{def:SSO_matrix} for $\Omega(Q,R)$).  
\end{lemma}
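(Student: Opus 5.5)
We will derive the factorization directly from the canonical form in Proposition~\ref{prop:standard_form_of_mtx:alternating}. Since $M$ is a $2m\times 2m$ alternating matrix, there is an invertible $U\in\bbF_q^{2m\times 2m}$ with
\[
UMU^T=\begin{bmatrix} 0_{m\times m} & -D_r\\ D_r^T & 0_{m\times m}\end{bmatrix}=:J_r ,
\]
where $2r=\text{rank}(M)$. Note that $D_r$ is symmetric, so $D_r^T=D_r$. Hence $M=U^{-1}J_r U^{-T}$. The plan is to first write $J_r=\Omega(Q_0,R_0)$ for explicit $Q_0,R_0\in\bbF_q^{2m\times r}$, and then transport this factorization back through $U^{-1}$.

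\textbf{Step 1 (canonical case).} Let $E=\begin{bmatrix} I_r\\ 0_{(m-r)\times r}\end{bmatrix}\in\bbF_q^{m\times r}$, so that $D_r=EE^T$. Define
\[
Q_0=\begin{bmatrix} 0_{m\times r}\\ E\end{bmatrix},\qquad R_0=\begin{bmatrix} E\\ 0_{m\times r}\end{bmatrix}.
\]
The block products are $Q_0R_0^T=\begin{bmatrix}0&0\\ EE^T&0\end{bmatrix}=\begin{bmatrix}0&0\\ D_r&0\end{bmatrix}$ and $R_0Q_0^T=\begin{bmatrix}0&D_r\\0&0\end{bmatrix}$. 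Therefore
\[
\Omega(Q_0,R_0)=Q_0R_0^T-R_0Q_0^T=\begin{bmatrix}0&-D_r\\ D_r&0\end{bmatrix}=J_r .
\]
This works in every characteristic, because we only used $D_r^T=D_r$ and no division.

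\textbf{Step 2 (transport).} For any invertible $A$, bilinearity gives $\Omega(AQ,AR)=AQR^TA^T-ARQ^TA^T=A\,\Omega(Q,R)\,A^T$. Take $A=U^{-1}$, $Q=U^{-1}Q_0$ and $R=U^{-1}R_0$. Then
\[
\Omega(Q,R)=U^{-1}J_rU^{-T}=M .
\]
Both $Q$ and $R$ lie in $\bbF_q^{2m\times r}$, as required.

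The only point that needs care is the orientation convention. The proposition is stated as $UMU^T$ rather than $U^TMU$, so the inverse must be applied as $U^{-1}(\cdot)U^{-T}$. Apart from this and the sign bookkeeping in Step 1, every step is a direct computation. In characteristic $2$ the signs disappear, and the alternating hypothesis (zero diagonal) has already been used through the proposition.
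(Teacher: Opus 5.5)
Your proof is correct and follows the paper's argument essentially line for line. The paper also takes the canonical form from Proposition~\ref{prop:standard_form_of_mtx:alternating}, writes $D_r=E_rE_r^T$, and realizes the canonical matrix as $\Omega\bigl([0;E_r],[E_r;0]\bigr)$. It then pulls this back through the inverse of the congruence matrix, which is exactly your $Q=U^{-1}Q_0$, $R=U^{-1}R_0$, justified by the congruence identity in Proposition~\ref{prop:SSO_proposition}.
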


\begin{proof}
By Proposition \ref{prop:standard_form_of_mtx:alternating}, there exists an
invertible matrix \(S\) such that
\[
SMS^T = 
\begin{bmatrix}
0 & -D_r\\
D_r^T & 0
\end{bmatrix},
\qquad
D_r=
\begin{bmatrix}
I_r & 0\\
0 & 0
\end{bmatrix}
\in\bbF_q^{m\times m}.
\]
Let $U = S^{-1}$, and \(E_r=\begin{bmatrix}I_r\\0\end{bmatrix}\in\bbF_q^{m\times r}\), so
\(D_r=E_rE_r^T\). Since $
\Omega\left(
\begin{bmatrix}0\\E_r\end{bmatrix},
\begin{bmatrix}E_r\\0\end{bmatrix}
\right)
=
\begin{bmatrix}
0 & -D_r\\
D_r^T & 0
\end{bmatrix},$
the claim follows by taking $
Q=U\begin{bmatrix}0\\E_r\end{bmatrix},
\qquad
R=U\begin{bmatrix}E_r\\0\end{bmatrix}.$
\end{proof}

   
The following proposition is standard and follows from a simple counting argument;
see, e.g., \cite{Stanley_2011}.
\begin{proposition}\label{prop:number_of_FullCol} 
 Let $\#\text{FullCol}(m,n;q)$ denote the number of $m\times n$ matrices over $\bbF_q$ with full column rank. If $0< n \le m$,  $\#\text{FullCol}(m,n;q) =(q^m-1)(q^m-q)\dots(q^m-q^{n-1}).$ 
\end{proposition}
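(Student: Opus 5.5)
The statement to prove is Proposition~\ref{prop:number_of_FullCol}: the number of $m\times n$ matrices over $\bbF_q$ with full column rank, for $0<n\le m$, is $\prod_{j=0}^{n-1}(q^m-q^j)$. The approach is a sequential counting argument over the columns. A matrix $A=[a_1|\cdots|a_n]$ with $a_j\in\bbF_q^m$ has full column rank exactly when $a_1,\dots,a_n$ are linearly independent. So we count ordered $n$-tuples of linearly independent vectors in $\bbF_q^m$, choosing the columns one at a time.

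First, the column $a_1$ must be nonzero, which gives $q^m-1$ choices.

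Next, suppose $a_1,\dots,a_j$ have already been chosen and are linearly independent, for some $1\le j\le n-1$. The tuple $a_1,\dots,a_{j+1}$ is linearly independent if and only if $a_{j+1}\notin\operatorname{span}(a_1,\dots,a_j)$. That span is a $j$-dimensional subspace, so it contains exactly $q^j$ vectors. Hence there are $q^m-q^j$ admissible choices for $a_{j+1}$. This count is positive because $j<n\le m$, and it does not depend on which earlier vectors were chosen.

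Since each step's count is independent of the previous choices, the multiplication principle (formally, induction on $j$) gives the total $(q^m-1)(q^m-q)\cdots(q^m-q^{n-1})$.

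The only point needing care is that the span of $j$ independent vectors over $\bbF_q$ has exactly $q^j$ elements. This holds because the linear map $\bbF_q^j\to\bbF_q^m$, $(c_1,\dots,c_j)\mapsto\sum_i c_i a_i$, is injective by independence. Beyond this, no step is a real obstacle.
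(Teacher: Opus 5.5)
Your proposal is correct and matches the paper's approach: the paper gives no proof of its own, only noting that the result "follows from a simple counting argument" and citing Stanley. Your column-by-column count, where the $(j+1)$-th column must avoid a $j$-dimensional span of size $q^j$, is exactly that standard argument.
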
 
\subsection{$(\kappa,N)$-Sum Box protocol}\label{subsec:kappa_n-sum_box}

Throughout this subsection, let $M_x$ and $M_z$ be of dimension $\kappa \times N$ so that the block matrix $M=[M_x|M_z]\in\bbF_q^{\kappa\times 2N}$. The following proposition about the SSO condition matrix ({\it cf.} Definition \ref{def:SSO_matrix}) is used extensively in the sequel. 
\begin{proposition}\label{prop:SSO_proposition} 
Let all matrices below have compatible dimensions. 
\begin{enumerate}[label=(\alph*)]
    \item \label{prop:SSO_partition}
    Let  $M_{x_i},M_{z_i}$ be $\kappa \times N_i$ matrices over $\bbF_q$ for  $i=1,2$.
    Then,
\begin{align*}
        \Omega([M_{x_1}|M_{x_2}],[M_{z_1}|M_{z_2}]) &= \Omega(M_{x_1},M_{z_1} ) + \Omega(M_{x_2},M_{z_2}).
\end{align*}
\item \label{prop:SSO_asymmetric}  $\Omega(M_x,M_z)= -\Omega(M_z,M_x)$.  
\item  \label{prop:SSO_bilinear}
    $\Omega(M_x,M_z)=M_{x}M_{z}^T- M_{z}M_{x}^T$ is bilinear.  
    \item  \label{prop:SSO_congruence} Let $D$ be a $\kappa\times \kappa$ matrix over $\bbF_q$. Then  $\Omega(DM_x,DM_z)=D\Omega(M_x,M_z)D^T.$ 
\item  $\Omega(M_x,M_z)$ is alternating regardless of the characteristic of $\bbF_q$. \label{prop:SSO_alternating} 
\end{enumerate}  
\end{proposition}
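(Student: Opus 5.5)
The plan is to verify the five items directly from the definition $\Omega(M_x,M_z)=M_xM_z^T-M_zM_x^T$ in Definition~\ref{def:SSO_matrix}. Each item is a short matrix identity. The only point needing care is (e) in characteristic $2$, where the diagonal condition in Definition~\ref{def:alternating_matrix} is not implied by skew-symmetry.

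For (a), write $M_x=[M_{x_1}|M_{x_2}]$ and $M_z=[M_{z_1}|M_{z_2}]$. Block multiplication gives
\[
M_xM_z^T=M_{x_1}M_{z_1}^T+M_{x_2}M_{z_2}^T,
\]
and similarly $M_zM_x^T=M_{z_1}M_{x_1}^T+M_{z_2}M_{x_2}^T$. Subtracting and regrouping yields $\Omega(M_{x_1},M_{z_1})+\Omega(M_{x_2},M_{z_2})$.

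For (b), swapping the arguments gives $M_zM_x^T-M_xM_z^T=-\Omega(M_x,M_z)$. For (c), linearity in the first argument follows because $(aA+bB)C^T-C(aA+bB)^T=a(AC^T-CA^T)+b(BC^T-CB^T)$. Linearity in the second argument then follows from (b), or by the symmetric computation. For (d), we have $(DM_x)(DM_z)^T=DM_xM_z^TD^T$, and likewise for the other term, so factoring out $D$ on the left and $D^T$ on the right gives $D\Omega(M_x,M_z)D^T$.

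For (e), skew-symmetry is immediate: since $(M_xM_z^T)^T=M_zM_x^T$, we get $\Omega^T=M_zM_x^T-M_xM_z^T=-\Omega$. The key step is the diagonal. Its $(i,i)$ entry is
\[
\sum_j (M_x)_{ij}(M_z)_{ij}-\sum_j (M_z)_{ij}(M_x)_{ij}=0,
\]
because the two sums are identical term by term by commutativity of $\bbF_q$. This cancellation holds in the field itself and does not rely on dividing by $2$, so it covers characteristic $2$ as well. This is the only step that needs attention; the rest is routine bookkeeping.
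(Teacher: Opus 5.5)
Your proposal is correct and follows essentially the same route as the paper's appendix proof: block multiplication for (a), swapping the arguments for (b), distributivity for (c), factoring out $D$ and $D^T$ for (d), and for (e) skew-symmetry together with termwise cancellation of the diagonal entries, which avoids division by $2$. The only cosmetic difference is that you derive linearity in the second argument from (b), whereas the paper appeals to the symmetric computation; either works.
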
 
\begin{proof}
    See Appendix~\ref{prop:SSO_proposition:app}.
\end{proof}
The functionality of the $(\kappa,N)$-Sum Box is as follows. Let $q=p^e$ where $p$ is prime and $e\in \bbN$. Suppose a $q$-dimensional quantum system $\calH$ is spanned by an orthonormal basis $\{\ket{j}:j\in\bbF_q\}$. For $x\in \bbF_q$, we define the trace $\text{tr}_{\bbF_q/\bbF_p}(x):= \sum_{\ell=0}^{e-1}x^{p^\ell}\in\bbF_p$. Let $\omega := \text{exp}(2\pi \textrm{i} /p)$ where $\textrm{i} = \sqrt{-1}$. For $a,b\in\bbF_q$, we define unitary matrices 
$\mathscr{X}(a):= \sum_{j\in\bbF_q}\ket{j+a}\bra{j}$ and $\mathscr{Z}(b):= \sum_{j\in\bbF_q}\omega^{\text{tr}_{\bbF_q/\bbF_p}(bj)}\ket{j}\bra{j}$. 
\begin{lemma}[Operational form of the $N$-Sum Box~\cite{N_Sum_Box_JafarYao25}]
\label{lemma:N_sum_box_operational}
Let $M_x,M_z\in\bbF_q^{N\times N}$ and
$M=[M_x|M_z]\in\bbF_q^{N\times 2N}$. Suppose that
\begin{enumerate}[label=(\arabic*)]
    \item $\Omega(M_x,M_z)=0$, and
    \item $\operatorname{rank}(M)=N$.
\end{enumerate}
Then there exist mutually orthogonal density operators $\{\rho_{\bfv}^{M}:\bfv\in\bbF_q^N\}$ on the composite system $\calQ=\calQ_1\cdots\calQ_N$, where each $\calQ_i$ is a $q$-dimensional subsystem, such that for all $\bfx,\bfz\in\bbF_q^N$ and all $\bfv\in\bbF_q^N$,
\[
    \left(
        \bigotimes_{i=1}^N
        \mathscr{X}(x_i)\mathscr{Z}(z_i)
    \right)
    \rho_{\bfv}^{M}
    \left(
        \bigotimes_{i=1}^N
        \mathscr{X}(x_i)\mathscr{Z}(z_i)
    \right)^\dagger
    =
    \rho_{\bfv'}^{M},
\]
where $\bfv'=\bfv+M_x\bfx+M_z\bfz\in\bbF_q^N.$  Here the superscript $M$ indicates that the orthogonal family of states depends on the matrix $M$.
\end{lemma}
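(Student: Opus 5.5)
The statement is the operational $N$-Sum Box lemma, which I would prove with the stabilizer formalism over $\bbF_q$. I would work with the $2N$-dimensional symplectic space $\bbF_q^{2N}$ and the Weyl operators
\[
\mathscr{W}(\bfx,\bfz):=\bigotimes_{i=1}^N\mathscr{X}(x_i)\mathscr{Z}(z_i).
\]
Their commutation relation is
\[
\mathscr{W}(\bfx,\bfz)\mathscr{W}(\bfx',\bfz')=\omega^{\text{tr}_{\bbF_q/\bbF_p}(\bfz^T\bfx'-\bfx^T\bfz')}\mathscr{W}(\bfx',\bfz')\mathscr{W}(\bfx,\bfz),
\]
which follows from $\mathscr{Z}(b)\mathscr{X}(a)=\omega^{\text{tr}(ab)}\mathscr{X}(a)\mathscr{Z}(b)$.

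First I would take a complementary subspace. Since $\operatorname{rank}(M)=N$, the row space of $M$ is an $N$-dimensional subspace of $\bbF_q^{2N}$. I would choose $G=[G_x|G_z]\in\bbF_q^{N\times 2N}$ such that $\begin{bmatrix}M\\G\end{bmatrix}$ is invertible and the row space of $G$ is isotropic, i.e.\ $\Omega(G_x,G_z)=0$. I would also want the cross pairing $M_xG_z^T-M_zG_x^T$ to be invertible, and in fact equal to $I_N$ after row operations on $G$. This is the main obstacle: extending the Lagrangian subspace $\mathrm{row}(M)$ to a symplectic basis. Condition (1) says exactly that $\mathrm{row}(M)$ is isotropic, and it has dimension $N$, so it is Lagrangian. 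The standard symplectic-basis theorem (the nondegenerate analogue of Proposition~\ref{prop:standard_form_of_mtx:alternating}) then supplies a complementary Lagrangian $\mathrm{row}(G)$ with $M_xG_z^T-M_zG_x^T=I_N$. I would prove the extension by induction, choosing one dual vector at a time and symplectically projecting the remaining vectors, in the same way as the normal form for alternating matrices is derived.

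Next I would build the states. The operators $\{\mathscr{W}(\bfa^TG_x,\bfa^TG_z):\bfa\in\bbF_q^N\}$ pairwise commute because $\Omega(G_x,G_z)=0$. After phase corrections they form an abelian group $\mathcal{S}$ of $q^N$ operators. Its joint eigenspaces are one-dimensional, since $|\mathcal{S}|=q^N=\dim\calQ$, and they are labeled by characters $\bfa\mapsto\omega^{\text{tr}(\bfa^T\bfv)}$ with $\bfv\in\bbF_q^N$. I would let $\rho^M_{\bfv}$ be the projector onto the eigenspace labeled $\bfv$. These projectors are mutually orthogonal, and I would fix the labeling, possibly up to an affine shift, so that the vector $\bfv=0$ is the reference state.

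Finally I would check the covariance. Conjugating a stabilizer element $\mathscr{W}(G^T\bfa)$ by $\mathscr{W}(\bfx,\bfz)$ multiplies it by $\omega^{\text{tr}(\,\cdot\,)}$, where the exponent is the symplectic pairing between $G^T\bfa$ and $(\bfx,\bfz)$. Writing $(\bfx,\bfz)$ in the basis $\{M,G\}$, the $G$-component contributes nothing, because $\mathrm{row}(G)$ is isotropic. The $M$-component contributes $\bfa^T(M_x\bfx+M_z\bfz)$, using the normalization $M_xG_z^T-M_zG_x^T=I$ and keeping track of sign conventions. The eigenvalue label therefore shifts from $\bfv$ to $\bfv+M_x\bfx+M_z\bfz$. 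Phases disappear under conjugation $\rho\mapsto U\rho U^\dagger$, so this gives exactly $\rho^M_{\bfv'}$. The only care needed is consistency of signs and of the trace map, and if needed I would absorb a sign by replacing $G$ with $-G$.
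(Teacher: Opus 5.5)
The paper gives no proof of this lemma. It is quoted from the $N$-Sum Box work and used as a black box, so your argument has to stand on its own. As written it has a genuine gap: the stabilizer you choose realizes the wrong transfer matrix. To see this, redo your last step with $(\bfx,\bfz)=M^T\mathbf{b}+G^T\mathbf{c}$.

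By your commutation rule, conjugating $\mathscr{W}(G_x^T\mathbf{a},G_z^T\mathbf{a})$ by $\mathscr{W}(\bfx,\bfz)$ produces the phase $\omega^{\mathrm{tr}(\mathbf{a}^T(G_x\bfz-G_z\bfx))}$. The $\mathbf{c}$-part cancels because $\Omega(G_x,G_z)=0$, and your normalization turns the remainder into $-\mathbf{a}^T\mathbf{b}$. So the label moves by $\pm\mathbf{b}$, where $\mathbf{b}=G_z\bfx-G_x\bfz$ is the $\mathrm{row}(M)$-coordinate of $(\bfx,\bfz)$. It does not move by $M_x\bfx+M_z\bfz=MM^T\mathbf{b}+MG^T\mathbf{c}$.

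The map you implement has kernel $\mathrm{row}(G)$, while the desired map has kernel $\ker M$. Covariance under any relabeling of the states (affine or otherwise, with or without $G\to -G$) forces these two kernels to coincide. That is impossible in general, because $\mathrm{row}(G)$ is complementary to $\mathrm{row}(M)$ and $\ker M$ need not be.

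For instance, take $q=2$, $N=1$, $M=[1|1]$; both hypotheses hold. Your normalization forces $G=[0|1]$ or $G=[1|0]$, i.e. stabilizer $\mathscr{Z}(1)$ or $\mathscr{X}(1)$. Their eigenstates are shifted by $x$ or by $z$ respectively, never by $x+z$. Even when $\ker M$ is complementary, your conditions do not single it out. For $M=[I|0]$, every $G=[S|I]$ with $S$ symmetric meets all your requirements but yields $\bfx-S\bfz$; only $S=0$ works.

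The fix is to let $M$ itself determine the stabilizer. Take the group generated (after phase correction) by $\mathscr{W}(M_z^T\mathbf{a},-M_x^T\mathbf{a})$, $\mathbf{a}\in\bbF_q^N$, i.e. the Weyl operators indexed by $\ker M$. Then:
\begin{itemize}
\item Two such operators commute, since their commutation exponent is $-\mathbf{a}^T\Omega(M_x,M_z)\mathbf{a}'=0$.
\item The parametrization is injective because $\operatorname{rank}(M)=N$. So the group has order $q^N$, and its joint eigenspaces are one-dimensional and labeled by $\bfv\in\bbF_q^N$.
\item Conjugation by $\mathscr{W}(\bfx,\bfz)$ multiplies the generator by $\omega^{\mathrm{tr}(\bfz^TM_z^T\mathbf{a}+\bfx^TM_x^T\mathbf{a})}=\omega^{\mathrm{tr}(\mathbf{a}^T(M_x\bfx+M_z\bfz))}$, which is exactly the required shift.
\end{itemize}
In the example above this gives the stabilizer $\mathscr{X}(1)\mathscr{Z}(1)$, whose index lies in $\mathrm{row}(M)$ itself. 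That is precisely what your complementarity requirement rules out. With this choice, the symplectic-basis extension you flagged as the main obstacle is unnecessary. The rest of your outline (phase fixing, one-dimensional eigenspaces, phases disappearing under $\rho\mapsto U\rho U^\dagger$) then goes through.
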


Lemma~\ref{lemma:N_sum_box_operational} is the $N$-Sum Box primitive used as a black box in this paper. 

 \begin{remark}[Locally invertible transforms for the $N$-Sum Box]
\label{remark:LIT_N_sum_box}
The $N$-Sum Box black box also has a locally invertible transform (LIT)
functionality characterized in~\cite{N_Sum_Box_JafarYao25}. Let
\[
    \Lambda=\operatorname{diag}(\lambda_1,\dots,\lambda_N),
    \qquad \lambda_i\in\bbF_q \setminus \{0\},
\]
be an invertible diagonal matrix, and let $M=[M_x|M_z]\in\bbF_q^{N\times 2N}$. Suppose that $\operatorname{rank}([M_x|M_z])=N$ and that the transformed pair $(M_x,M_z\Lambda)$ satisfies $\Omega(M_x,M_z\Lambda)=0.$ Then Lemma~\ref{lemma:N_sum_box_operational} applies to $M^{\Lambda}:=[M_x|M_z\Lambda].$ Operationally, this Sum Box for $M^\Lambda$ can be used to compute the original transfer matrix $M=[M_x|M_z]$: Alice$_i$ locally replaces $z_i$ by $\lambda_i^{-1}z_i$ before applying the Weyl operator. Bob then obtains $M_x\bfx+M_z\Lambda(\Lambda^{-1}\bfz) = M_x\bfx+M_z\bfz.$ When $q=2$, the only nonzero field element is $1$, so every invertible diagonal matrix is the identity, i.e., the usage of LITs need not be considered when the computation is over $\bbF_2$. We point out that all our proposed coding schemes in this paper use the identity LIT, i.e., $\Lambda=I_N$.
\end{remark}


\begin{lemma}[Operational form of the $(\kappa,N)$-Sum Box~\cite{N_Sum_Box_JafarYao25}] 
\label{lemma:kappa,n-sum_box} 
Let $M_x,M_z\in\mathbb F_q^{\kappa\times N}$ and
$M=[M_x\mid M_z]$. Suppose that
\begin{enumerate}[label=(\arabic*)]
    \item $\Omega(M_x,M_z)=0$;
    \item $\text{rank}(M)=\kappa$;
    \item $\kappa\le N$.
\end{enumerate}
Then there exist matrices
$R_x,R_z\in\mathbb F_q^{(N-\kappa)\times N}$ such that
\[
\widehat M_x=
\begin{bmatrix}
M_x\\ R_x
\end{bmatrix},
\qquad
\widehat M_z=
\begin{bmatrix}
M_z\\ R_z
\end{bmatrix}
\]
satisfy
\[
\Omega(\widehat M_x,\widehat M_z)=0,
\qquad
\text{rank}([\widehat M_x\mid\widehat M_z])=N.
\]
Consequently, the $N$-Sum Box protocol associated with
$(\widehat M_x,\widehat M_z)$ allows Bob to recover
\[
\widehat M_xx+\widehat M_zz
=
\begin{bmatrix}
M_xx+M_zz\\
R_xx+R_zz
\end{bmatrix}.
\]
By retaining the first $\kappa$ coordinates and discarding the
remaining $N-\kappa$ coordinates, Bob recovers
$M_xx+M_zz$ exactly using $N$ transmitted qudits. We refer to
this completion-and-discard procedure as a $(\kappa,N)$-Sum Box
protocol.
\end{lemma}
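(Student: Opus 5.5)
The plan is to construct the completion rows $R_x,R_z$ explicitly through a symplectic-geometry argument. Once the completion exists, the operational claim follows directly from Lemma~\ref{lemma:N_sum_box_operational}.

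Equip $\bbF_q^{2N}$ with the standard symplectic form
\[
\langle (a,b),(c,d)\rangle := a^Tc' - b^T a' \quad\text{i.e.}\quad \langle u,v\rangle = u J v^T,\qquad J=\begin{bmatrix}0 & I_N\\ -I_N & 0\end{bmatrix}.
\]
Here rows of $M=[M_x|M_z]$ are viewed as vectors in $\bbF_q^{2N}$. A direct computation gives
\[
M J M^T = M_xM_z^T - M_zM_x^T = \Omega(M_x,M_z).
\]
Condition (1) therefore says that the row space $S$ of $M$ is totally isotropic. Condition (2) says $\dim S=\kappa$. The form is nondegenerate and alternating in every characteristic, since its diagonal vanishes; this matches Proposition~\ref{prop:SSO_proposition}\ref{prop:SSO_alternating}.

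The key step is the standard fact that any totally isotropic subspace of a $2N$-dimensional symplectic space is contained in a Lagrangian, meaning a totally isotropic subspace of dimension $N$. I would prove it by induction. Suppose $S$ is isotropic with $\dim S=\kappa<N$. Nondegeneracy gives $\dim S^{\perp}=2N-\kappa>\kappa$, and $S\subseteq S^\perp$, so some $v\in S^\perp\setminus S$ exists. Then $S+\langle v\rangle$ is still isotropic, because $\langle v,v\rangle=0$ holds automatically for an alternating form. Its dimension is $\kappa+1$. Repeating this $N-\kappa$ times yields vectors $v_1,\dots,v_{N-\kappa}$.

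Stack these vectors as rows $[R_x|R_z]$. By construction, the rows of $[\widehat M_x|\widehat M_z]$ span an $N$-dimensional isotropic subspace. Hence $\text{rank}([\widehat M_x|\widehat M_z])=N$ and $\Omega(\widehat M_x,\widehat M_z)=[\widehat M]J[\widehat M]^T=0$.

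Finally, apply Lemma~\ref{lemma:N_sum_box_operational} to $(\widehat M_x,\widehat M_z)$. Bob measures in the orthogonal family $\{\rho^{\widehat M}_{\bfv}\}$, starting from $\bfv=0$, and obtains $\widehat M_x\bfx+\widehat M_z\bfz$. The first $\kappa$ coordinates of this vector are exactly $M_x\bfx+M_z\bfz$. This uses $N$ qudits.

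The main obstacle is the characteristic-2 subtlety. There one must verify that $\Omega$ is genuinely the alternating form $uJv^T$ with zero diagonal, rather than merely a symmetric one, because isotropy of single vectors is what makes the extension step work. The identity $u J u^T = a^Tb-b^Ta=0$ settles this for every field, so I would state it explicitly in the proof.
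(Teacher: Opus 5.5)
Your proposal is correct and follows the paper's proof. The paper likewise takes $J=\begin{bmatrix}0&I_N\\-I_N&0\end{bmatrix}$, notes $MJM^T=\Omega(M_x,M_z)=0$ with $\text{rank}(M)=\kappa$, completes to a rank-$N$ strongly self-orthogonal matrix, applies Lemma~\ref{lemma:N_sum_box_operational}, and discards the auxiliary coordinates. The only difference is that the paper cites Lemma~1 of \cite{N_Sum_Box_JafarYao25} for the completion step, whereas your isotropic-extension induction proves it directly. One minor typo: your displayed form should read $a^Td-b^Tc$, which is what $uJv^T$ gives.
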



\begin{proof}
Let $J=\begin{bmatrix}
        0&I_N\\
        -I_N&0
    \end{bmatrix}, \,
    G=M^T\in\bbF_q^{2N\times\kappa}.$ The assumptions imply $G^TJG=MJM^T=\Omega(M_x,M_z)=0,\,     \operatorname{rank}(G)=\kappa.$ By Lemma~1 of~\cite{N_Sum_Box_JafarYao25}, $G$ can be completed to a rank-$N$ strongly self-orthogonal matrix $\widehat G=[G|G_{\mathrm{aux}}]\in\bbF_q^{2N\times N},\,
    \widehat G^TJ\widehat G=0.$ Set $\widehat M=\widehat G^T$. Since the first $\kappa$ rows of $\widehat M$
are the rows of $M$, we can write
\[
    \widehat M
    =
    \begin{bmatrix}
        M\\ R
    \end{bmatrix}
    =
    [\widehat M_x|\widehat M_z],
    \qquad
    \widehat M_x=
    \begin{bmatrix}
        M_x\\R_x
    \end{bmatrix},
    \quad
    \widehat M_z=
    \begin{bmatrix}
        M_z\\R_z
    \end{bmatrix}.
\]
Then $\Omega(\widehat M_x,\widehat M_z)=0,\,
    \operatorname{rank}(\widehat M)=N.$ 
Hence the $N$-Sum Box (\textit{cf.} Lemma \ref{lemma:N_sum_box_operational}) applies to $(\widehat M_x,\widehat M_z)$ and Bob obtains
\[
    \widehat M_x\bfx+\widehat M_z\bfz
    =
    \begin{bmatrix}
        M_x\bfx+M_z\bfz\\
        R_x\bfx+R_z\bfz
    \end{bmatrix}.
\]
Bob keeps the first $\kappa$ coordinates and discards the remaining
$N-\kappa$ coordinates. In this paper we use this construction with identity
LIT.
\end{proof}  
Operationally, the senders run the $N$-Sum Box associated with
$(\widehat M_x,\widehat M_z)$. Alice$_i$ applies
$X(x_i)Z(z_i)$ to her subsystem $Q_i$ and transmits $Q_i$ to
Bob. Bob obtains the $N$-dimensional output
\[
\widehat M_xx+\widehat M_zz
=
\begin{bmatrix}
M_xx+M_zz\\
R_xx+R_zz
\end{bmatrix},
\]
keeps its first $\kappa$ coordinates, and discards the remaining $N-\kappa$ auxiliary coordinates. Initially, the senders share the state $\rho_{\mathbf{0}}^M$, with Alice$_i$ holding subsystem $\calQ_i$. Upon receiving input pair $(x_i,z_i)$, Alice$_i$ applies the local Weyl operator $\mathscr{X}(x_i)\mathscr{Z}(z_i)$ to $\calQ_i$ and then sends $\calQ_i$ to Bob through her noiseless quantum link. By Lemma~\ref{lemma:kappa,n-sum_box}, the joint state received by Bob is $\rho_{\bfv'}^{M} $  where $\bfv'  = M_x\bfx+M_z\bfz\in \bbF_q^{\kappa}$. Since the states $\{\rho_{\bfv}^{M}:\bfv\in\bbF_q^\kappa\}$ are mutually orthogonal, Bob can distinguish them with zero error (see Lemma 1.2 of \cite{BrietBLPS15}). 
Hence the measurement outcome is $M_x\bfx+M_z\bfz\in \bbF_q^{\kappa}$,  which is the desired linear function.

\begin{proposition} \label{prop:trivial_precoding_bound}
Let \(\mathrm{LC}=(\bbF_q,k,\bfV_1,\dots,\bfV_k)\) be an Interval-\([0,1]\) problem. For invertible matrices \(P_i\in\bbF_q^{m_i\times m_i}\), define
\[
M(P_1,\dots,P_k)
:=
\sum_{i\in[k]}
\Omega\left( 
        \begin{bmatrix}
            \bfV_iP_i\\0
        \end{bmatrix},
        \begin{bmatrix}
            0\\ \bfV_i
        \end{bmatrix}
     \right)
\in\bbF_q^{2m\times 2m}.
\]
Then
\[
    \mathbf{Prob}_1(\mathrm{LC})
    \le
    \frac{1}{2}\text{rank} (M(P_1,\dots,P_k))
    \le m .
\]
\end{proposition}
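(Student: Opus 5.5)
We build an explicit element of $\calF(r)$, where $2r=\text{rank}(M(P_1,\dots,P_k))$. Since $\textbf{Prob}_1(\mathrm{LC})$ is an infimum over feasible $s$, exhibiting one feasible tuple at $s=r$ gives $\textbf{Prob}_1(\mathrm{LC})\le r$. The bound $r\le m$ is then immediate, because $M(P_1,\dots,P_k)$ is a $2m\times 2m$ matrix and so has rank at most $2m$.

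First, by Proposition~\ref{prop:SSO_proposition}\ref{prop:SSO_alternating} each summand $\Omega(\cdot,\cdot)$ is alternating. Sums of alternating matrices are alternating, so $M:=M(P_1,\dots,P_k)$ is alternating, and so is $-M$. Hence $\text{rank}(-M)=\text{rank}(M)=2r$ is even and $r$ is well defined. By Lemma~\ref{lem:alternating_omega_factorization} applied to $-M$, there exist $Q,R\in\bbF_q^{2m\times r}$ with $\Omega(Q,R)=-M$.

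Next, keep $P_2,\dots,P_k$ as given and set
\[
\tilde M_1=\left[\begin{bmatrix}\bfV_1P_1\\0\end{bmatrix}\Big|\, Q\right],\qquad
\tilde M_2=\left[\begin{bmatrix}0\\ \bfV_1\end{bmatrix}\Big|\, R\right],
\]
both in $\bbF_q^{2m\times(m_1+r)}$. By the partition rule, Proposition~\ref{prop:SSO_proposition}\ref{prop:SSO_partition},
\[
\Omega(\tilde M_1,\tilde M_2)=\Omega\left(\begin{bmatrix}\bfV_1P_1\\0\end{bmatrix},\begin{bmatrix}0\\ \bfV_1\end{bmatrix}\right)+\Omega(Q,R).
\]
Adding $\sum_{i\ge 2}\Omega\left(\begin{bmatrix}\bfV_iP_i\\0\end{bmatrix},\begin{bmatrix}0\\ \bfV_i\end{bmatrix}\right)$ therefore gives $M+\Omega(Q,R)=M-M=0$, which is constraint (b).

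For constraint (a), choose $\tilde X_1\in\bbF_q^{2(m_1+r)\times 2m_1}$ to act blockwise on the four column groups of $[\tilde M_1|\tilde M_2]$, which are $[\bfV_1P_1;0]$, $Q$, $[0;\bfV_1]$, $R$:
\begin{itemize}
\item the block acting on $[\bfV_1P_1;0]$ is $[P_1^{-1}\ |\ 0]$, which uses the invertibility of $P_1$;
\item the block acting on $[0;\bfV_1]$ is $[0\ |\ I_{m_1}]$;
\item the blocks acting on $Q$ and $R$ are zero.
\end{itemize}
Then
\[
[\tilde M_1|\tilde M_2]\tilde X_1=\left[\begin{bmatrix}\bfV_1\\0\end{bmatrix}\Big|\begin{bmatrix}0\\ \bfV_1\end{bmatrix}\right]=\begin{bmatrix}\bfV_1&\\&\bfV_1\end{bmatrix}.
\]
So the tuple lies in $\calF(r)$, and both inequalities follow.

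There is no real obstacle. The only nontrivial input is the factorization of an alternating matrix as $\Omega(Q,R)$ with exactly $r=\tfrac12\text{rank}$ columns, and that is already provided by Lemma~\ref{lem:alternating_omega_factorization}. One should also note that if $r=0$ the matrices $Q,R$ are simply empty.
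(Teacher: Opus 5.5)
Your proof is correct and follows essentially the same route as the paper. Like the paper, you factor the alternating obstruction with $r=\tfrac12\text{rank}(M)$ columns via Lemma~\ref{lem:alternating_omega_factorization}, append these columns to the $\bfV_1$-blocks of $\tilde M_1,\tilde M_2$, and use the same block matrix $\tilde X_1$ with $P_1^{-1}$ and $I_{m_1}$ blocks. The only difference is cosmetic: the paper factors $M=\Omega(Q,R)$ and appends $-Q$, whereas you factor $-M=\Omega(Q,R)$ and append $Q$.
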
 
\begin{proof}
    The proof simply cancels the residual SSO obstruction \(M(P_1,\dots,P_k)\) by representing this alternating obstruction as \(\Omega(Q,R)\) using \(r=\text{rank}(M)/2\) auxiliary columns (see Appendix~\ref{app:prop:trivial_precoding_bound} for details). 
\end{proof}

\section{Proofs of Results}\label{sec:main_contribution_formal_statement_and_proof}

We begin with a formal definition and discussion of Restricted Interval-$[0,1]$ LC problems and then present the proofs of the results stated in  Section~\ref{sec:statement_main_contributions}.
\begin{definition}[{Restricted Interval-$[0,1]$ LC}]
\label{def:Restricted Interval-$[0,1]$}
An Interval-$[0,1]$ LC problem $(\bbF_q,k,\bfV_1,\dots,\bfV_k)$ is called Restricted Interval-$[0,1]$ if there are an integer $L\ge 1$ and non-negative integers $a_1,\dots,a_k$ satisfying
\[
    a_i\le \frac{m_iL}{2},\quad \forall i\in[k], \text{~and~}
    \sum_{i\in[k]}a_i=mL,
\]
such that, for each $i\in[k]$, one can choose $a_i$ columns of $\bfV_i^{[L]}$, denoted by the submatrix $A_i$, so that the matrix $[A_1|A_2|\cdots|A_k]$  is invertible.
\end{definition}


\begin{example}[{A Restricted Interval-[0,1] LC problem}]
\label{example:restricted_interval_example}
Consider the LC problem $\mathrm{LC}=(\bbF_2,3,\bfV_1,\bfV_2,\bfV_3)$
with $m=2$ and $m_i=2, i = 1, \dots, 3$ (so that it is an Interval-$[0,1]$ LC problem) and
\[
Y = \underbrace{\begin{bmatrix}
    1&1\\0&1
\end{bmatrix} }_{\bfV_1}  \begin{bmatrix}
    W_{11}\\W_{12}
\end{bmatrix} + \underbrace{\begin{bmatrix}
    1&0\\1&1
\end{bmatrix} }_{\bfV_2}  \begin{bmatrix}
    W_{21}\\W_{22}
\end{bmatrix} + \underbrace{\begin{bmatrix}
    0&1\\1&0
\end{bmatrix} }_{\bfV_3}  \begin{bmatrix}
    W_{31}\\W_{32}
\end{bmatrix}
\] 
We now show that it is Restricted Interval-$[0,1]$. Write the columns as
\[
    \bfV_1=[\alpha_1\mid \beta_1],
    \qquad
    \bfV_2=[\beta_2\mid \alpha_2],
    \qquad
    \bfV_3=[\gamma_1\mid \gamma_2],
\]
where $\alpha_1=\begin{bmatrix}1\\0\end{bmatrix},$ and $\alpha_2=\begin{bmatrix}0\\1\end{bmatrix}.$ Choose $A_1=[\alpha_1]$, $A_2=[\alpha_2]$,  $A_3=\emptyset.$  Then $a_1=a_2=1$, $a_3=0$  and
\[
    a_i\le \frac{m_i}{2}
    \qquad \text{for all } i\in\{1,2,3\}.
\]
Moreover,  $[A_1\mid A_2] =
    \begin{bmatrix}
        1&0\\
        0&1
    \end{bmatrix}$ is invertible. Hence the problem is Restricted Interval-$[0,1]$ with $L=1$.

 
\end{example}

\begin{example}[An Interval-{[0,1]} LC problem that is not Restricted]
\label{example:interval_not_restricted}
Consider
\[
\mathrm{LC}
=
\left(
\bbF_2,4,
\bfV_1=\begin{bmatrix}1\\0\end{bmatrix},
\bfV_2=\begin{bmatrix}0\\1\end{bmatrix},
\bfV_3=\begin{bmatrix}1\\0\end{bmatrix},
\bfV_4=\begin{bmatrix}1\\0\end{bmatrix}
\right).
\]
Thus each Alice$_i$ has one input symbol, and Bob wants to compute
\[
    Y
    =
    \begin{bmatrix}1\\0\end{bmatrix}W_1
    +
    \begin{bmatrix}0\\1\end{bmatrix}W_2
    +
    \begin{bmatrix}1\\0\end{bmatrix}W_3
    +
    \begin{bmatrix}1\\0\end{bmatrix}W_4.
\]
Here $m=2$ and $\sum_{i=1}^4m_i=4$, so $\frac{2m}{\sum_{i=1}^4m_i}=1.$ Hence this is an Interval-$[0,1]$ LC problem. We claim that it is not Restricted Interval-$[0,1]$. Fix any $L\ge 1$. After a common permutation of the $2L$ output coordinates, we may write
\[
    \bfV_2^{[L]}
    =
    \begin{bmatrix}
        0_{L\times L}\\
        I_L
    \end{bmatrix},
    \qquad
    \bfV_1^{[L]}
    =
    \bfV_3^{[L]}
    =
    \bfV_4^{[L]}
    =
    \begin{bmatrix}
        I_L\\
        0_{L\times L}
    \end{bmatrix}.
\]
Therefore,
\[
    \text{col}(\bfV_2^{[L]})
    \cap
    \text{col}\!\left(
    [\bfV_1^{[L]}\mid \bfV_3^{[L]}\mid \bfV_4^{[L]}]
    \right)
    =
    \{0\}.
\]
The columns of $\bfV_2^{[L]}$ are the only columns that have nonzero components
in the last $L$ coordinates. Hence any invertible $2L\times 2L$ submatrix of
\[
    [\bfV_1^{[L]}\mid \bfV_2^{[L]}\mid \bfV_3^{[L]}\mid \bfV_4^{[L]}]
\]
must include all $L$ columns of $\bfV_2^{[L]}$. In the notation of
Definition~\ref{def:Restricted Interval-$[0,1]$}, this forces $a_2=L.$  However, $m_2=1$, so the Restricted condition would require $a_2\le \frac{m_2L}{2}=\frac{L}{2},$ which is impossible. Therefore this LC problem is Interval-$[0,1]$ but not Restricted Interval-$[0,1]$. 
\end{example}
We next introduce a simple sufficient condition for an Interval-$[0,1]$ LC problem to be Restricted. This condition will also be useful for showing that Restricted Interval-$[0,1]$ LC problems form a large class over sufficiently large fields. 
\begin{definition}[Double-Basis LC]\label{def:Double-Basis_LC}
    An LC problem   $ (\bbF_q, k, \bfV_1,\dots,\bfV_k)$ is said to be Double-Basis if the matrix $\bfV = [\bfV_1|\dots|\bfV_k]$ contains two column-disjoint $m\times m$ full-rank submatrices $D,D'$. It can be seen that this condition can hold only if $\sum_{i\in[k]}m_i\ge 2m$.
\end{definition} 
\begin{definition}\label{def:proportion_of_Double-Basis}
Let $k,\bbF_q$ and $m,m_1,\dots,m_k$ be fixed. Let $\text{Num}_A$ denote the number of LC problems $(\bbF_q, k, \bfV_1,\dots,\bfV_k)$ that are (i) Double‑Basis, and (ii) satisfy Assumption \ref{assumption:lin_ind_assumption}, and $\text{Num}_B$ denote the number of LC problems $(\bbF_q, k, \bfV_1,\dots,\bfV_k)$ that satisfy Assumption \ref{assumption:lin_ind_assumption} only.  Define $\rho = \frac{\text{Num}_A}{\text{Num}_B} \leq 1$ to be the proportion of Double-Basis LC problems among all LC problems that satisfy Assumption \ref{assumption:lin_ind_assumption}. Note that $\rho$ depends on $k,q,m,m_1,\dots,m_k$.
\end{definition}
The proof of the following proposition appears in Appendix \ref{app:prop:Restricted Interval-$[0,1]$_two_disjoint_submatrices_condition}.
\begin{proposition}\label{prop:Restricted Interval-$[0,1]$_two_disjoint_submatrices_condition}
Let  $(\bbF_q, k, \bfV_1,\dots,\bfV_k)$  be an LC problem such that $\frac{2m}{\sum_{i\in[k]} m_i}  \in[0,1]$ and  Assumption \ref{assumption:lin_ind_assumption} is satisfied.  The following statements hold.
\begin{enumerate}[label=(\alph*)]
    \item If  $(\bbF_q, k, \bfV_1,\dots,\bfV_k)$ is Double-Basis, then  $(\bbF_q, k, \bfV_1,\dots,\bfV_k)$   is Restricted Interval-$[0,1]$.
    \item There exists an algorithm to decide whether $(\bbF_q, k, \bfV_1,\dots,\bfV_k)$ is Double-Basis in   $O(\text{poly}(m,\sum_{i\in[k]} m_i))$ time.
    \item The fraction $\rho\ge \max\left(1- \frac{k+2}{q-1},0\right)$.
\end{enumerate} 
\end{proposition}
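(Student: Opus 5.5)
The plan is to prove the three parts separately. Part (a) is a direct construction with $L=2$. Part (b) is a reduction to matroid union/partition. Part (c) is a counting argument in which the columns are revealed one at a time.

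\textbf{Part (a).} Let $D,D'$ be the two column-disjoint full-rank $m\times m$ submatrices of $\bfV=[\bfV_1|\cdots|\bfV_k]$. For each $i$, let $d_i$ and $d_i'$ be the numbers of columns of $\bfV_i$ used by $D$ and $D'$, respectively. Since $D$ and $D'$ are column-disjoint, $d_i+d_i'\le m_i$.
\begin{itemize}
\item Take $L=2$, so $\bfV_i^{[2]}=\operatorname{diag}(\bfV_i,\bfV_i)$.
\item Let $A_i$ consist of the $d_i$ columns of the first copy that belong to $D$, together with the $d_i'$ columns of the second copy that belong to $D'$.
\item After a column permutation, $[A_1|\cdots|A_k]=\operatorname{diag}(D,D')$, which is invertible.
\item The counts check out: $a_i=d_i+d_i'\le m_i=m_iL/2$ and $\sum_i a_i=2m=mL$.
\end{itemize}
So Definition~\ref{def:Restricted Interval-$[0,1]$} holds with $L=2$.

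\textbf{Part (b).} The Double-Basis property says exactly that the column matroid of $\bfV$ (a linear matroid on $\sum_i m_i$ elements of rank $m$) contains two disjoint bases. Equivalently, the rank of the union matroid $\mathcal M\vee\mathcal M$ equals $2m$. Edmonds' matroid partition (union) algorithm computes this rank using polynomially many independence-oracle calls. Each call is a Gaussian elimination on at most $m$ columns of length $m$, costing $O(m^3)$ field operations. This gives a running time polynomial in $m$ and $\sum_i m_i$.

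\textbf{Part (c).} This is the main work.
\begin{enumerate}
\item Fix in advance $2m$ column positions of $\bfV$ (possible since $\sum_i m_i\ge 2m$), split into two disjoint groups of $m$ that will form $D$ and $D'$.
\item Consider the product space $\mathcal P$ of tuples in which each $\bfV_i$ has full column rank; its size is $\prod_i\#\text{FullCol}(m,m_i;q)$ by Proposition~\ref{prop:number_of_FullCol}. Then $\text{Num}_B\le|\mathcal P|$. Moreover $\text{Num}_A$ is at least the number of tuples in $\mathcal P$ for which the fixed $D$ and $D'$ are both nonsingular. Such tuples automatically satisfy Assumption~\ref{assumption:lin_ind_assumption}(2). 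Hence
\[
\rho\ge \Pr_{\mathcal P}[D\text{ and }D'\text{ nonsingular}].
\]
\item Reveal the columns of $D$ one at a time. A column of $\bfV_i$, conditioned on the previously revealed columns of the same $\bfV_i$ (say $t$ of them), is uniform on the $q^m-q^t$ vectors outside their span.
\item The $j$-th column of $D$ makes $D$ singular only if it lies in a span of dimension at most $j-1$. This has conditional probability at most $q^{j-1}/(q^m-q^{m-1})$.
\item Summing over $j$ gives $\Pr[D\text{ singular}]\le \frac{q}{(q-1)^2}$, which is roughly $\frac{1}{q-1}$; the same holds for $D'$. A union bound over $D$ and $D'$ then yields a bound of the form $1-\frac{c}{q-1}$.
\end{enumerate}

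The main obstacle is matching the stated constant $k+2$. I expect the extra $k/(q-1)$ to come from handling the boundary effects per sender: the chosen columns of different $\bfV_i$ are interleaved in $D$ and $D'$, so the conditioning is not uniform. I would first try charging an extra $1/(q-1)$ per sender for the columns drawn from $\bfV_i$, plus one $1/(q-1)$ each for $D$ and $D'$, for a total of $(k+2)/(q-1)$. Clipping at $0$ is trivial.
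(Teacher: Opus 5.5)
Parts (a) and (b) follow the paper's proof. The only difference is that the paper invokes Quanrud's $p$-fold matroid union algorithm with unit capacities where you use Edmonds' partition algorithm, which does not matter.

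Part (c) is correct but takes a different route, and the obstacle you describe at the end does not exist. The paper samples every column of $\bfV$ uniformly from $\bbF_q^m$ with no conditioning and uses $\rho=\Pr(\text{Event}_A)/\Pr(\text{Event}_B)\ge\Pr(\text{Event}_A)$. It then bounds $\Pr(\text{Event}_A)\ge\Pr(\text{Event}_C)+\Pr(\text{Event}_{C'})-1$ using the exact product formulas of Proposition~\ref{prop:number_of_FullCol}. Here $\Pr(\text{Event}_C)\ge(1-\frac{1}{q-1})^k$ is the probability that every $\bfV_i$ has full column rank, and $\Pr(\text{Event}_{C'})\ge(1-\frac{1}{q-1})^2$ is the probability that the first $2m$ columns form $D,D'$. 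Bernoulli's inequality then yields $1-\frac{k+2}{q-1}$. So the $k$ in the constant is just the price of condition (1), paid because the paper throws away the denominator $\Pr(\text{Event}_B)$.

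Your argument conditions on $\mathcal P$ from the start and never pays that price. Your union bound $\rho\ge 1-\frac{2q}{(q-1)^2}$ is therefore independent of $k$, and it already implies the claim. If $1-\frac{k+2}{q-1}>0$, then $q-1>k+2\ge 3$, so $\frac{2q}{q-1}=2+\frac{2}{q-1}<3\le k+2$, which gives $\frac{2q}{(q-1)^2}<\frac{k+2}{q-1}$. Otherwise the claimed bound is $0$ and holds trivially. Drop the per-sender charging plan: it is unnecessary, and it is not where the paper's $k$ comes from.

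On the trade-off: the paper's route is purely mechanical, with closed-form products and no conditioning. Yours must justify the sequential conditional uniformity. That means independence of the $\bfV_i$ under the product measure on $\mathcal P$, and the fact that any not-yet-revealed column of $\bfV_i$ is uniform off the span of the revealed ones; the latter follows by counting extensions and uses $m_i\le m$. In return you get a sharper bound that does not degrade with the number of senders.
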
   

\begin{remark} 

\begin{itemize} 
\item   Proposition \ref{prop:Restricted Interval-$[0,1]$_two_disjoint_submatrices_condition}(b) shows that one can verify in polynomial time whether a given LC problem is a Double‑Basis LC problem. From Proposition \ref{prop:Restricted Interval-$[0,1]$_two_disjoint_submatrices_condition}(a), it follows that we can apply  the coding scheme of Restricted Interval-$[0,1]$ LC problems to Double-Basis LC problems.
 
\item  For a fixed number of senders, Proposition~\ref{prop:Restricted Interval-$[0,1]$_two_disjoint_submatrices_condition}(c) shows that, when the field size $q$ is sufficiently large, an LC problem sampled uniformly from the class satisfying Assumption~\ref{assumption:lin_ind_assumption} is Double-Basis with high probability. Hence, by part~(a), it is also Restricted Interval-$[0,1]$ with high probability. 
\end{itemize}  
\end{remark}


\subsection{Theorem \ref{theorem:first_main_result_informal}(a): Proof and Discussion} 
The idea is to compute two instances jointly. Alice$_1$ is allowed to use $s$ auxiliary qudits, and the precoding variables $(\tilde{M}_1,\tilde{M}_2,\tilde{X}_1,\{P_i\}_{i=2}^k)$ are chosen so that the two-instance LC problem can be represented as a Sum Box instance satisfying the SSO condition. The parameter $s$ measures the number of auxiliary dimensions needed to make this possible. 
\begin{proof}  
 We consider an $L=2$ instance coding scheme. With the definitions of $P_i,\, i = 2, \dots, k$ and $\tilde{M}_1, \tilde{M}_2$ and $\tilde{X}_1$ that satisfy the Optimal Precoding Problem for a given $s \geq 0$, we observe that

\begin{equation}\label{eq:lemma6_eq4}
    \begin{split}
& \begin{bmatrix}
        \bfV_1&\\
        &\bfV_1
    \end{bmatrix}\begin{bmatrix}
        W_1^{(1)}\\W_1^{(2)}
    \end{bmatrix} + \dots +\begin{bmatrix}
        \bfV_k&\\
        &\bfV_k
    \end{bmatrix}\begin{bmatrix}
        W_k^{(1)}\\W_k^{(2)}
    \end{bmatrix} \\
    =& [\tilde{M}_1 | \tilde{M}_2]\tilde{X}_1 \begin{bmatrix}
        W_1^{(1)}\\W_1^{(2)}
    \end{bmatrix} + \sum_{i\in [k] \setminus \{1\} }  \begin{bmatrix}
        \bfV_i&\\
        &\bfV_i
    \end{bmatrix}\begin{bmatrix}
    P_i&\\
    &I_{m_i}
\end{bmatrix}\begin{bmatrix}
    P_i^{-1}&\\
    &I_{m_i}
\end{bmatrix} \begin{bmatrix}
        W_i^{(1)}\\W_i^{(2)}
    \end{bmatrix} \\ 
    =& [\tilde{M}_1|\tilde{M}_2] \left(\tilde{X}_1 \begin{bmatrix}
        W_1^{(1)}\\W_1^{(2)}
    \end{bmatrix} \right)+ \sum_{i\in [k] \setminus \{1\} }  \begin{bmatrix}
        \bfV_iP_i&\\
        &\bfV_i
    \end{bmatrix} \begin{bmatrix}
       P_i^{-1} W_i^{(1)}\\W_i^{(2)}
    \end{bmatrix}  
    \end{split}
\end{equation} 
where we recall that $\tilde{M}_1,\tilde{M}_2\in\bbF_q^{2m\times(m_1+s)}$, and $\tilde{X}_1\in \bbF_q^{2(m_1+s)\times 2m_1}$.
We define the new data vectors
\begin{equation}\label{eq:def_of_hat_w}
    \begin{bmatrix}
    \hat W_{x,1}\\
    \hat W_{z,1}
\end{bmatrix}  := \tilde{X}_1 \begin{bmatrix}
        W_1^{(1)}\\W_1^{(2)}
    \end{bmatrix} ,\, \begin{bmatrix}
    \hat W_{x,i}\\
    \hat W_{z,i}
\end{bmatrix}  := \begin{bmatrix}
       P_i^{-1} W_i^{(1)}\\W_i^{(2)}
    \end{bmatrix}\text{ for }i\in\{2,\dots,k\},
\end{equation}  where $\hat W_{x,i}$ and $\hat W_{z,i}$ are the first and second $m_i$ components, respectively for $i\in\{2,\dots,k\}$ and $\hat W_{x,1}$ and $\hat W_{z,1}$ have $m_1+s$ components. Then, \eqref{eq:lemma6_eq4} can be written
    as 
\begin{equation}
    \begin{split}\label{eq:lemma6_eq3}
   \begin{bmatrix}
    Y^{(1)}\\Y^{(2)}
\end{bmatrix} =  [\tilde{M}_1|\tilde{M}_2]  \begin{bmatrix}
    \hat W_{x,1}\\
    \hat W_{z,1}
\end{bmatrix} + \sum_{i\in [k] \setminus \{1\} }  \begin{bmatrix}
        \bfV_iP_i&\\
        &\bfV_i
    \end{bmatrix} \begin{bmatrix}
    \hat W_{x,i}\\
    \hat W_{z,i}
\end{bmatrix}  .
    \end{split}
\end{equation} 
Set $M_x = [M_{x,1}|\dots|M_{x,{k}}],$ $M_z = [M_{z,1}|\dots|M_{z,{k}}]$ where
\begin{align*} 
   &M_{x,1} = \tilde{M}_1, M_{z,1} = \tilde{M}_2, \\
   & M_{x,i} = \begin{bmatrix}
        \bfV_iP_{i}\\
        0
    \end{bmatrix}, M_{z,i} = \begin{bmatrix}
        0\\
        \bfV_i
    \end{bmatrix}\text{ for }i\in\{2,\dots,k\}.
\end{align*}  
Let $\kappa = 2m,N = s+\sum_{i=1}^km_i$. We note that $2m\le s+\sum_{i=1}^km_i$ since the LC problem is Interval-$[0,1]$. The constraints of the Optimal Precoding Problem imply that $\Omega(M_x,M_z)=0$.  
To see the rank condition, we note that there exists a block precoding matrix $R$ induced by
$\tilde X_1$ and $\{P_i^{-1}\}_{i=2}^k$ such that 
\[
    [M_x|M_z]R
    =
    [\bfV_1^{[2]}|\bfV_2^{[2]}|\cdots|\bfV_k^{[2]}]
\]
For the sake of completeness, a proof of this statement appears in Appendix~\ref{app:block_precoding_rank_identity}. 
By
Assumption~\ref{assumption:lin_ind_assumption}, the matrix on the right has
rank $2m$. Hence $\text{rank}([M_x|M_z])\ge 2m$. Since $[M_x|M_z]$ has
only $2m$ rows, we conclude that
\[
    \text{rank}([M_x|M_z])=2m.
\]

 Therefore, all three conditions of  Lemma \ref{lemma:kappa,n-sum_box} are satisfied and one can compute  \eqref{eq:lemma6_eq3} 
by a $(2m, s+\sum_{i=1}^km_i )$-Sum Box with identity LIT (\textit{cf.} Remark~\ref{remark:LIT_N_sum_box}). This $(2m,s+\sum_{i=1}^km_i )$-Sum Box is described as follows.  Let $X$ and $Z$ be the vectors obtained by vertically stacking $(\hat W_{x,1},\dots,\hat W_{x,{k}})$ and $(\hat W_{z,1},\dots,\hat W_{z,{k}})$ respectively, i.e., 
\begin{align*}
    X^T = \begin{bmatrix}
        \hat W_{x,1}^T&
        \dots&
        \hat W_{x,{k}}^T
    \end{bmatrix},\, Z^T = \begin{bmatrix}
        \hat W_{z,1}^T&
        \dots&
        \hat W_{z,{k}}^T
    \end{bmatrix}.
\end{align*}
By \eqref{eq:def_of_hat_w}, each $\hat W_{x,i}$ and $\hat W_{z,i}$ can be obtained by local precoding by Alice$_i$. Then, each $X_j$ and $Z_j$ belong to the same Alice for $j\in[s+\sum_{i\in[k]} m_i]$. We assign $\calQ_j$ to the sender who possesses $X_j$ and $Z_j$.  It follows that the number of qudits sent by Alice$_1$ is $s+m_1$ and Alice$_i$ for $i \in \{2, \dots, k\}$ sends $m_i$ qudits.

Recall that $L=2$, so that
\begin{align*}
    \Delta = \bigg{(}\frac{s+m_1}{2},\frac{m_2}{2},\dots,\frac{m_k}{2}\bigg{)}\text{ and }\Gamma(\Delta)=\frac{s+\sum_{i=1}^km_i}{2}.
\end{align*}  
Finally, we can minimize $s$ by considering the solution to $\textbf{Prob}_1(\text{LC})$.
 
\end{proof}   

\subsection{Theorem \ref{theorem:first_main_result_informal}(b): Proof and Discussion}
 Before embarking on the proof of Theorem \ref{theorem:first_main_result_informal}(b), we discuss an example that illustrates the core idea of our proof.

\begin{example}\label{example:lemma_Restricted Interval-$[0,1]$_coding_scheme}
Given an LC problem $(\bbF_q,k,\bfV_1,\dots,\bfV_k)$ with $\frac{2m}{\sum_{i\in[k]}m_i}\in[0,1]$, suppose that
by permuting columns of $[\bfV_1|\dots|\bfV_k]$ and rows of $\begin{bmatrix}
        W_1\\\vdots\\W_k
    \end{bmatrix}$, we can write
\begin{align*}
    Y = &[\bfV_1|\dots|\bfV_k]\begin{bmatrix}
        W_1\\\vdots\\W_k
    \end{bmatrix} =  [I_m| \,\bfP|\,\bfQ] \begin{bmatrix}
        U\\V\\Z
    \end{bmatrix} = U+\bfP V +\bfQ Z,
\end{align*}
where 
\begin{itemize}
    \item  $\begin{bmatrix}
        U\\V\\Z
    \end{bmatrix}$ is a permutation of the vector $\begin{bmatrix}
        W_1\\\vdots\\W_k
    \end{bmatrix}$, and $[I_m|\,\bfP|\,\bfQ]$ is the corresponding permutation of columns of $[\bfV_1|\dots|\bfV_k]$ such that the above equation holds.
    \item $\bfP$ has dimension $m\times m$   and $\bfQ$ has dimension $m\times \bar m$ where $\bar m=\sum_{i\in[k]} m_i- 2m \geq 0$. 
    Moreover, $U$ and $V$ have dimension $m\times 1$, and $Z$ has dimension $\bar m \times 1$. 
\end{itemize} 
\begin{assumption}[Same-sender pairing]
\label{assumption:same_sender_pairing}
Suppose that the following condition holds.
\begin{equation}
\label{eq:Ui_Vi_same_sender}
    \text{For every } j\in[m], \text{ the symbols } U_j \text{ and } V_j
    \text{ belong to the same sender.}
\end{equation}
\end{assumption}    
We claim that there exists an $L=3$ instance coding scheme in which each Alice$_i$ sends $2m_i$ qudits. Equivalently, the normalized cost tuple is
\[
    \Delta
    =
    \left(
    \frac{2m_1}{3},\dots,\frac{2m_k}{3}
    \right).
\]

\begin{proof}
We first explain the idea of the construction. For three instances, Bob wants to
recover
\[
    \begin{bmatrix}
        U^{(1)}+\bfP V^{(1)}+\bfQ Z^{(1)}\\
        U^{(2)}+\bfP V^{(2)}+\bfQ Z^{(2)}\\
        U^{(3)}+\bfP V^{(3)}+\bfQ Z^{(3)}
    \end{bmatrix}.
\]
The goal is to realize this vector as $M_x\calX+M_z\calZ$  for some matrices $M_x,M_z$ satisfying the SSO condition $\Omega(M_x,M_z)=0$ and corresponding vectors $\calX$ and $\calZ$. Once this is done, the Sum Box protocol can compute $M_x\calX+M_z\calZ$ exactly.

The construction below splits the desired output into the following two complementary pieces.
\begin{align}
    M_x\calX
    =
    \begin{bmatrix}
        \bfP V^{(1)}+\bfQ Z^{(1)}\\
        U^{(2)}\\
        U^{(3)}
    \end{bmatrix}, \text{~and~}
    M_z\calZ
    =
    \begin{bmatrix}
        U^{(1)}\\
        \bfP V^{(2)}+\bfQ Z^{(2)}\\
        \bfP V^{(3)}+\bfQ Z^{(3)}
    \end{bmatrix}. \label{eq:restricted_transform_trick}
\end{align}
Thus, their sum is exactly the three desired outputs. The nontrivial point is
that $M_x$ and $M_z$ are chosen so that the SSO condition holds. This is the
reason for the special block $\bfP\bfP^T+\bfQ\bfQ^T$ in $M_x$ and the block
$\bfP^T$ in $M_z$ below: these terms cancel the cross terms in
$M_xM_z^T-M_zM_x^T$.

There is one further issue beyond the algebraic identity $M_x\mathcal X+M_z\mathcal Z=Y^{[3]}$. The vectors $\mathcal X$ and $\mathcal Z$ need to be specified such that each coordinate pair $(\mathcal X_i,\mathcal Z_i)$ must be known by the same sender. 
We verify that this holds after defining $\mathcal X$ and $\mathcal Z$.  

Now define 
\begin{equation}\label{eq:Mx_Mz_of_eg_of_restricted}
    \begin{split} 
    M_x =& \begin{bmatrix}
        \bfP & \bfQ & 0_{m\times m} & 0_{m\times m} & 0_{m\times \bar m}  & 0_{m\times  m}\\
        0_{m\times m} & 0_{m\times \bar m} & \bfP\bfP^T +\bfQ\bfQ^T & I_m & 0_{m\times \bar m} & 0_{m\times  m}\\
        0_{m\times m} & 0_{m\times \bar m} & 0_{m\times m} & 0_{m\times m} & 0_{m\times \bar m} & I_m
    \end{bmatrix}, \text{~and}\\
    M_z =&  \begin{bmatrix}
        0_{m\times m} & 0_{m\times \bar m}  & I_m & 0_{m\times m} & 0_{m\times \bar m}  & 0_{m\times   m}\\
        \bfP & \bfQ& 0_{m\times m} &0_{m\times m} & 0_{m\times \bar m} & \bfP^T\\
        0_{m\times m} & 0_{m\times \bar m} & 0_{m\times m} &  \bfP & \bfQ & 0_{m\times   m}
    \end{bmatrix}.
    \end{split}
\end{equation} 
The block structure above is designed precisely so that the SSO condition is satisfied, i.e., $\Omega(M_x,M_z)=0$
as verified in Appendix~\ref{eq:restricted_interval_eq:app}. Since $[M_x|M_z]$ has $3m$ rows, we have $3m\ge\text{rank}([M_x|M_z]).$ We note that 
\[\text{rank}([M_x|M_z]) \ge      \text{rank}\left(\begin{bmatrix}
        H &  &  \\ & H & \\ & & H
    \end{bmatrix}\right)\] 
    where $H = [I_m|\bfP|\bfQ]$, because it can be obtained by removing and permuting some columns of $[M_x|M_z]$. Since $[I_m|\bfP|\bfQ]$ is obtained by permuting columns of $[\bfV_1|\dots|\bfV_k]$, we conclude that $\text{rank}([M_x|M_z])  = 3 \cdot \text{rank}([\bfV_1|\dots|\bfV_k])=3m$ by Assumption \ref{assumption:lin_ind_assumption}.
 
Note that both $M_x$ and $M_z$ have dimension $3m\times (2\sum_{i\in[k]} m_i)$, and $2\sum_{i\in[k]} m_i =  4m+2\bar m$, and $3m\le 4m \le 4m+2\bar m$. Therefore, all three conditions of Lemma \ref{lemma:kappa,n-sum_box} are satisfied with identity LIT (\textit{cf. }Remark~\ref{remark:LIT_N_sum_box}). Then, a $(3m,  2\sum_{i\in[k]} m_i)$-Sum Box can compute $M_x \calX + M_z \calZ$ assuming that $\calX_i$ and $\calZ_i$ belong to the same sender for $i = 1, \dots, 2\sum_{i\in[k]}m_i$. Now, set 
\begin{equation}\label{eq:lemma7_eg_eq3}
\begin{split}
   &\calX^T = \begin{bmatrix}
       \left( V^{(1)} \right)^T&\left( Z^{(1)} \right)^T&\left( 0_{m\times 1}\right)^T&\left( U^{(2)} \right)^T&\left( 0_{\bar m \times 1} \right)^T&\left( U^{(3)}
\right)^T    \end{bmatrix}, \text{~and}\\ 
    &\calZ^T = 
    \begin{bmatrix}
       \left( V^{(2)} \right)^T&\left( Z^{(2)} \right)^T&\left( U^{(1)} \right)^T&\left( V^{(3)} \right)^T&\left( Z^{(3)} \right)^T&\left( 0_{m\times 1}\right)^T
    \end{bmatrix}.
\end{split} 
\end{equation}
It can be verified that \eqref{eq:restricted_transform_trick} holds.

It remains to assign the subsystems $\mathcal Q_1,\dots,\mathcal Q_{2\sum_{i\in[k]}m_i}$ to the senders. Recall that the $i$-th Sum Box subsystem is encoded using the coordinate pair $(\mathcal X_i,\mathcal Z_i)$. Hence this subsystem must be assigned to a sender
who knows both entries of this pair.

\begin{table}[!t]
\centering
\renewcommand{\arraystretch}{1.35}    
\setlength{\extrarowheight}{1.2pt}    
\begin{tabular}{|c|c|c|c|}\hline
Case & Range of $i$  &  Form of $(\calX_i,\calZ_i)$ & Assign $\calQ_i$ to sender who has \\ \hline
(1) &  $i\in \{1,\dots,m\}$ & $(V_j^{(1)},V_j^{(2)})$ for some $j\in[m]$  & $V_j$\\ \hline
(2) &  $i\in \{m+1,\dots,m + \bar m\}$ & $(Z_j^{(1)},Z_j^{(2)})$  for some $j\in[\bar m]$ & $Z_j$\\ \hline
(3) &  $i\in \{m + \bar m + 1,\dots,2m+\bar m\}$ & $(0,U_j^{(1)})$ for some $j\in[m]$  & $U_j$\\ \hline
(4) &  $i\in \{2m+\bar m + 1 ,\dots, 3m + \bar m\}$ 
& $(U_j^{(2)}, V_j^{(3)})$ for some $j\in[m]$  
& both $U_j$ and $V_j$\\ \hline
(5) &  $i\in \{3m+\bar m + 1 ,\dots, 3m + 2\bar m\}$ & $(0,Z_j^{(3)})$  for some $j\in[\bar m]$ & $Z_j$\\ \hline
(6) &  $i\in \{3m+2\bar m + 1 ,\dots, 4m + 2\bar m\}$ & $(U_j^{(3)},0)$ for some $j\in[m]$  & $U_j$\\ \hline
\end{tabular}
\vspace{3mm}
\caption{Assignment of Sum Box subsystems to senders in the base-case coding scheme. \label{table:eg7_assignment_of_subsystem}  } 
\end{table} 
From the definitions of $\mathcal X$ and $\mathcal Z$ in \eqref{eq:lemma7_eg_eq3}, the coordinate pairs fall into the six types corresponding to the rows of Table~\ref{table:eg7_assignment_of_subsystem}. Most types involve either two copies of the same variable block, such as $(V_j^{(1)},V_j^{(2)})$ or $(Z_j^{(1)},Z_j^{(2)})$, or one variable paired with zero. The only mixed type is $(U_j^{(2)},V_j^{(3)})$.
This pair can be encoded by a single sender precisely because Assumption~\ref{assumption:same_sender_pairing} guarantees that $U_j$ and $V_j$
belong to the same sender. 


We now calculate the number of qudits sent by each sender. Let $i\in[k]$, and $a,b$, and $c$ be the number of components of $W_i$ that go to $U,V,Z$ respectively. Note that $a+b+c=m_i$. By \eqref{eq:Ui_Vi_same_sender} we have 
$a=b$. For $\ell \in\{1,\dots,6\}$, let $d_\ell$ be the number of subsystems assigned to Alice$_i$ due to Case $(\ell)$ in Table \ref{table:eg7_assignment_of_subsystem}. For $j\in [4m+2 \bar m],$   we assign $\calQ_j$ to Alice$_i$ based on whether she has $V_k$ or $U_k$ or $Z_k$ for some $k$. The following  equations hold.
\begin{align*}
    d_1 = b, d_2 = c, d_3 = a, d_4 = a= b, d_5 = c, d_6 = a. 
\end{align*}
Therefore, the total number of qudits sent by Alice$_i$ is
\begin{align*}
    d_1+\cdots+d_6 = b+c+a+a+c+a = 4a+2c  = 2(a+b+c) = 2m_i,
\end{align*}
where we used $a=b$  and  $a+b+c=m_i$. Since $L=3$, the cost tuple is $\Delta = \big( \frac{2m_1}{3},\dots,\frac{2m_k}{3} \big )$.
\end{proof}

\end{example} 
\begin{proof}[Proof of Theorem \ref{theorem:first_main_result_informal}(b)]
Since $(\bbF_q,k,\bfV_1,\dots,\bfV_k)$ is a Restricted Interval-$[0,1]$ LC,  there exists an integer $L\in\bbN$ and integers $a_i\in\mathbb Z_{\ge 0}$ with $a_i\le \frac{m_iL}{2},\,i\in[k],$ such that the following holds. For each $i\in[k]$, one can choose $a_i$ columns from $\bfV_i^{[L]}$, denoted by $A_i$, such that $[A_1|A_2|\cdots|A_k]$ is an invertible $mL\times mL$ matrix. 
For each $i\in[k]$, there exists  a column permutation matrix $\bar P_i$ such that  
\begin{equation}\label{eq:column_permutation}
    \bfV_i^{[L]} \bar P_i = [A_i | B_i |C_i],
\end{equation}
where $B_i$ is a block with the same dimension as $A_i$, i.e. its dimension is $m\cdot L\times a_i$, and $C_i$ contains the rest of the columns; $C_i$ has dimension $m\cdot L\times (m_i\cdot L-2a_i)$. Since $a_i\le \frac{m_i\cdot L}{2}$ holds, $C_i$ is well defined and $C_i$ being empty is allowed. 

Denote $A=[A_1|A_2|\dots|A_k]$, $B=[B_1|\dots|B_k]$, $C=[C_1|\dots|C_k]$. Note $B$ has dimension $m\cdot L \times m\cdot L$ as $\sum_{i\in[k]} a_i = m\cdot L$, and $C$ has dimension $m\cdot L\times (\sum_{i\in[k]} m_i\cdot L -2m\cdot L)$.  We define $\left(    (\bar P_i)^{-1} W_i^{[L]} \right)^T = \begin{bmatrix}
        U_i&V_i& Z_i
    \end{bmatrix}$ 
where $U_i,V_i$, and $Z_i$ have dimension $a_i \times 1, a_i\times 1, (m_i\cdot L-2a_i)\times 1$ respectively, and define $U,V,Z$ by $U^T = \begin{bmatrix}
    U_1^T&\dots&U_k^T
\end{bmatrix}$, $V^T = \begin{bmatrix}
    V_1^T&\dots&V_k^T
\end{bmatrix}$, $Z^T = \begin{bmatrix}
    Z_1^T&\dots&Z_k^T
\end{bmatrix}.$ Then, we have
\begin{align*}
    Y^{[L]} =&\sum_{i\in[k]} \bfV_i^{[L]} W_i^{[L]}   = \sum_{i\in [k]}\big  (\bfV_i^{[L]} \bar P_i\big ) \cdot \big ( (\bar P_i)^{-1}W_i^{[L]}\big ) =    \sum_{i\in [k]} [A_i|B_i|C_i] \begin{bmatrix}
        U_i\\V_i\\ Z_i
    \end{bmatrix}\\
    =& \sum_{i\in [k]}  A_i U_i + B_i V_i + C_i Z_i = [A_1|\dots |A_k]\begin{bmatrix}
        U_1\\\vdots\\U_k
    \end{bmatrix} + [B_1|\dots |B_k]\begin{bmatrix}
        V_1\\\vdots\\V_k
    \end{bmatrix}  + [C_1|\dots |C_k]\begin{bmatrix}
        Z_1\\\vdots\\Z_k
    \end{bmatrix}\\
    =& AU+BV+CZ.
\end{align*}
Recall that $A$ is invertible. Consider 
\begin{align*}
   A^{-1}  Y^{[L]} = U+(A^{-1}B)V+(A^{-1}C)Z.
\end{align*}
Then, we observe that
\begin{itemize}
    \item $A^{-1}B$ has dimension $m\cdot L\times m\cdot L$, $A^{-1}C$ has dimension $m\cdot L\times (\sum_{i\in[k]} m_i\cdot L -2m\cdot L)$
    \item $U,V$ have dimension $m\cdot L\times 1,$ $Z$ has dimension $(\sum_{i\in[k]} m_i\cdot L -2m\cdot L)\times 1$.
\item The vectors $U$ and $V$ satisfy the same-sender pairing required in
Assumption~\ref{assumption:same_sender_pairing}. To see this, fix a sender
Alice$_i$. By construction,
\[
    (\bar P_i)^{-1}W_i^{[L]}
    =
    \begin{bmatrix}
        U_i\\V_i\\Z_i
    \end{bmatrix},
    \qquad
    U_i,V_i\in\bbF_q^{a_i}.
\]
Thus the $r$-th coordinate of $U_i$ and the $r$-th coordinate of $V_i$ both come
from Alice$_i$'s own data vector, for every $r\in[a_i]$. After concatenating
these blocks over all senders, each coordinate pair $(U_j,V_j)$ belongs to a
single sender. Hence Assumption~\ref{assumption:same_sender_pairing} in
Example~\ref{example:lemma_Restricted Interval-$[0,1]$_coding_scheme} is
satisfied.
\end{itemize}

Therefore, $A^{-1}  Y^{[L]} = U+(A^{-1}B)V+(A^{-1}C)Z$ reduces to the setting of Example \ref{example:lemma_Restricted Interval-$[0,1]$_coding_scheme}, which allows us to compute   $A^{-1}  Y^{[L]} = U+(A^{-1}B)V+(A^{-1}C)Z$. Once this is computed, Bob can recover  $Y^{[L]} =AU+BV+CZ$ by left-multiplying by $A$. 
When we use Example \ref{example:lemma_Restricted Interval-$[0,1]$_coding_scheme}, the coding scheme computes three instances of  $A^{-1}  Y^{[L]} = U+(A^{-1}B)V+(A^{-1}C)Z$. The total number of qudits sent by Alice$_i$ is $2 m_i\cdot L$, and the cost tuple, which is normalized by the number of instances, is $\Delta = \big (\frac{2m_1\cdot L}{3\cdot L},\dots, \frac{2m_k\cdot L}{3\cdot L}\big ) =\big (\frac{2m_1}{3},\dots, \frac{2m_k}{3}\big )$. 
\end{proof}

\begin{remark}
The reduction in the proof above can be seen concretely in
Example~\ref{example:restricted_interval_example}. In that example, $L=1$ and
the selected columns give
\[
    A=[A_1\mid A_2]
    =
    \begin{bmatrix}
        1&0\\
        0&1
    \end{bmatrix}.
\]
After permuting columns within each sender's block in the same way as in
\eqref{eq:column_permutation}, we can write
\begin{align*}
    Y
    &=
    {\begin{bmatrix}
        1&1\\
        0&1
    \end{bmatrix}}
    \begin{bmatrix}
        W_{11}\\
        W_{12}
    \end{bmatrix}
    +
    {\begin{bmatrix}
        1&0\\
        1&1
    \end{bmatrix}}
    \begin{bmatrix}
        W_{21}\\
        W_{22}
    \end{bmatrix}
    +
    {\begin{bmatrix}
        0&1\\
        1&0
    \end{bmatrix}}
    \begin{bmatrix}
        W_{31}\\
        W_{32}
    \end{bmatrix} \\
    &=
    \begin{bmatrix}
        1&0\\
        0&1
    \end{bmatrix}
    \underbrace{\begin{bmatrix}
        W_{11}\\
        W_{22}
    \end{bmatrix}}_{U}
    +
    \underbrace{\begin{bmatrix}
        1&1\\
        1&1
    \end{bmatrix}}_{\bfP}
    \underbrace{\begin{bmatrix}
        W_{12}\\
        W_{21}
    \end{bmatrix}}_{V}
    +
    \underbrace{\begin{bmatrix}
        0&1\\
        1&0
    \end{bmatrix}}_{\bfQ}
    \underbrace{\begin{bmatrix}
        W_{31}\\
        W_{32}
    \end{bmatrix}}_{Z} \\
    &= U+\bfP V+\bfQ Z.
\end{align*}
Since $A=I_2$ in this example, this is exactly the transformed problem
$A^{-1}Y^{[L]}=U+(A^{-1}B)V+(A^{-1}C)Z$ appearing in the proof. Applying
Example~\ref{example:lemma_Restricted Interval-$[0,1]$_coding_scheme} to this
decomposition computes three copies of this transformed problem, and Bob then
recovers the original outputs by multiplying by $A$.

The important structural point is that the selected $U$-symbols can be paired
with $V$-symbols from the same senders: $U_1$ and $V_1$ both belong to
Alice$_1$, while $U_2$ and $V_2$ both belong to Alice$_2$. This same-sender
pairing is exactly the property used by the base-case Sum Box protocol.
    
\end{remark}

\begin{remark}[Role of the Restricted Interval condition]
The inequality $a_i\le \frac{m_iL}{2}$ in Definition~\ref{def:Restricted Interval-$[0,1]$} is used to ensure that, after selecting the $a_i$ columns $A_i$ from $\bfV_i^{[L]}$, there are still at least $a_i$ remaining columns from the same sender. These remaining columns form the block $B_i$ in \eqref{eq:column_permutation}. Consequently, Alice$_i$'s data can be locally partitioned as
\[
    (\bar P_i)^{-1}W_i^{[L]}
    =
    \begin{bmatrix}
        U_i\\
        V_i\\
        Z_i
    \end{bmatrix},
    \qquad
    U_i,V_i\in\bbF_q^{a_i}.
\]
Thus, for every coordinate $r\in[a_i]$, the symbols $(U_i)_r$ and $(V_i)_r$ are both held by Alice$_i$. After concatenating the sender blocks, this gives the same-sender pairing (\textit{cf.} Assumption~\ref{assumption:same_sender_pairing}).

This pairing is essential in the base-case protocol of Example~\ref{example:lemma_Restricted Interval-$[0,1]$_coding_scheme}. In Case~(4) of Table~\ref{table:eg7_assignment_of_subsystem}, one Sum Box subsystem must encode the pair $(U_j^{(2)},V_j^{(3)})$. If $U_j$ and $V_j$ belonged to different senders, then no single sender would know both entries of this pair, and the local encoding required by the Sum Box protocol would not be valid.

Example~\ref{example:interval_not_restricted} illustrates what can go wrong without the Restricted Interval-$[0,1]$ condition. In that example, for every $L\ge 1$, any invertible $2L\times 2L$ submatrix must include all $L$ columns of $\bfV_2^{[L]}$. Hence, all $L$ symbols of Alice$_2$ are forced into the $U$-block. Since Alice$_2$ has only $L$ symbols in total, there are no remaining symbols of Alice$_2$ available to form a matching $V$-block of the same size. Consequently, after the $U,V,Z$ decomposition, one cannot guarantee that every coordinate $U_j$ can be paired with a coordinate $V_j$ held by the same sender. This is precisely the obstruction that the Restricted Interval-$[0,1]$ condition rules out. 

The Restricted Interval-$[0,1]$ condition is therefore a convenient sufficient condition for producing the same-sender pairing needed by the base-case construction. However, it is not necessary for every possible coding scheme: the base-case construction applies whenever one can find a decomposition
\[
    Y=U+\bfP V+\bfQ Z
\]
satisfying the same-sender pairing condition, even if that decomposition is obtained by another argument. In the upcoming Section~\ref{subsec:direct_sum_coding_scheme}, we use related decompositions for LC problems that lie
outside the Restricted Interval-$[0,1]$ class.
\end{remark}

\subsection{On  Direct-Sum  Problems}
\label{subsec:direct_sum_coding_scheme}  
We now discuss a class of direct-sum examples that can be obtained by generalizing Example \ref{example:subadditivity2} in Section \ref{subsec:main_contribution_direct-sum}.

We first introduce a structural condition on a pair $(\mathrm{LC}_1,\mathrm{LC}_2)$ of LC problems. This condition abstracts the algebraic cancellation pattern appearing in Example~\ref{example:subadditivity2} and defines a family of pairs for which the same joint coding idea applies. 



\begin{condition}\label{condition:direct_sum_condition}
 We denote  $\text{LC}_1=(\bbF_q, k, \bfV_1,\dots,\bfV_k)$ and  $\text{LC}_2=(\bbF_q, k, \bfV_1',\dots,\bfV_k')$. For $i\in[k]$, let $\bfV_i$ have dimension $m\times m_i$ and $\bfV_i'$ have dimension $m\times m_i'$. Then, the pair $(\text{LC}_1,\text{LC}_2)$ is said to satisfy Condition \ref{condition:direct_sum_condition} if the following holds.

    \begin{enumerate}[label = (\alph*)] 
     \item Both $\text{LC}_1$ and $\text{LC}_2$ satisfy Assumption \ref{assumption:lin_ind_assumption}.
    \item $[\bfV_1|\dots|\bfV_k]$ has dimension $m\times 2m$, and $[\bfV_1'|\dots|\bfV_k']$ has dimension $m\times m$, i.e., we have $\sum_{i\in[k]}m_i=2m$, and $\sum_{i\in[k]}m_i'=m$.

    \item For each $i\in[k]$, we can express $\bfV_i = [A_i|B_i]$ such that (a) $[B_1|\dots|B_k]$ is invertible and (b) $A_i$ has the same dimension as $\bfV_i'$ for every $i\in[k]$, i.e. both $A_i$ and $\bfV_i'$ have dimension $m\times m_i'$. 
    \end{enumerate}
\end{condition}

\begin{theorem}\label{theorem:direct_sum_coding_formal} 
Denote   $\text{LC}_1=(\bbF_q, k, \bfV_1,\dots,\bfV_k)$ and  $\text{LC}_2=(\bbF_q, k, \bfV_1',\dots,\bfV_k')$. Suppose $(\text{LC}_1,\text{LC}_2)$ satisfies Condition \ref{condition:direct_sum_condition}. Then the optimal total cost $\Gamma^{*,\mathsf Q}_{\mathrm{LC}_1\oplus\mathrm{LC}_2}=2m.$ (recall that the superscript $\mathsf Q$ denotes the noiseless
quantum many-to-one model).
\end{theorem}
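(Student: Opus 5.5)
The plan is to prove matching lower and upper bounds of $2m$. The converse comes directly from Lemma~\ref{lemma:quantum_many_to_one_lower_bound}. The achievability is a one-shot ($L=1$) scheme obtained by a single $(2m,2m)$-Sum Box, generalizing Example~\ref{example:subadditivity2}.

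\textbf{Converse.} The direct sum $\mathrm{LC}_1\oplus\mathrm{LC}_2$ has output dimension $2m$. I will first check that it satisfies Assumption~\ref{assumption:lin_ind_assumption}:
\begin{itemize}
\item each block-diagonal matrix $\mathrm{diag}(\bfV_i,\bfV_i')$ has full column rank, because both $\bfV_i$ and $\bfV_i'$ do;
\item the row rank is $\text{rank}([\bfV_1|\cdots|\bfV_k])+\text{rank}([\bfV_1'|\cdots|\bfV_k'])=m+m$, by Condition~\ref{condition:direct_sum_condition}(a).
\end{itemize}
Lemma~\ref{lemma:quantum_many_to_one_lower_bound} then gives $\Gamma\ge 2m$ for every $L$, so $\Gamma^{*,\mathsf Q}_{\mathrm{LC}_1\oplus\mathrm{LC}_2}\ge 2m$.

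\textbf{Achievability.} Using Condition~\ref{condition:direct_sum_condition}(c), split Alice$_i$'s $\mathrm{LC}_1$ data as $W_i=[a_i;b_i]$, with $a_i\in\bbF_q^{m_i'}$ and $b_i\in\bbF_q^{m_i-m_i'}$. Then $Y=\sum_i A_ia_i+B_ib_i=Aa+Bb$, where $A=[A_1|\cdots|A_k]$ and $B=[B_1|\cdots|B_k]$ is $m\times m$ and invertible. Also $Y'=\bfV'w'$, where $\bfV'=[\bfV_1'|\cdots|\bfV_k']$ is $m\times m$ and invertible and $w'$ stacks the $W_i'$.

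I will use $N=\sum_i m_i=2m$ qudits, one per coordinate of $(a,b)$. The $\mathcal X$-vector is $(a,b)$. The $\mathcal Z$-vector is $(w',0)$: the qudit carrying $a$-coordinate $j$ of sender $i$ gets $z$-input equal to the matching coordinate of $W_i'$, which is possible since $a_i$ and $W_i'$ both have length $m_i'$ and belong to the same sender. Every $b$-qudit gets $z$-input $0$. Set
\[
M_x=\begin{bmatrix} A & B\\ 0 & 0\end{bmatrix},\qquad
M_z=\begin{bmatrix} 0 & 0\\ \bfV' & C\end{bmatrix},
\]
where $C\in\bbF_q^{m\times m}$ is a free correction block. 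It multiplies zero inputs, so it does not affect $M_x\mathcal X+M_z\mathcal Z=[Y;Y']$.

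\textbf{Main step: the SSO condition.} By Proposition~\ref{prop:SSO_proposition}, $\Omega(M_x,M_z)$ has zero diagonal blocks, because each diagonal block is of the form $XY^T-YX^T$ with one factor zero. Its off-diagonal block is $A\bfV'^T+BC^T$, and alternation handles the other off-diagonal block. So I choose
\[
C^T=-B^{-1}A\bfV'^T,\quad\text{i.e.}\quad C=-\bfV'A^TB^{-T}.
\]
This reproduces the entries $(2,2)$ in Example~\ref{example:subadditivity2} over $\bbF_3$.

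\textbf{Rank and cost.} The matrix $[M_x|M_z]$ contains the columns of $B$ (top rows, zero bottom) and of $\bfV'$ (bottom rows, zero top), so its rank is $m+m=2m=\kappa\le N$. Lemma~\ref{lemma:kappa,n-sum_box} then applies with identity LIT (indeed $\kappa=N$). Alice$_i$ holds $m_i$ qudits, for a total of $2m$ qudits with $L=1$. This matches the converse.

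The only delicate points are bookkeeping. First, the $a$- and $W'$-coordinates must be paired within each sender, which is exactly what Condition~\ref{condition:direct_sum_condition}(c)(b) guarantees. Second, the correction block $C$ must be placed on columns whose $z$-inputs are zero. I expect the SSO cancellation itself to be routine once $C$ is chosen as above.
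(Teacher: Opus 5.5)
Your proposal is correct and follows the paper's proof essentially step for step. It uses the same one-shot $(2m,2m)$-Sum Box with $M_x=\begin{bmatrix}A&B\\ 0&0\end{bmatrix}$ and $M_z=\begin{bmatrix}0&0\\ \bfV'&C\end{bmatrix}$, where $C=-(B^{-1}A(\bfV')^T)^T$. It also uses the same same-sender pairing of the $A$-part inputs with $W'$, the same $\mathrm{diag}(B,\bfV')$ rank argument, and the same converse via Lemma~\ref{lemma:quantum_many_to_one_lower_bound}. Your explicit check that $\mathrm{LC}_1\oplus\mathrm{LC}_2$ satisfies Assumption~\ref{assumption:lin_ind_assumption} before invoking that lemma is a small extra bit of care that the paper leaves implicit.
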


\begin{proof}
\label{example:base_case_direct_sum} Let the LC problems $\text{LC}_1=(\bbF_q,k,\bfV_1,\dots,\bfV_k)$ and  $\text{LC}_2=(\bbF_q,k,\bfV_1',\dots,\bfV_k')$ be associated with  data vectors $\{W_i\}_{i\in[k]}$ and $\{W_i'\}_{i\in[k]}$. 

Let $\hat m_i$ denote the number of columns of $B_i$, i.e.  $\hat m_i = m_i - m_i'$, and write   $W_i = \begin{bmatrix}
    U_i\\T_i
\end{bmatrix}$ where $U_i\in\bbF_q^{m_i'}$ and $T_i\in\bbF_q^{\hat m_i}$. Define 
\[
    U^T=\begin{bmatrix}U_1^T&\cdots&U_k^T\end{bmatrix},\quad
    T^T=\begin{bmatrix}T_1^T&\cdots&T_k^T\end{bmatrix},\quad
    (W')^T=\begin{bmatrix}(W_1')^T&\cdots&(W_k')^T\end{bmatrix}.
\]
Furthermore, $A=[A_1|\dots|A_k],B=[B_1|\dots|B_k]$, and $\bfV'=[\bfV_1'|\dots|\bfV_k']$. As the dimensions of $[\bfV_1|\dots|\bfV_k]$ and $[\bfV_1'|\dots|\bfV_k']$ are $m\times 2m$ and $m\times m$, respectively, we have $\sum_{i\in[k]}m_i' = m$ and $\sum_{i\in[k]}\hat m_i  = 2m-m = m$. This implies that  dimensions of each $U,T$, and $W'$ are $m\times 1$, and both $A$ and $B$ are of dimension $m\times m$. Then, 
\begin{equation}
    \begin{split}\label{eq:example_base_case_direct_sum}
            & \bfV_1 W_1 + \dots + \bfV_k W_k\\
    &= [A_1|B_1] \begin{bmatrix}
        U_1\\T_1
    \end{bmatrix} + \dots + [A_k|B_k] \begin{bmatrix}
        U_k\\T_k
    \end{bmatrix}\\
    &= [A_1|\dots|A_k] \begin{bmatrix}
    U_1\\\vdots\\U_k
\end{bmatrix}+ [B_1|\dots|B_k]\begin{bmatrix}
    T_1\\\vdots\\T_k
\end{bmatrix}=AU+BT.
    \end{split}
\end{equation}

Since $A_i$ and $\bfV_i'$ have the same number of columns, $U_i$ and $W_i'$ have the same dimension for each $i\in[k]$. Moreover, both $U_i$ and $W_i'$ are held by Alice$_i$. Since
\[
    U^T=\begin{bmatrix}U_1^T&\cdots&U_k^T\end{bmatrix},
    \qquad
    (W')^T=\begin{bmatrix}(W_1')^T&\cdots&(W_k')^T\end{bmatrix},
\]
we therefore have the following same-sender pairing:
\begin{equation}
\label{eq:direct_sum_same_sender_pairing}
    \text{For every } j\in[m], \text{ the symbols } U_j \text{ and } W_j'
    \text{ belong to the same sender.}
\end{equation}
Thus, $\text{LC}_1\oplus \text{LC}_2$ can be written
\begin{equation}\label{eq:example_base_case_direct_sum_2}
    \begin{split}
    &\begin{bmatrix}
        \bfV_1&\\
        &\bfV_1'
    \end{bmatrix} \begin{bmatrix}
        W_1\\W_1'
    \end{bmatrix} +\dots +\begin{bmatrix}
        \bfV_k&\\
        &\bfV_k'
    \end{bmatrix} \begin{bmatrix}
        W_k\\W_k'
    \end{bmatrix}\\
    &= \begin{bmatrix}
        \bfV_1W_1+\dots+\bfV_kW_k\\
        \bfV_1'W_1'+\dots+\bfV_k'W_k'
    \end{bmatrix}
    \overset{\text{By \eqref{eq:example_base_case_direct_sum}}}{=}
    \begin{bmatrix}
        AU+BT\\
        \bfV'W'
    \end{bmatrix}.
    \end{split}
\end{equation} 
We claim that there is a coding scheme computing $\text{LC}_1\oplus\text{LC}_2$ that works with a single ($L=1$) instance and has total cost $\Gamma=2m$. Towards this end, we make the following observation.
\begin{equation}
    \begin{split}\label{eq:direction_sum_eq_29}
        &\Omega\left(\begin{bmatrix}
            A&B\\
            0_{m\times m}&0_{m\times m}
        \end{bmatrix}, \begin{bmatrix}
            0_{m\times m}&0_{m\times m}\\
            \bfV' & -\big( B^{-1} A(\bfV')^T \big)^T
        \end{bmatrix}\right)\\
        &\overset{(a)}{=}\Omega\left(\begin{bmatrix}
            A\\
            0_{m\times m}
        \end{bmatrix}, \begin{bmatrix}
            0_{m\times m}\\
            \bfV'
        \end{bmatrix}\right)+
        \Omega\left(\begin{bmatrix}
            B\\
            0_{m\times m}
        \end{bmatrix}, \begin{bmatrix}
           0_{m\times m}\\
            -\big( B^{-1} A(\bfV')^T \big)^T
        \end{bmatrix}\right)\\
        &=\begin{bmatrix}
            0 & A(\bfV')^T\\
            -\bfV'A^T & 0
        \end{bmatrix}-\begin{bmatrix}
            0 & A(\bfV')^T\\
            -\bfV'A^T & 0
        \end{bmatrix}=0,
    \end{split}
\end{equation}
where $(a)$ follows from Proposition \ref{prop:SSO_proposition}\ref{prop:SSO_partition}. Set 
\begin{align*}
   M_x = \begin{bmatrix}
            A&B\\
            0_{m\times m}&0_{m\times m}
        \end{bmatrix}, \text{~and~} M_z = \begin{bmatrix}
            0_{m\times m}&0_{m\times m}\\
            \bfV' & -\big( B^{-1} A(\bfV')^T \big)^T
        \end{bmatrix}.
\end{align*}
Then, \eqref{eq:direction_sum_eq_29} implies that $\Omega(M_x,M_z) = 0$. Next, $[M_x|M_z]$ has rank $2m$, because it contains the submatrix $\begin{bmatrix}
    B&0_{m\times m}\\
    0_{m\times m}&\bfV'
\end{bmatrix}$, and both $B$ and $\bfV'$ are invertible by our assumption. Note that the dimensions of both $M_x,M_z$ are $2m\times 2m$; let $\kappa=N=2m$. 
Thus, all three conditions of  Lemma \ref{lemma:kappa,n-sum_box} are satisfied with identity LIT (\textit{cf. }Remark~\ref{remark:LIT_N_sum_box}). This implies that there exists a $(2m,2m)$-Sum Box computing $M_xX+M_zZ$. Set $X = \begin{bmatrix}
            U\\T
        \end{bmatrix}, Z = \begin{bmatrix}
            W'\\0_{m\times 1}
        \end{bmatrix}.$ Then, we have 
\begin{align*}
    &M_xX+M_zZ \\
    =&  \begin{bmatrix}
            A&B\\
            0_{m\times m}&0_{m\times m}
        \end{bmatrix} \begin{bmatrix}
            U\\T
        \end{bmatrix} +  \begin{bmatrix}
            0_{m\times m}&0_{m\times m}\\
            \bfV' & -\big( B^{-1} A(\bfV')^T \big)^T
        \end{bmatrix} \begin{bmatrix}
            W'\\0_{m\times 1}
        \end{bmatrix}\\
        =&\begin{bmatrix}
            AU+BT\\
            \bfV' W'
        \end{bmatrix}\\
        \overset{\text{By }\eqref{eq:example_base_case_direct_sum_2}}{=} & \begin{bmatrix}
        \bfV_1&\\
        &\bfV_1'
    \end{bmatrix} \begin{bmatrix}
        W_1\\W_1'
    \end{bmatrix} +\dots+ \begin{bmatrix}
        \bfV_k&\\
        &\bfV_k'
    \end{bmatrix} \begin{bmatrix}
        W_k\\W_k'
    \end{bmatrix}.
\end{align*}
Let $\calQ=\calQ_1\dots\calQ_{2m}$ be the quantum system of the $(2m,2m)$-Sum Box computing $M_xX+M_zZ$.

If $i\in\{1,\dots,m\}$, then $(X_i,Z_i)$ is of the form $(U_j,W_j')$ for some coordinate $j$. By \eqref{eq:direct_sum_same_sender_pairing}, both entries are held by the same Alice, so we assign $\calQ_i$ to that Alice.

If $i\in\{m+1,\dots,2m\}$, then $(X_i,Z_i)$ is of the form $(T_{i-m},0)$, so we assign $\calQ_i$ to the sender who holds $T_{i-m}$. Thus the protocol transmits exactly $2m$ qudits in total. 

For the converse, let $\Delta$ be the cost tuple of any zero-error coding scheme for $\mathrm{LC}_1\oplus\mathrm{LC}_2$. By Lemma~\ref{lemma:quantum_many_to_one_lower_bound},
\[
    \Gamma(\Delta)
    \ge
    \text{rank}
    \left(
    \left[
        \begin{bmatrix}\bfV_1&0\\0&\bfV_1'\end{bmatrix}
        \middle|
        \cdots
        \middle|
        \begin{bmatrix}\bfV_k&0\\0&\bfV_k'\end{bmatrix}
    \right]
    \right).
\]
The matrix inside the rank is block diagonal after grouping columns:
\[
    \begin{bmatrix}
        \bfV_1&\cdots&\bfV_k&0&\cdots&0\\
        0&\cdots&0&\bfV_1'&\cdots&\bfV_k'
    \end{bmatrix}.
\]
Therefore,
\[
    \Gamma(\Delta)
    \ge
    \text{rank}([\bfV_1|\cdots|\bfV_k])
    +
    \text{rank}([\bfV_1'|\cdots|\bfV_k'])
    =
    2m,
\]
where the last equality follows from Condition~\ref{condition:direct_sum_condition}. Thus $\Gamma_{\mathsf Q}^*(\mathrm{LC}_1\oplus\mathrm{LC}_2)\ge 2m$. The achievability above gives the reverse inequality. 
\end{proof}

\begin{remark}
    The construction from Theorem \ref{theorem:direct_sum_coding_formal} exploits the same-sender-pairing ideas as the construction in  Theorem \ref{theorem:first_main_result_informal}(b). Note that here, we pick $\text{LC}_1$ and $\text{LC}_2$ that satisfy Condition \ref{condition:direct_sum_condition}; this allows us to design a specific precoding scheme that directly applies to it rather than having to solve the minimization in Problem \ref{problem:precoding}. Moreover, we are able to show that the cost is optimal based on the lower bound from Lemma \ref{lemma:quantum_many_to_one_lower_bound}.
\end{remark}
 

\begin{remark}
    It can be seen that there are instances (of $\text{LC}_1$) where the optimal cost for the $\text{LC}_1$ problem alone is strictly larger than $m$. For instance, Example \ref{example:subadditivity}  demonstrates an instance $\text{LC}_1$ where the cost is strictly larger than $m=2$. The subsequent discussion in Examples \ref{example:subadditivity1} and \ref{example:subadditivity2} demonstrates that the cost of $\text{LC}_1 \oplus \text{LC}_2$ is strictly lower than the sum of the optimal costs of $\text{LC}_1$ and $\text{LC}_2$.
\end{remark}
In contrast, the optimal total cost under the classical many-to-one model is additive.
\begin{proposition}\label{prop:classical_many_to_one_additive}  
Suppose $\text{LC}_1=(\bbF_q,k,\bfV_1,\dots,\bfV_k)$ and  $\text{LC}_2=(\bbF_q,k,\bfV_1',\dots,\bfV_k')$ are two LC problems over the same field and with the same number of senders, and suppose both satisfy Assumption~\ref{assumption:lin_ind_assumption}.  Denote by $\Gamma^{*,\mathsf{C}}_{\text{LC}_1}$, $\Gamma^{*,\mathsf{C}}_{\text{LC}_2}$, $ \Gamma^{*,\mathsf{C}}_{\text{LC}_1 \oplus \text{LC}_2}$ their optimal total costs under the classical model. Then
  \begin{align*}
      \Gamma^{*,\mathsf{C}}_{\text{LC}_1}+\Gamma^{*,\mathsf{C}}_{\text{LC}_2}= \Gamma^{*,\mathsf{C}}_{\text{LC}_1 \oplus \text{LC}_2}
  \end{align*}
\end{proposition}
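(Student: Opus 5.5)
The plan is to apply Proposition~\ref{prop:classical_many_to_one_baseline} three times: to $\mathrm{LC}_1$, to $\mathrm{LC}_2$, and to $\mathrm{LC}_1\oplus\mathrm{LC}_2$. That proposition gives the exact classical optimum as the sum of the column counts, $\sum_{i\in[k]}m_i$, provided Assumption~\ref{assumption:lin_ind_assumption} holds. So the whole argument reduces to two steps. First, check that the direct sum also satisfies Assumption~\ref{assumption:lin_ind_assumption}. Second, add up the column counts.

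For the first step, write $\bfV_i\in\bbF_q^{m\times m_i}$ and $\bfV_i'\in\bbF_q^{m'\times m_i'}$. Here I allow the two problems to have different output dimensions $m,m'$, since nothing in the statement forces them to be equal. For each $i$, the block-diagonal matrix $\mathrm{diag}(\bfV_i,\bfV_i')$ has rank $\text{rank}(\bfV_i)+\text{rank}(\bfV_i')=m_i+m_i'$, which is its number of columns, so it has full column rank. For the second condition, permute the columns of $[\mathrm{diag}(\bfV_1,\bfV_1')|\cdots|\mathrm{diag}(\bfV_k,\bfV_k')]$ into the grouped form
\[
\begin{bmatrix}\bfV_1&\cdots&\bfV_k&0&\cdots&0\\0&\cdots&0&\bfV_1'&\cdots&\bfV_k'\end{bmatrix},
\]
exactly as in the converse part of Theorem~\ref{theorem:direct_sum_coding_formal}. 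Its rank is $m+m'$, the number of rows, so it has full row rank. One degenerate case needs care: some $m_i$ or $m_i'$ may be zero, so that the corresponding matrix is empty, as in Example~\ref{example:subadditivity1}. Empty blocks contribute nothing to either rank count, so the argument goes through unchanged.

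Proposition~\ref{prop:classical_many_to_one_baseline} then gives $\Gamma^{*,\mathsf C}_{\mathrm{LC}_1}=\sum_i m_i$, $\Gamma^{*,\mathsf C}_{\mathrm{LC}_2}=\sum_i m_i'$, and $\Gamma^{*,\mathsf C}_{\mathrm{LC}_1\oplus\mathrm{LC}_2}=\sum_i (m_i+m_i')$. Adding the first two gives the third, which is the claimed additivity.

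There is no real obstacle here. The only delicate point is confirming that Proposition~\ref{prop:classical_many_to_one_baseline} really applies to the direct sum, that is, verifying Assumption~\ref{assumption:lin_ind_assumption} for it, including the empty-block case.

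As a sanity check, I would also note the direct argument. Give Bob all data vectors except Alice$_i$'s, $\{W_j,W_j'\}_{j\neq i}$, as side information. He must still recover $\mathrm{diag}(\bfV_i,\bfV_i')[W_i;W_i']$ for every possible input. This forces $\delta_i\ge (m_i+m_i')L$, and summing over $i$ gives the lower bound. Sending all data uncoded achieves it.
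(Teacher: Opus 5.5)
Your proposal is correct and follows the paper's proof. Like the paper, you apply Proposition~\ref{prop:classical_many_to_one_baseline} to $\mathrm{LC}_1$, $\mathrm{LC}_2$ and $\mathrm{LC}_1\oplus\mathrm{LC}_2$, and use $\text{rank}(\mathrm{diag}(\bfV_i,\bfV_i'))=\text{rank}(\bfV_i)+\text{rank}(\bfV_i')$. Your explicit check that the direct sum satisfies Assumption~\ref{assumption:lin_ind_assumption}, including the empty-block case, is a worthwhile addition, since the paper uses this tacitly when invoking that proposition.
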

\begin{proof}
Applying  Proposition \ref{prop:classical_many_to_one_baseline} to $\text{LC}_1,\text{LC}_2,\text{LC}_1\oplus \text{LC}_2$, we have that 
\begin{align*}
    \Gamma_{\text{LC}_1}^{*,\mathsf{C}} = &\sum_{i\in[k]} \text{rank}(\bfV_i),\qquad
    \Gamma_{\text{LC}_2}^{*,\mathsf{C}} = \sum_{i\in[k]} \text{rank}(\bfV_i'), \text{~and}\\
    \Gamma_{\text{LC}_1\oplus \text{LC}_2}^{*,\mathsf{C}} = &\sum_{i\in[k]} \text{rank}\left(\begin{bmatrix}
        \bfV_i&\\&\bfV_i'
    \end{bmatrix}\right) = \sum_{i\in[k]} \text{rank}(\bfV_i)+ \text{rank}(\bfV_i')  =  \Gamma_{\text{LC}_1}^{*,\mathsf{C}}+ \Gamma_{\text{LC}_2}^{*,\mathsf{C}}.
\end{align*}
\end{proof} 

\section{Comparison with prior work} \label{sec:comparison_with_prior_works}

The works most closely related to ours are \cite{N_Sum_Box_JafarYao25,Summation_JafarYao25, Inverted_3_Sum_Box_JafarYao25,HuUlukus25}.  We compare our results with these works from three perspectives: the class of linear functions covered, the communication cost of the resulting schemes, and the computational complexity of finding the required precoding.

\paragraph{Relation to the $N$-Sum Box and its variants} The works in \cite{N_Sum_Box_JafarYao25} and \cite{Summation_JafarYao25} were among the first to consider linear combination problems in the distributed superdense-coding setting. In the $N$-Sum Box work \cite{N_Sum_Box_JafarYao25}, Bob wants to compute $y = M_x \bfx + M_z \bfz$ where $M_x,M_z$ are   $\kappa \times N$ matrices over $\bbF_q$. There are $N$ senders, and Alice$_i$ has $x_i$ and $z_i$. The $N$-Sum Box protocol
applies when
\[
    \Omega(M_x,M_z)=M_xM_z^T-M_zM_x^T=0,
    \qquad
    \text{rank}([M_x|M_z])=\kappa,
    \qquad
    \kappa\le N.
\] 
The same work also gives two extensions that are important for comparison: the locally invertible transform (LIT) characterization for $N$-output transfer matrices, and a reduction from $\kappa$ output coordinates to an $N$-output Sum Box when $\kappa\le N$. In our terminology, the $N$-Sum Box is a linear-computation protocol over a noiseless quantum many-to-one network.  

Our schemes use this protocol as a black box, but address LC problems for which the SSO condition may not hold directly. The main idea is to perform distributed precoding and, when useful, encoding across multiple instances, so that the transformed computation satisfies the hypotheses of the $N$-Sum Box.
 


In \cite{Summation_JafarYao25}, the setting involves computing a sum, i.e., Bob wants to compute \[y=x_1+\dots + x_N.\]  This is an important but   more restrictive  class of functions than the general LC problem considered here. The work of \cite{Inverted_3_Sum_Box_JafarYao25} gives a complete (information-theoretically optimal) characterization for the three-sender case. 
However, its derivation relies on a rather fine-grained bookkeeping of the algebraic relations among the three subspaces $\mathrm{col}(\bfV_1),\,\mathrm{col}(\bfV_2),\,\mathrm{col}(\bfV_3).$  Expressing the rate region involves considering the rank of the union of all possible submatrices of $\bfV_i, i = 1, \dots, 3$. 
%
This is still manageable for three senders, but the approach is inherently combinatorial: a direct extension to general $k$ would require tracking an exponentially growing family of subspace relations (essentially over all subsets of $[k]$), which would make the exact region computation and the search for an optimal coding strategy quickly intractable. In contrast, our approach directly targets arbitrary $k$ with relatively simple protocols that avoid enumerating the entire subspace lattice. Although the protocols may not always be optimal, they give closed-form achievable costs and apply to broad classes of LC problems. Table~\ref{tab:comparison_prior_work} summarizes the main differences between the closest prior construction and the schemes developed in this paper.

\begin{table*}[t!]
\caption{Comparison of the closest prior construction and the schemes in this paper.}
\label{tab:comparison_prior_work}
\centering
\scriptsize
\setlength{\tabcolsep}{3pt}
\renewcommand{\arraystretch}{1.15}
\begin{tabularx}{\textwidth}{@{}p{0.17\textwidth}p{0.29\textwidth}p{0.21\textwidth}X@{}}
\toprule
\textbf{Construction} 
& \textbf{Applicability} 
& \textbf{Cost} 
& \textbf{Main point} \\
\midrule
$(\kappa, N)$-Sum Box~\cite{N_Sum_Box_JafarYao25}
& SSO condition (may also hold after an LIT), \(\text{rank}([M_x|M_z])=\kappa\).
& \(N\) qudits.
& Primitive used to implement self-orthogonal linear computations. \\

Hu {\it et al.}~\cite{HuUlukus25}
& General LC problems under their full-rank assumptions.
& \(\bigl(\sum_i m_i+\mathbf{Prob}_2(\mathrm{LC})\bigr)/2\).
& Closest prior achievability scheme; compared below with Scheme I on the Interval-\([0,1]\) subclass. \\

This paper: Scheme I
& Interval-\([0,1]\) LC problems.
& \(\bigl(\sum_i m_i+\mathbf{Prob}_1(\mathrm{LC})\bigr)/2\).
& On this subclass, no larger cost than Hu {\it et al.} since
\(\mathbf{Prob}_1(\mathrm{LC})\le\mathbf{Prob}_2(\mathrm{LC})\);
the improvement can be strict. \\

This paper: Scheme II
& Restricted Interval-\([0,1]\) / Double-Basis-type instances.
& \(2(\sum_i m_i)/3\).
& Closed-form cost; can strictly improve  over Hu {\it et al.} in certain cases. \\

This paper: Direct sums
& Pairs satisfying Condition~\ref{condition:direct_sum_condition}.
& \(2m\), optimal for that class.
& Demonstrates strict subadditivity of the quantum many-to-one communication cost. \\

\bottomrule
\end{tabularx}  
\end{table*}



 
\paragraph{Comparison with Hu {\it et al.}} 
The work most closely related to ours is \cite{HuUlukus25}. It studies essentially the same noiseless quantum many-to-one model for LC with multiple instances, and proposes an achievable scheme based on enlarging the linear combination with auxiliary entangled qudits and optimizing precoding matrices.


In their formulation, the achieved total cost \(\Gamma\) has the form $\frac{\sum_i m_i+c}{2},$ where \(c\) is the optimum value of the following precoding problem.
\begin{problem}\label{problem:ulukus}
(Precoding Problem of~\cite{HuUlukus25}) Let $\text{LC}=(\bbF_q,k,\bfV_1,\dots,\bfV_k)$. Define 
$$\textbf{Prob}_2(\text{LC}) = \min_{\bfP} \text{rank}(\sum_{i\in [k]}\bfV_i P_i \bfV_i^T),$$
where the precoding  $\bfP=(P_1,\dots,P_k)$ is such that $P_i$ ranges over   invertible $m_i\times m_i$ matrices over $\bbF_q$.
\end{problem}

The total cost of our first coding scheme (\textit{cf.} Theorem \ref{theorem:first_main_result_informal}(a)) has the same structural form
\(
(\sum_{i} m_i + s)/2
\),
with $s=\textbf{Prob}_1(\text{LC})$ defined through an optimal precoding problem (\textit{cf.} Problem \ref{problem:precoding}) that is similar to (but different from) $\textbf{Prob}_2(\text{LC})$.   We show in Lemma~\ref{lemma:prob1<=prob2} (see below)  that there exists a mapping from the feasible set of \cite{HuUlukus25} into our feasible set that preserves feasibility and the objective value.   

\begin{lemma}\label{lemma:prob1<=prob2}
        Let $\text{LC}=(\bbF_q,k,\bfV_1,\dots,\bfV_k)$ be an Interval-$[0,1]$ problem. Then, $\textbf{Prob}_1(\text{LC})\le \textbf{Prob}_2(\text{LC})$.
\end{lemma}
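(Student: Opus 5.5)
The plan is to take an optimal precoding tuple for Problem~\ref{problem:ulukus} and build from it an element of $\calF(s)$ with $s=\textbf{Prob}_2(\text{LC})$. By the definition of $\textbf{Prob}_1$ as an infimum, this gives $\textbf{Prob}_1(\text{LC})\le\textbf{Prob}_2(\text{LC})$. The construction has two parts: give Alice$_1$ the same precoding as the other senders, and absorb the rank-$c$ residual obstruction into $s=c$ auxiliary columns using Lemma~\ref{lem:alternating_omega_factorization}, exactly as in Proposition~\ref{prop:trivial_precoding_bound}.

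\textbf{Step 1: compute the block SSO terms.} Fix invertible $P_1,\dots,P_k$ attaining $\textbf{Prob}_2(\text{LC})=:c$, and set $K:=\sum_{i\in[k]}\bfV_iP_i\bfV_i^T\in\bbF_q^{m\times m}$, so that $\text{rank}(K)=c$. A direct computation gives
\[
\Omega\left(\begin{bmatrix}\bfV_iP_i\\0\end{bmatrix},\begin{bmatrix}0\\ \bfV_i\end{bmatrix}\right)=\begin{bmatrix}0 & \bfV_iP_i\bfV_i^T\\ -\bfV_iP_i^T\bfV_i^T & 0\end{bmatrix}.
\]
Summing over all $i\in[k]$ gives $N_K:=\begin{bmatrix}0&K\\-K^T&0\end{bmatrix}$, which has rank $2c$. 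By Proposition~\ref{prop:SSO_proposition}\ref{prop:SSO_alternating} it is alternating, being a sum of $\Omega$'s.

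\textbf{Step 2: define the tuple.} Keep $P_2,\dots,P_k$ unchanged as the precoders in $\calF(c)$. Since $-N_K$ is alternating of rank $2c$, Lemma~\ref{lem:alternating_omega_factorization} gives $Q,R\in\bbF_q^{2m\times c}$ with $\Omega(Q,R)=-N_K$. Set
\[
\tilde M_1:=\left[\begin{bmatrix}\bfV_1P_1\\0\end{bmatrix}\,\middle|\,Q\right],\qquad \tilde M_2:=\left[\begin{bmatrix}0\\ \bfV_1\end{bmatrix}\,\middle|\,R\right],
\]
both in $\bbF_q^{2m\times(m_1+c)}$. By Proposition~\ref{prop:SSO_proposition}\ref{prop:SSO_partition},
\[
\Omega(\tilde M_1,\tilde M_2)=\Omega\left(\begin{bmatrix}\bfV_1P_1\\0\end{bmatrix},\begin{bmatrix}0\\ \bfV_1\end{bmatrix}\right)+\Omega(Q,R).
\]
Adding the terms for $i\ge2$ gives $N_K-N_K=0$, which is constraint (b).

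\textbf{Step 3: constraint (a).} Order the columns of $[\tilde M_1|\tilde M_2]$ as four blocks: $[\bfV_1P_1;0]$, then $Q$, then $[0;\bfV_1]$, then $R$. Let $\tilde X_1\in\bbF_q^{2(m_1+c)\times 2m_1}$ have block rows $(P_1^{-1},0)$, $(0,0)$, $(0,I_{m_1})$, $(0,0)$ matching these blocks. Then $[\tilde M_1|\tilde M_2]\tilde X_1=\begin{bmatrix}\bfV_1&\\&\bfV_1\end{bmatrix}$, so the tuple lies in $\calF(c)$.

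All the ingredients are earlier results, so I expect no serious obstacle. The only points needing care are the sign and transpose conventions in the block form of $\Omega$, since $P_i$ need not be symmetric; in particular one must check that $-N_K$, not $N_K$, is what must be factorized. Both are alternating, so Lemma~\ref{lem:alternating_omega_factorization} applies either way. The Interval-$[0,1]$ hypothesis is needed only so that $\textbf{Prob}_1$ is defined.
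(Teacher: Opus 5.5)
Your proof is correct and follows the paper's route. The paper computes $\frac{1}{2}\text{rank}\bigl(\sum_{i}\Omega(\cdot,\cdot)\bigr)=\text{rank}(K)=\textbf{Prob}_2(\text{LC})$ exactly as in your Step 1, and then invokes Proposition~\ref{prop:trivial_precoding_bound}. The appendix proof of that proposition is precisely your Steps 2--3, with one cosmetic difference: it factors the obstruction itself as $\Omega(Q,R)$ and places $-Q$ in $\tilde M_1$, whereas you factor $-N_K$ and use $+Q$.
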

\begin{proof}
Let $\bfP=(P_i)_{i=1}^k$ be  such that $\textbf{Prob}_2(\text{LC}) = \text{rank}(\sum_{i\in [k]}\bfV_i P_i \bfV_i^T)$. Then, we note that  
\begin{align*}
  &\frac{1}{2}\text{rank}\left(\sum_{i\in[k]}\Omega\left( 
        \begin{bmatrix}
            \bfV_iP_i\\0
        \end{bmatrix},\begin{bmatrix}
            0\\ \bfV_i
        \end{bmatrix}
     \right)\right) = \frac{1}{2} \text{rank}\left(\begin{bmatrix}
        0&\sum_{i\in [k]}\bfV_i P_i \bfV_i^T\\
        -\big(\sum_{i\in [k]}\bfV_i P_i \bfV_i^T\big)^T & 0
    \end{bmatrix}\right)\\
     & = \text{rank}\big(\sum_{i\in [k]}\bfV_i P_i \bfV_i^T\big) = \textbf{Prob}_2(\text{LC}).
\end{align*}
By Proposition \ref{prop:trivial_precoding_bound}, the conclusion follows.
\end{proof}

The inequality in Lemma~\ref{lemma:prob1<=prob2} shows that our first coding scheme is never worse than the scheme of \cite{HuUlukus25} on Interval-$[0,1]$ LC problems. The forthcoming Example~\ref{example:Ulukus_eg1} shows that this improvement can be strict. Furthermore, Example~\ref{example:Ulukus_eg2} below shows that, even for Restricted Interval-$[0,1]$ problems, the cost achieved by Theorem~\ref{theorem:first_main_result_informal}(b) can be strictly lower than that of \cite{HuUlukus25}. 

Our work also shows the subadditivity of $\Gamma^{*,\mathsf{Q}}$ under the noiseless quantum many-to-one model (\textit{cf.} Theorem \ref{theorem:subadditivity}), and Proposition \ref{prop:classical_many_to_one_additive} shows that the optimal cost under the classical many-to-one model is additive. To our best knowledge, these observations have not appeared in prior work. 
 

 
\begin{example}\label{example:Ulukus_eg1}
    Let  $\mathrm{LC}= (\bbF_2,5,\bfV_1,\dots,\bfV_5)$ where
    \begin{align*}
        \bfV_1 = \begin{bmatrix}
            0&1\\0&1\\0&0\\1&0
        \end{bmatrix},
        \bfV_2 = \begin{bmatrix}
            1&1\\1&0\\0&1\\1&1
        \end{bmatrix},
        \bfV_3 = \begin{bmatrix}
            0&0\\1&0\\1&1\\0&1
        \end{bmatrix},
        \bfV_4 = \begin{bmatrix}
            0 \\0 \\0\\1 
        \end{bmatrix},
        \bfV_5 = \begin{bmatrix}
            0 \\0 \\1\\0 
        \end{bmatrix}.
    \end{align*}
We note that the only $1\times 1$ invertible matrix in $\bbF_2$ is $1$. Therefore, the precoding problem in \cite{HuUlukus25} can be written as 
\begin{align*}
\textbf{Prob}_2(\text{LC})  =   \min_{(P_1,P_2,P_3)} \text{rank}\left( \sum_{i\in[3]}\bfV_i P_i \bfV_i^T + \begin{bmatrix}
        0&0&0&0\\0&0&0&0\\0&0&1&0\\0&0&0&1
    \end{bmatrix} \right),
\end{align*}
where $P_1,P_2,P_3$ each range over $2\times 2$ invertible matrices in $\bbF_2$. There are six such matrices. Therefore, there are $6^3=216$ cases. Exhaustive enumeration of the 216 cases yields that $\textbf{Prob}_2(\text{LC}) = 2$ (code available at \cite{Ruoyu25}). Therefore, the cost achieved by the scheme of \cite{HuUlukus25} is $\frac{2+8}{2} = 5$.

In contrast, our precoding strategy is strictly better. Set $s=1$, and
\begin{align*}
    P_2 = \begin{bmatrix}
        0&1\\1&1
    \end{bmatrix},
    P_3 = \begin{bmatrix}
        1&0\\0&1
    \end{bmatrix}, P_4=P_5=[1], \tilde{M}_1 = \begin{bmatrix}
        \overline{M} \\ 0_{4\times 3}
    \end{bmatrix}, \tilde{M}_2 = \begin{bmatrix}
     0_{4\times 3} \\ \overline{M}
    \end{bmatrix}, \tilde{X}_1 = \begin{bmatrix}
        \overline{X}&0_{3\times 2}\\0_{3\times 2}&\overline{X}
    \end{bmatrix},
\end{align*}
where 
\begin{align*}
    \overline{M} = \begin{bmatrix}
        1&1&1\\1&1&1\\ 1&1&0\\ 1&0&0
    \end{bmatrix}, \text{~and~} \overline{X} = \begin{bmatrix}
        1&0\\1&0\\0&1
    \end{bmatrix}.
\end{align*}
By calculation, we verify that (1) $\bfV_1 = \overline{M}\overline{X}$ and (2) $\Omega(\tilde{M}_1,\tilde{M}_2) + \sum_{i\in[k]\setminus \{1\}}  \Omega\Big(\begin{bmatrix}
            \bfV_iP_i\\ 0
        \end{bmatrix}, \begin{bmatrix}
            0 \\ \bfV_i
        \end{bmatrix} \Big) =0.$ Therefore, $((P_i)_{i=2}^5,\tilde{M}_1,\tilde{M}_2,\tilde{X}_1)$  is   feasible  for $s=1$. This implies that our cost is at most $\frac{1+8}{2}<5$, which is strictly smaller than the cost of \cite{HuUlukus25}.
\end{example}

\begin{example}\label{example:Ulukus_eg2}
Consider $\text{LC}=\left(\bbF_2, 4, \bfV_1=\begin{bmatrix}
    1\\0
\end{bmatrix}, \bfV_2=\begin{bmatrix}
    0\\1
\end{bmatrix}, \bfV_3=\begin{bmatrix}
    1\\1
\end{bmatrix}, \bfV_4=\begin{bmatrix}
    1\\0
 \end{bmatrix}\right)$. Here each Alice$_i$ has a scalar $W_i$, so   $m_i=1$. 
Therefore, each $P_i$ in $\text{Prob}_2(\text{LC})$ must be a nonzero scalar in $\bbF_2$, i.e.,  $P_i=1$ for all $i$. In this case, we have 
\begin{align*}
    \text{rank}\left(\sum_i\bfV_i P_i \bfV_i^T\right) =  \text{rank}\left(\begin{bmatrix}
        1&1\\1&0
    \end{bmatrix}\right)=2.
\end{align*}
Thus, the total cost of the scheme of \cite{HuUlukus25} is $\frac{2+4}{2}=3$.

On the other hand, note that this LC is a Double-Basis LC, because $[\bfV_1|\dots|\bfV_4]$ contains disjoint submatrices $\begin{bmatrix}
    1&0\\0&1
\end{bmatrix}$ and $\begin{bmatrix}
    1&1\\1&0
\end{bmatrix}$. Therefore, the coding scheme in Theorem \ref{theorem:first_main_result_informal}(b) applies, which gives a total cost of $\frac{2}{3}(\sum_i m_i ) = \frac{8}{3}< 3$. Thus our Restricted Interval-$[0,1]$ scheme can also strictly improve on the scheme of \cite{HuUlukus25}.
\end{example}

\section{Conclusions and Future Work}
\label{sec:conclusions}

In this work, we studied linear computation (LC) over a noiseless quantum many-to-one network. We showed that the $(\kappa,N)$-Sum Box construction, combined with suitable precoding, can be used as a general building block for coding schemes beyond the original Sum Box setting. In particular, we developed coding schemes for several classes of LC problems, including Interval-$[0,1]$ LC problems through an optimal-precoding formulation, and Restricted Interval-$[0,1]$ LC problems through a decomposition of the form $Y=U+\bfP V+\bfQ Z$. For both these classes of problems we provide schemes with guaranteed costs, that are strictly lower than prior known costs in certain cases. Furthermore, we discuss a class of direct-sum LC problems for which the quantum many-to-one cost can be strictly subadditive while the corresponding classical cost remains additive.

Several questions remain open. A central direction is to better understand the SSO condition $\Omega(M_x,M_z)=0$ and its interaction with precoding. A better understanding of conditions under which the SSO obstruction can be cancelled may lead to more efficient protocols and may also simplify the search for feasible precoding matrices. Another important direction is to develop converse bounds. Finally, it would be interesting to determine whether the sufficient conditions identified in this work can be weakened or made necessary.

\appendices

\section{Proofs for the Preliminaries} 

\subsection{Proof of Proposition \ref{prop:SSO_proposition}}\label{prop:SSO_proposition:app}
\begin{proof}[Proof of $(a)$]
    \begin{align*}
        &\Omega([M_{x_1}|M_{x_2}],[M_{z_1}|M_{z_2}])\\
        =&M_{x_1}M_{z_1}^T+M_{x_2}M_{z_2}^T- M_{z_1}M_{x_1}^T -M_{z_2}M_{x_2}^T \\
        =&M_{x_1}M_{z_1}^T- M_{z_1}M_{x_1}^T+M_{x_2}M_{z_2}^T-M_{z_2}M_{x_2}^T\\
        =&\Omega(M_{x_1},M_{z_1} ) + \Omega(M_{x_2},M_{z_2} ).
    \end{align*}
\end{proof}
\begin{proof}[Proof of $(b)$]
    $\Omega(M_x,M_z)=M_{x}M_{z}^T- M_{z}M_{x}^T=-(M_{z}M_{x}^T- M_{x}M_{z}^T)=-\Omega(M_z,M_x)$.
\end{proof}
\begin{proof}[Proof of $(c)$]The statement follows directly from distributivity of matrix multiplication. Indeed, for compatible matrices,
\begin{align*}
&    \Omega(M_{x_1}+M_{x_2},M_z)
=
\Omega(M_{x_1},M_z)+\Omega(M_{x_2},M_z),
\end{align*} 
with a similar argument for linearity in the second argument.
Moreover, for any scalar \(a\in\bbF_q\),
\[
\Omega(aM_x,M_z)=a\Omega(M_x,M_z),
\qquad
\Omega(M_x,aM_z)=a\Omega(M_x,M_z).
\]
Hence \(\Omega(\cdot,\cdot)\) is bilinear.
\end{proof}
\begin{proof}[Proof of $(d)$]
    \begin{align*}
        &\Omega(DM_x,DM_z)=DM_{x}M_{z}^TD^T- DM_{z}M_{x}^TD^T \\
        &= D(M_{x}M_{z}^T- M_{z}M_{x}^T)D^T= D\Omega(M_x,M_z)D^T.
    \end{align*}
\end{proof}

\begin{proof}[Proof of $(e)$] 
We have $\Omega(M_x,M_z)^T=M_zM_x^T-M_xM_z^T=-\Omega(M_x,M_z).$
    Moreover, it is easy to verify that $(\Omega(M_x,M_z))_{ii}=(M_xM_z^{T})_{ii}-(M_zM_x^{T})_{ii}
=\sum_{k=1}^n (M_x)_{ik}(M_z)_{ik}-\sum_{k=1}^n (M_z)_{ik}(M_x)_{ik}=0$, (where $n$ is the number of columns).  This argument also covers the case
\(\text{char}(\bbF_q)=2\), where the zero-diagonal condition is essential.
\end{proof}

 \subsection{Proof of Proposition~\ref{prop:trivial_precoding_bound}}
\label{app:prop:trivial_precoding_bound}

\begin{proof}
Fix invertible matrices \(P_i\in\bbF_q^{m_i\times m_i}\), and write $A_i:=\begin{bmatrix}\bfV_iP_i\\0\end{bmatrix},
\,
B_i:=\begin{bmatrix}0\\\bfV_i\end{bmatrix},
\, i\in[k].$  Let $M:=\sum_{i\in[k]}\Omega(A_i,B_i).$ The matrix \(M\) is alternating, and hence \(\text{rank}(M)\) is even ({\it cf.} Definition \ref{def:alternating_matrix}). Write $\text{rank}(M)=2r.$ By Lemma~\ref{lem:alternating_omega_factorization}, there exist
\(Q,R\in\bbF_q^{2m\times r}\) such that $M=\Omega(Q,R).$  We now construct a feasible point of Problem~\ref{problem:precoding} with
parameter \(s=r\). Set
\[
\widetilde M_1:=[A_1\mid -Q],
\qquad
\widetilde M_2:=[B_1\mid R],
\qquad
\widehat P_i:=P_i,\quad i=2,\dots,k,
\]
and define
\[
\widetilde X_1:=
\begin{bmatrix}
P_1^{-1} & 0\\
0 & 0\\
0 & I_{m_1}\\
0 & 0
\end{bmatrix},
\]
where the row block sizes are \(m_1,r,m_1,r\), and the column block
sizes are \(m_1,m_1\). First,
\[
[\widetilde M_1\mid \widetilde M_2]\widetilde X_1
=
[A_1\mid -Q\mid B_1\mid R]
\begin{bmatrix}
P_1^{-1} & 0\\
0 & 0\\
0 & I_{m_1}\\
0 & 0
\end{bmatrix}
=
[A_1P_1^{-1}\mid B_1]
=
\begin{bmatrix}
\bfV_1 & 0\\
0 & \bfV_1
\end{bmatrix}.
\]
Thus the reconstruction constraint in Problem~\ref{problem:precoding}
is satisfied. Second, using the block additivity and bilinearity of \(\Omega(\cdot,\cdot)\),
\[
\Omega(\widetilde M_1,\widetilde M_2)
=
\Omega(A_1,B_1)-\Omega(Q,R)
=
\Omega(A_1,B_1)-M.
\]
Therefore,
\[
\Omega(\widetilde M_1,\widetilde M_2)
+
\sum_{i=2}^k \Omega(A_i,B_i)
=
\Omega(A_1,B_1)-M+\sum_{i=2}^k \Omega(A_i,B_i)
=0.
\]
Hence the SSO constraint in Problem~\ref{problem:precoding} is also satisfied. Thus \(\mathcal F(r)\neq\emptyset\), and consequently $\mathbf{Prob}_1(\mathrm{LC})\le r
=
\frac{1}{2}\text{rank} (M(P_1,\dots,P_k)).$ Since \(M(P_1,\dots,P_k)\) is a \(2m\times 2m\) matrix, \(\text{rank} (M(P_1,\dots,P_k))\le 2m\), and hence $ \mathbf{Prob}_1(\mathrm{LC})\le m.$
\end{proof}

\section{Step-by-Step Calculations}
\subsection{Proof of the block-precoding rank identity}
\label{app:block_precoding_rank_identity}

We prove the identity used in the proof of
Theorem~\ref{theorem:first_main_result_informal}(a). Recall that $M_x=[M_{x,1}|\cdots|M_{x,k}],\,     M_z=[M_{z,1}|\cdots|M_{z,k}],$  where $M_{x,1}=\tilde M_1, 
    M_{z,1}=\tilde M_2,$ and, for $i\in\{2,\dots,k\}$, 
    $$M_{x,i}=
    \begin{bmatrix}
        \bfV_iP_i\\
        0
    \end{bmatrix},\,
    M_{z,i}=
    \begin{bmatrix}
        0\\
        \bfV_i
    \end{bmatrix}.$$  Let $ N=s+\sum_{i=1}^k m_i$, and $\Pi$ be the $2N\times 2N$ column-permutation matrix such that $[M_x|M_z]\Pi
    =
    [M_{x,1}|M_{z,1}|M_{x,2}|M_{z,2}|\cdots|M_{x,k}|M_{z,k}].$  Define
\[
    R_{\mathrm{blk}}
    =
    \operatorname{diag}
    \left(
        \tilde X_1,
        \begin{bmatrix}
            P_2^{-1} & 0\\
            0 & I_{m_2}
        \end{bmatrix},
        \dots,
        \begin{bmatrix}
            P_k^{-1} & 0\\
            0 & I_{m_k}
        \end{bmatrix}
    \right),
    \qquad
    R:=\Pi R_{\mathrm{blk}}.
\]
Then $R$ has dimension
\[
    2\left(s+\sum_{i=1}^k m_i\right)
    \times
    2\sum_{i=1}^k m_i.
\]
By the defining constraint of the Optimal Precoding Problem,
\[
    [\tilde M_1|\tilde M_2]\tilde X_1
    =
    \begin{bmatrix}
        \bfV_1 & 0\\
        0 & \bfV_1
    \end{bmatrix}
    =\bfV_1^{[2]}.
\]
Moreover, for each $i\in\{2,\dots,k\}$,
\begin{align*}
    [M_{x,i}|M_{z,i}]
    \begin{bmatrix}
        P_i^{-1} & 0\\
        0 & I_{m_i}
    \end{bmatrix}
    =
    \begin{bmatrix}
        \bfV_iP_i & 0\\
        0 & \bfV_i
    \end{bmatrix}
    \begin{bmatrix}
        P_i^{-1} & 0\\
        0 & I_{m_i}
    \end{bmatrix}  =
    \begin{bmatrix}
        \bfV_i & 0\\
        0 & \bfV_i
    \end{bmatrix}
    =
    \bfV_i^{[2]}.
\end{align*}
Therefore,
\begin{align*}
    [M_x|M_z]R
    &=
    [M_x|M_z]\Pi R_{\mathrm{blk}}\\
    &=
    [M_{x,1}|M_{z,1}|M_{x,2}|M_{z,2}|\cdots|M_{x,k}|M_{z,k}]
    R_{\mathrm{blk}}\\
    &=
    [\bfV_1^{[2]}|\bfV_2^{[2]}|\cdots|\bfV_k^{[2]}].
\end{align*}
This proves the claimed block-precoding identity.

\subsection{Proof of  $ \Omega(M_x,M_z)=0$ for $M_x,M_z$ defined in \eqref{eq:Mx_Mz_of_eg_of_restricted}}\label{eq:restricted_interval_eq:app}
\begin{proof}  
We make the following observation in \eqref{eq:lemma7_eg_eq1}.
\begin{equation}
    \begin{split}\label{eq:lemma7_eg_eq1}
       & \Omega\left(\begin{bmatrix}
        \bfP&\bfQ\\0_{m\times m}& 0_{m\times \bar m}\\0_{m\times m}& 0_{m\times \bar m}
    \end{bmatrix},
    \begin{bmatrix}
        0_{m\times m}& 0_{m\times \bar m}\\\bfP&\bfQ\\0_{m\times m}& 0_{m\times \bar m}
    \end{bmatrix}\right)\\
   & \overset{(a)}{=} \Omega\left(\begin{bmatrix}
        \bfP \\0_{m\times m} \\0_{m\times m} 
    \end{bmatrix},
    \begin{bmatrix}
        0_{m\times m} \\\bfP \\0_{m\times m} 
    \end{bmatrix}\right) + \Omega\left(\begin{bmatrix}
         \bfQ\\  0_{m\times \bar m}\\  0_{m\times \bar m}
    \end{bmatrix},
    \begin{bmatrix}
       0_{m\times \bar m}\\ \bfQ\\  0_{m\times \bar m}
    \end{bmatrix}\right)\\
   & \overset{(b)}{=} \begin{bmatrix}
        0_{m\times m} & \bfP\bfP^T & 0_{m\times m}\\
        - \bfP\bfP^T & 0_{m\times m}& 0_{m\times m}\\
         0_{m\times m}& 0_{m\times m}& 0_{m\times m}
    \end{bmatrix}+\begin{bmatrix}
        0_{m\times m} & \bfQ\bfQ^T & 0_{m\times m}\\
        - \bfQ\bfQ^T & 0_{m\times m}& 0_{m\times m}\\
         0_{m\times m}& 0_{m\times m}& 0_{m\times m}
    \end{bmatrix}\\
   & = \begin{bmatrix}
        0_{m\times m} & \bfP\bfP^T +\bfQ\bfQ^T& 0_{m\times m}\\
        - \bfP\bfP^T -\bfQ\bfQ^T& 0_{m\times m}& 0_{m\times m}\\
         0_{m\times m}& 0_{m\times m}& 0_{m\times m}
    \end{bmatrix}
    \overset{(c)}{=}\Omega\left(\begin{bmatrix}
        I_m\\ 0_{m\times m}\\ 0_{m\times m}
    \end{bmatrix},
    \begin{bmatrix}
     0_{m\times m}\\\bfP\bfP^T +\bfQ\bfQ^T\\ 0_{m\times m}    
    \end{bmatrix}\right),
    \end{split}
\end{equation}
where $(a)$ is by Proposition \ref{prop:SSO_proposition}\ref{prop:SSO_partition}, and $(b)$ and $(c)$ are by  Definition \ref{def:SSO_matrix} and calculation.
\begin{equation}
    \begin{split}\label{eq:lemma7_eg_eq2}
        & \Omega\left(\begin{bmatrix}
        0_{m\times m}& 0_{m\times \bar m}\\ I_m &  0_{m\times \bar m}\\0_{m\times m}& 0_{m\times \bar m}
    \end{bmatrix},
    \begin{bmatrix}
        0_{m\times m}& 0_{m\times \bar m}\\0_{m\times m}& 0_{m\times \bar m}\\\bfP&\bfQ
    \end{bmatrix}\right)\\
    \overset{(a)}{=}&  \Omega\left(\begin{bmatrix}
        0_{m\times m} \\ I_m  \\0_{m\times m} 
    \end{bmatrix},
    \begin{bmatrix}
        0_{m\times m} \\0_{m\times m} \\\bfP 
    \end{bmatrix}\right) 
    + \Omega\left(\begin{bmatrix}
         0_{m\times \bar m}\\ 0_{m\times \bar m}\\  0_{m\times \bar m}
    \end{bmatrix},
    \begin{bmatrix}
        0_{m\times \bar m}\\   0_{m\times \bar m}\\ \bfQ
    \end{bmatrix}\right)\\
    \overset{(b)}{=}& \begin{bmatrix}
         0_{m\times m} & 0_{m\times m} & 0_{m\times m} \\
          0_{m\times m} & 0_{m\times m} & \bfP^T\\
           0_{m\times m} &-\bfP &  0_{m\times m} 
    \end{bmatrix} + 0 \overset{(c)}{=}  \Omega\left(\begin{bmatrix}
         0_{m\times  m}\\ 0_{m\times  m}\\  I_m
    \end{bmatrix},
    \begin{bmatrix}
        0_{m\times  m}\\   -\bfP^T\\ 0_{m\times  m}
    \end{bmatrix}\right)
    \end{split}
\end{equation}

\begin{align*}
    0  &\overset{(a)}{=}   \Omega\left(\begin{bmatrix}
        \bfP&\bfQ\\0_{m\times m}& 0_{m\times \bar m}\\0_{m\times m}& 0_{m\times \bar m}
    \end{bmatrix},
    \begin{bmatrix}
        0_{m\times m}& 0_{m\times \bar m}\\\bfP&\bfQ\\0_{m\times m}& 0_{m\times \bar m}
    \end{bmatrix}\right) - \Omega\left(\begin{bmatrix}
        I_m\\ 0_{m\times m}\\ 0_{m\times m}
    \end{bmatrix},
    \begin{bmatrix}
     0_{m\times m}\\\bfP\bfP^T +\bfQ\bfQ^T\\ 0_{m\times m}    
    \end{bmatrix}\right)\\
    &+ \Omega\left(\begin{bmatrix}
        0_{m\times m}& 0_{m\times \bar m}\\ I_m &  0_{m\times \bar m}\\0_{m\times m}& 0_{m\times \bar m}
    \end{bmatrix},
    \begin{bmatrix}
        0_{m\times m}& 0_{m\times \bar m}\\0_{m\times m}& 0_{m\times \bar m}\\\bfP&\bfQ
    \end{bmatrix}\right) - 
    \Omega\left(\begin{bmatrix}
         0_{m\times  m}\\ 0_{m\times  m}\\  I_m
    \end{bmatrix},
    \begin{bmatrix}
        0_{m\times  m}\\   -\bfP^T\\ 0_{m\times  m}
    \end{bmatrix}\right)\\ 
    &\overset{(b)}{=}   \Omega\left(\begin{bmatrix}
        \bfP&\bfQ\\0_{m\times m}& 0_{m\times \bar m}\\0_{m\times m}& 0_{m\times \bar m}
    \end{bmatrix},
    \begin{bmatrix}
        0_{m\times m}& 0_{m\times \bar m}\\\bfP&\bfQ\\0_{m\times m}& 0_{m\times \bar m}
    \end{bmatrix}\right) + \Omega\left(
    \begin{bmatrix}
     0_{m\times m}\\\bfP\bfP^T +\bfQ\bfQ^T\\ 0_{m\times m}    
    \end{bmatrix},\begin{bmatrix}
        I_m\\ 0_{m\times m}\\ 0_{m\times m}
    \end{bmatrix}\right)\\
    &+ \Omega\left(\begin{bmatrix}
        0_{m\times m}& 0_{m\times \bar m}\\ I_m &  0_{m\times \bar m}\\0_{m\times m}& 0_{m\times \bar m}
    \end{bmatrix},
    \begin{bmatrix}
        0_{m\times m}& 0_{m\times \bar m}\\0_{m\times m}& 0_{m\times \bar m}\\\bfP&\bfQ
    \end{bmatrix}\right) + 
    \Omega\left(\begin{bmatrix}
         0_{m\times  m}\\ 0_{m\times  m}\\  I_m
    \end{bmatrix},
    \begin{bmatrix}
        0_{m\times  m}\\    \bfP^T\\ 0_{m\times  m}
    \end{bmatrix}\right)\\
    &\overset{(c)}{=} \Omega\left(\begin{bmatrix}
        \bfP & \bfQ & 0_{m\times m} & 0_{m\times m} & 0_{m\times \bar m}  & 0_{m\times  m}\\
        0_{m\times m} & 0_{m\times \bar m} & \bfP\bfP^T +\bfQ\bfQ^T & I_m & 0_{m\times \bar m} & 0_{m\times  m}\\
        0_{m\times m} & 0_{m\times \bar m} & 0_{m\times m} & 0_{m\times m} & 0_{m\times \bar m} & I_m
    \end{bmatrix}, \right.
    \notag\\&\qquad
    \left. 
    \begin{bmatrix}
        0_{m\times m} & 0_{m\times \bar m}  & I_m & 0_{m\times m} & 0_{m\times \bar m}  & 0_{m\times   m}\\
        \bfP & \bfQ& 0_{m\times m} &0_{m\times m} & 0_{m\times \bar m} & \bfP^T\\
        0_{m\times m} & 0_{m\times \bar m} & 0_{m\times m} &  \bfP & \bfQ & 0_{m\times   m}
    \end{bmatrix}
    \right)\\
    &= \Omega(M_x,M_z),
\end{align*}
where  $(a)$ follows from \eqref{eq:lemma7_eg_eq1} and \eqref{eq:lemma7_eg_eq2}, $(b)$ follows from Proposition \ref{prop:SSO_proposition}\ref{prop:SSO_bilinear} and \ref{prop:SSO_proposition}\ref{prop:SSO_asymmetric}, $(c)$ follows from Proposition \ref{prop:SSO_proposition}\ref{prop:SSO_partition}. 
\end{proof}

\section{Proof of Proposition \ref{prop:Restricted Interval-$[0,1]$_two_disjoint_submatrices_condition}}  \label{app:prop:Restricted Interval-$[0,1]$_two_disjoint_submatrices_condition}

\subsection{Proof of Part (a)} 
\begin{proof}
It suffices to find $L$ and choose $A_i$ from each $\bfV_i^{[L]}$. Set $L=2$  and let $D$ and $D'$ denote two column-disjoint $m\times m$
full-rank submatrices of $\bfV$. For each $i\in[k]$, suppose that $D_i,D_i'$ are the sub-matrices of $D,D'$ that  are contained in $\bfV_i$, i.e., they are sub-matrices of $\bfV_i$. Therefore, $D=[D_1|\dots|D_k]$ and $D'=[D_1'|\dots|D_k']$, and for each $i\in [k]$, $\begin{bmatrix}
        D_i&0\\0&D_i'
    \end{bmatrix}$ is a submatrix of $\bfV_i^{[2]} =\begin{bmatrix}
    \bfV_i&0\\
    0& \bfV_i
\end{bmatrix}$. We set $A_i = \begin{bmatrix}
        D_i&0\\0&D_i'
    \end{bmatrix}$.  Since $D,D'$ are column-disjoint submatrices of $\bfV$, it follows that $D_i,D_i'$ are column-disjoint submatrices of $\bfV_i$.  Then, $A_i$  has at most $m_i$ columns, i.e. $a_i\le \frac{2 m_i}{2}$.  

Moreover, since $A_i=\begin{bmatrix}
        D_i&0\\0&D_i'
    \end{bmatrix}$ is block-diagonal, then
    \begin{align*}
        &\text{rank}([A_1|A_2|\dots|A_k]) \\
        &= \text{rank} \left(\begin{bmatrix}
            D_1 & 0 &D_2 &0 &\dots & D_k &0\\
             0 & D_1' &0 & D_2' &\dots & 0 & D_k'\\
        \end{bmatrix} \right)\\
        &=\text{rank}([D_1|\dots|D_k]) +\text{rank}([D_1'|\dots|D_k'])
        =\text{rank}(D) +\text{rank}(D')
        =2m.
    \end{align*} 
\end{proof}

\subsection{Proof of Part (b)}   
In this part we use the formalism of matroids \cite{schrijver2003combinatorial}.
A matroid \(M=(\mathcal N,\mathcal I)\) consists of a finite ground set \(\mathcal N\) and a family \(\mathcal I\) of independent sets. A basis of \(M\) is a maximal independent set; all bases of a matroid have the same cardinality, called the rank of the matroid. Given a matrix \(\bfV\), we denote by \(M(\bfV)\) the vector matroid represented by the columns of \(\bfV\): the ground set is the set of column indices, and a set of indices is independent if and only if the corresponding columns are linearly independent. Thus, if \(\text{rank}(\bfV)=m\), the bases of \(M(\bfV)\) are exactly the full-rank \(m\)-subsets of columns. An independence oracle for a matroid is an oracle that answers whether a queried set belongs to \(\mathcal I\). For vector matroids, this oracle is implemented by Gaussian elimination.

The \(p\)-fold matroid union problem asks for \(p\) bases \(B_1,\dots,B_p\) maximizing the size of their union; with capacities, each element \(e\) is counted only up to its capacity \(u(e)\). In our application, we use unit capacities \(u(e)=1\), which means that each column can be counted at most once. Thus, achieving objective value \(2m\) for two bases is equivalent to finding two non-overlapping full-rank \(m\)-subsets of columns, i.e.,
\[
\text{Double-Basis}
\quad\Longleftrightarrow\quad
\exists\ B_1,B_2 \text{ bases of } M(\bfV)
\text{ with } B_1\cap B_2=\varnothing .
\]
Equivalently, this holds if and only if the 2-fold matroid union objective with unit capacities \(u(e)=1\) has value \(2m\). Intuitively, \(u(e)=1\) is exactly the non-overlap constraint: a column cannot be used by both bases. 
\begin{proof} 
Let $N:=\sum_{i\in[k]}m_i$ and $\bfV=[\bfV_1|\bfV_2|\cdots|\bfV_k]\in \bbF_q^{m\times N}$. By Assumption \ref{assumption:lin_ind_assumption},   $\text{rank}(\bfV)=m$ holds.  Consider the vector matroid $M(\bfV)=([N],\,\mathcal I),$  where
\[
    S\in\mathcal I
    \quad\Longleftrightarrow\quad
    \text{rank}(\bfV_S)=|S|,
\]
where $\bfV_S$ consists of columns of $\bfV$ from the index set $S$. The rank of this matroid is \(r=m\), and its bases are exactly the full-rank \(m\)-subsets of columns of \(\bfV\). We use the standard exact algorithm for \(p\)-fold matroid union, where \(p\) denotes the number of copies of the matroid. 

\begin{problem}[$p$-fold Matroid Union {\cite[Section~3]{quanrud2024faster}}]
    Suppose we have a matroid \(M=(\mathcal N,\mathcal I)\), integer capacities \(u:\mathcal N\to\mathbb N\), and an integer \(p\). For bases \(B_1,\dots,B_p\), let $x(e):=|\{j\in[p]:e\in B_j\}|.$  The \(p\)-fold matroid union is the following maximization problem 
\[
    \mathrm{OPT}:=\max_{B_1,\dots,B_p}
    \sum_{e\in\mathcal N}\min\{u(e),x(e)\},
\]
where $B_1,\dots,B_p$ range over bases of the matroid.
\end{problem}

\begin{theorem}
    Quanrud \cite[Theorem~3.6]{quanrud2024faster} shows that, for integer capacities, a maximum \(p\)-fold Matroid Union can be computed using $O(n+\mathrm{OPT}\cdot r\log(pr))$ independence-oracle queries where $r$ is the rank of the matroid. 
\end{theorem}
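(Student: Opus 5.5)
Since this is a cited result, my plan is to reconstruct the standard augmenting-path argument for capacitated matroid union rather than to produce something new. I would then check that its query accounting gives $O(n+\mathrm{OPT}\cdot r\log(pr))$.

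\textbf{Reformulation.} First I would recast the objective as a matroid intersection. Let $\mathcal N^{(p)}=\mathcal N\times[p]$ carry the direct sum $M^{(p)}$ of $p$ copies of $M$. Then $\mathcal N^{(p)}$ also carries a capacitated partition matroid: a set $T$ is independent in it when each fibre $\{e\}\times[p]$ meets $T$ in at most $u(e)$ elements. A common independent set $T$ has size $\sum_e\min\{u(e),x(e)\}$, where $x(e)$ counts the copies using $e$. Conversely, any $p$ independent sets $I_1,\dots,I_p$ can be extended to bases $B_j\supseteq I_j$ greedily, and enlarging the sets never decreases the objective. Hence $\mathrm{OPT}$ equals the maximum common independent set size. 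The greedy extension costs $O(n)$ oracle queries per copy; I expect this is amortized, or done once on a contracted instance, to give the additive $O(n)$ term (up to the factor $p$, which I would track carefully).

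\textbf{Augmentation.} Second, I would run the augmenting-path algorithm for matroid intersection. Starting from $T=\emptyset$, each round builds the exchange graph of $T$ with respect to the two matroids. It finds a shortest source-to-sink path and augments along it, which increases $|T|$ by one. At most $\mathrm{OPT}$ augmentations are therefore needed. The partition side of the exchange graph is explicit and costs no queries; only the $M^{(p)}$ side, i.e.\ exchanges inside a single copy $j$, needs the oracle.

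\textbf{Main obstacle: cheap path search.} The hard step is to find each augmenting path using $O(r\log(pr))$ queries instead of the naive $O(nr)$ obtained by querying every exchange edge. The plan is to run BFS in which each exchange arc is located by binary search. For a current independent set $I_j$ (of size at most $r$) and an element $e$ with $I_j+e$ dependent, the arcs of $e$ go to its fundamental circuit in $I_j$. Some circuit element lying in a prefix of $I_j$ can be found with $O(\log r)$ queries, by asking whether $I_j$ minus that prefix, plus $e$, is independent. Likewise, whether some unvisited element can be added to a given $I_j$ can be tested in one query, by asking about the span. Each of the $O(pr)$ vertices on the $T$ side is discovered once per phase, and each discovery costs $O(\log(pr))$ queries. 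The step I would scrutinize most is arguing that the whole search costs $O(r\log(pr))$, not $O(pr\log)$, per augmentation. Here I would follow Quanrud and charge discoveries against the $O(r)$ elements whose membership changes, exploiting that all copies share one matroid. This lets one binary search over the concatenated $p$ copies handle all of them, which is where the $\log(pr)$ comes from. Shortest-path augmentation keeps the sets independent by the standard no-shortcut lemma.

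Summing the bounds gives $O(n)$ for setup and extension plus $O(\mathrm{OPT}\cdot r\log(pr))$ for the augmentations. In our application, Proposition~\ref{prop:Restricted Interval-$[0,1]$_two_disjoint_submatrices_condition}(b), we have $p=2$, $u\equiv1$, $r=m$ and $\mathrm{OPT}\le 2m$. Each query is a Gaussian elimination costing $\mathrm{poly}(m)$, so the total running time is polynomial.
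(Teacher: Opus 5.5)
The paper does not prove this statement. It imports it from Quanrud's work as a black box, and the Double-Basis decision procedure only needs the number of oracle calls to be polynomial in $m$ and $N$. Your reformulation is correct: $\mathrm{OPT}$ equals the maximum common independent set of the direct sum $M^{(p)}$ and the capacitated partition matroid. Combined with any textbook matroid-intersection algorithm, it already gives the polynomial bound the paper actually uses. It does not, however, establish the specific bound $O(n+\mathrm{OPT}\cdot r\log(pr))$ asserted here. The argument breaks at exactly the step you flag as the main obstacle.

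The cost of a shortest-augmenting-path search is proportional to the part of the exchange graph it explores, not to the elements changed by the path finally used. So ``charging discoveries against the $O(r)$ elements whose membership changes'' does not bound it. A single search may have to examine up to $n$ elements, each against up to $p$ copies. Nothing in your scheme carries work from one search to the next, so you get a per-augmentation cost that depends on $n$ or $p$, not an additive $n$.

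Two of your primitives are also unavailable in the independence-oracle model. First, deciding whether \emph{some} element of an unvisited set $U$ lies outside $\mathrm{span}(I_j)$ is a rank query, namely whether $\mathrm{rank}(I_j\cup U)>|I_j|$. It is not one independence query, because $I_j\cup U$ can be dependent purely because of dependencies inside $U$. Second, there is no single independence query over ``the concatenated $p$ copies''. The copies are separate sets of $M$, and one query to $M^{(p)}$ costs up to $p$ queries to $M$, so the $\log(pr)$ cannot come from one binary search across copies. The $O(pn)$ extension cost is a minor, fixable point: compute one basis $B$ and extend each nonempty $I_j$ inside $B$.

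The cited theorem rests on a different mechanism. It uses a push-relabel algorithm, building on Frank and Mikl\'os, rather than repeated independent path searches. Labels there are nondecreasing over the entire run, and this global monotonicity is what amortizes the work into an additive $n$ plus $O(r\log(pr))$ per unit of $\mathrm{OPT}$. Without some such cross-augmentation amortization, your augmenting-path route yields only a weaker polynomial bound.
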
 

We apply this result to the vector matroid \(M(\bfV)\) with $p=2,\, u(e)=1\quad\text{for all }e\in \calN.$ 
For two bases \(B_1,B_2\) of \(M(\bfV)\), the objective becomes $\sum_{e\in \calN}\min\{1,x(e)\} = |B_1\cup B_2|.$  Since each base has size \(m\), we have $|B_1\cup B_2|\le |B_1|+|B_2|=2m.$  Moreover, $|B_1\cup B_2|=2m \Longleftrightarrow  B_1\cap B_2=\varnothing.$  Therefore, the original instance is Double-Basis if and only if the optimum value of the above 2-fold matroid union instance is \(2m\).

It remains to bound the running time. In our application, $n=N=\sum_{i\in[k]}m_i,\,r=m,\,p=2,\,\mathrm{OPT}\le 2m.$ Thus Theorem~3.6 of \cite{quanrud2024faster} gives $O(N+2m^2\log(2m))$ independence-oracle queries. For the vector matroid \(M(\bfV)\), an independence-oracle query asks whether a set of columns \(S\subseteq[N]\) is linearly independent, i.e., $\text{rank}(\bfV_S)=|S|.$ This can be checked by Gaussian elimination over \(\bbF_q\) in polynomial time in \(m\) and \(N\). Hence the Double-Basis property can be decided in $O\!\left(\text{poly}\!\left(m,\sum_{i\in[k]}m_i\right)\right)$ time. 
\end{proof}

\subsection{Proof of Part (c)} 
\begin{proof} 
Let $m,m_1,\dots,m_k$ be fixed such that $\frac{2m}{\sum_{i\in[k]} m_i}  \in[0,1]$. 
We restate the relevant conditions here.
\begin{enumerate}[label=(\arabic*)]
    \item For each $i \in [k]$, the columns of $\bfV_i$ are all linearly independent.
    \item $\text{rank}([\bfV]) = m$.
    \item The instance is Double-Basis: for
    $\bfV=[\bfV_1|\cdots|\bfV_k]$, the matrix $\bfV$ contains two
    disjoint $m\times m$ full-rank submatrices $D,D'$.
\end{enumerate}
Recall that $\text{Num}_A$ is the number of LC problems satisfying  (1), (2) and (3),  
$\text{Num}_B$ is the number of LC problems satisfying  (1), (2). Let $\rho =  {\text{Num}_A}/{\text{Num}_B}$, and  $\text{Num}_C$ be the total number of LC problems $(\bbF_q,k,\bfV_1,\dots,\bfV_k)$ such that it  does not have to satisfy any of (1), (2) and (3), but $\bfV_i$ must be of dimension $m\times m_i$.

We sample each LC problem contributing to $\text{Num}_C$ by generating each column of  $\bfV$ i.i.d. with distribution $\text{Uniform}(\bbF_q^m)$. Let $\text{Event}_B$ denote the event that (1) and (2) are satisfied, and $\text{Event}_A$ the event that (1), (2) and (3) are satisfied. Then, we have $\text{Pr}(\text{Event}_A) = \text{Num}_A / \text{Num}_C,$ and $\text{Pr}(\text{Event}_B) = \text{Num}_B / \text{Num}_C$. It follows that 
\begin{align*}
    \rho =  {\text{Num}_A}/{\text{Num}_B}= \frac{\text{Pr}(\text{Event}_A)}{\text{Pr}(\text{Event}_B)}\ge \text{Pr}(\text{Event}_A).
\end{align*}
The argument below establishes that $\mathrm{Pr}(\mathrm{Event}_A)\ge 1- \frac{k+2}{q-1}$.

$\text{Event}_A$ is the event that (1), (2) and (3) all hold. We note that (3) implies (2), because $\text{rank}(\bfV)\le m$ as $\bfV$ has $m$ rows. Define $\text{Event}_C$ to be the event that (1) happens and (2), (3) may or may not happen, and define $\text{Event}_{C'}$ to be the event that the first $m$  and the second $m$ columns of $\bfV$ are valid choices of $D,D'$. Then, $\text{Event}_{C'}$ implies (3), but (3) may not imply $\text{Event}_{C'}$. Therefore, 
\begin{equation}\label{eq:event_c_c'_implies_a}
    \text{If both $\text{Event}_C$ and $\text{Event}_{C'}$ occur, then $\text{Event}_{A}$ occurs.}
\end{equation}  

Recall $\#\text{FullCol}(m,n;q)$ denotes the number of $m\times n$ matrices over $\bbF_q$ with full column rank  (\textit{cf.} Proposition \ref{prop:number_of_FullCol}). Since $\frac{2m}{\sum_{i\in[k]}m_i}\in[0,1]$, there are at least $2m$ columns generated.  At the moment $2m$ columns are generated, the probability that $\text{Event}_{C'}$ happens is lower bounded by 
\begin{align*}
    &\text{Pr($\text{Event}_{C'}$)}\\
    \overset{(a)}{=}&\frac{\#\text{FullCol}(m,m;q)}{|\bbF_q^m|^m} \cdot \frac{\#\text{FullCol}(m,m;q)}{|\bbF_q^m|^m}\\
    \overset{(b)}{=}&\frac{\prod_{i=0}^{m-1}(q^m-q^i)}{q^{m\cdot m}} \cdot \frac{\prod_{i=0}^{m-1}(q^m-q^i)}{q^{m\cdot m}}\\
    =&\Big ( \prod_{i=1}^{m}(1-q^{-i}) \Big )^2\\
    \ge&\Big ( \prod_{i=1}^{\infty}(1-q^{-i}) \Big )^2,
\end{align*} 
where $(a)$ is because there are $|\bbF_q^m|^m$ ways to choose $m$ columns and $\#\text{FullCol}(m,m;q)$ ways to generate $m$ linearly independent columns, and $(b)$ is because of Proposition \ref{prop:number_of_FullCol}.

The probability that $\text{Event}_{C}$ occurs is lower bounded by
\begin{align*}
    &\text{Pr($\text{Event}_{C}$)} \\
    \overset{(a)}{=}& \frac{\#\text{FullCol}(m,m_1;q)}{|\bbF_q^m|^{m_1}}\cdot\dots\cdot \frac{\#\text{FullCol}(m,m_k;q)}{|\bbF_q^m|^{m_k}}\\
    \overset{(b)}{=}& \prod_{i=1}^k\frac{\prod_{j=0}^{m_i-1}(q^m-q^j)}{q^{m\cdot m_i}}\\
    =&  \prod_{i=1}^k \Big(\prod_{j=1}^{m_i}(1-q^{-j}) \Big)\\
    \ge&  \Big ( \prod_{i=1}^{\infty}(1-q^{-i}) \Big )^k,
\end{align*}
where $(a)$ is because there are $|\bbF_q^m|^{m_i}$ ways to choose $m_i$ columns and $\#\text{FullCol}(m,m_i;q)$ ways to generate $m_i$ linearly independent columns, and $(b)$ is because of Proposition \ref{prop:number_of_FullCol}.  Note 
\begin{align*}
  1-\frac{1}{q-1} = 1 -  \sum_{i=1}^{\infty} q^{-i} \le    \prod_{i=1}^{\infty}(1-q^{-i}) \le 1
\end{align*} where $1 -  \sum_{i=1}^{\infty} q^{-i} \le    \prod_{i=1}^{\infty}(1-q^{-i})$ is by the fact that $\forall N\ge 1,\,    1 -  \sum_{i=1}^{N} q^{-i} \le    \prod_{i=1}^{N}(1-q^{-i}),$ and taking $N\to \infty$. In particular, this shows that $|\prod_{i=1}^{\infty}(1-q^{-i})-1|\le \frac{1}{q-1}.$ Then, for $q\ge 2$, we have
\begin{align*}
&\text{Pr($\text{Event}_A$)}\\
    &\overset{(a)}{\ge} \text{Pr($\text{Event}_C$ and $\text{Event}_{C'}$)}\\ 
    &\ge \text{Pr($\text{Event}_C$)} + \text{Pr($\text{Event}_{C'}$)} - 1\\
    &\overset{(b)}{\ge}  \Big ( \prod_{i=1}^{\infty}(1-q^{-i}) \Big )^k+\Big ( \prod_{i=1}^{\infty}(1-q^{-i}) \Big )^2-1\\
    &\overset{(c)}{\ge}   \left(1 - \frac{1}{q-1}\right)^k + \left(1-\frac{1}{q-1}\right)^2-1\\
    &\overset{(d)}{\ge}  2-  \frac{2}{q-1}-  \frac{k}{q-1} -1 = 1 - \frac{k+2}{q-1},
\end{align*}
where  $(a)$ is by \eqref{eq:event_c_c'_implies_a}, the second inequality is the lower bound on the probability of the intersection of two events, $(b)$ follows from the lower bounds on $\text{Pr($\text{Event}_C$ happens)}$ and $\text{Pr($\text{Event}_{C'}$ happens)}$, $(c)$ is by assumption that $|\prod_{i=1}^{\infty}(1-q^{-i}) - 1|\le \frac{1}{q-1}$, which implies that $\prod_{i=1}^{\infty}(1-q^{-i})\ge 1-\frac{1}{q-1}$, and $(d)$ is by $(1-x)^t \ge 1-tx$ for $x>0$.    

\end{proof}

\bibliographystyle{IEEEtran}

\end{document}